\newcommand\arXiv[1]{\href{http:/arxiv.org/abs/#1}{\nolinkurl{arXiv:#1}}}
\newcommand\MRnumber[1]{\href{http:/www.ams.org/mathscinet-getitem?mr=#1}{\nolinkurl{MR#1}}}
\newcommand\DOI[1]{\href{http:/dx.doi.org/#1}{\nolinkurl{DOI:#1}}}
\newcommand\MAILTO[1]{\href{mailto:#1}{\nolinkurl{#1}}}
\newtheorem{theorem}[subsection]{Theorem}
\newtheorem{proposition}[subsection]{Proposition}
\newtheorem{definition}[subsection]{Definition}
\newtheorem{lemma}[subsection]{Lemma}
\newtheorem{corollary}[subsection]{Corollary}
\newtheorem{conjecture}[subsection]{Conjecture}
\renewcommand\qedhere{\hfill\ensuremath\Box}
\newcommand\diamondhere{\hfill\ensuremath\Diamond}
\theoremstyle{definition}
\newtheorem{remarknodiamond}[subsection]{Remark}
\newenvironment{remark}{\begin{remarknodiamond}}{\diamondhere\end{remarknodiamond}}
\newtheorem{examplenodiamond}[subsection]{Example}
\newenvironment{example}{\begin{examplenodiamond}}{\diamondhere\end{examplenodiamond}}
\crefname{section}{Section}{Sections}
\crefname{appendix}{Appendix}{Appendices}
\crefname{definition}{Definition}{Definitions}
\crefname{assumption}{Assumption}{Assumptions}
\crefname{lemma}{Lemma}{Lemmas}
\crefname{proposition}{Proposition}{Propositions}
\crefname{corollary}{Corollary}{Corollaries}
\crefname{theorem}{Theorem}{Theorems}
\crefname{expectation}{Expectation}{Expectations}
\crefname{conjecture}{Conjecture}{Conjectures}
\crefname{remark}{Remark}{Remarks}
\crefname{remarknodiamond}{Remark}{Remarks}
\crefname{example}{Example}{Examples}
\crefname{examplenodiamond}{Example}{Examples}
\renewcommand\mathbb\mathds
\newcommand\bC{\mathbb C}
\newcommand\bF{\mathbb F}
\newcommand\bH{\mathbb H}
\newcommand\bJ{\mathbb J}
\newcommand\bK{\mathbb K}
\newcommand\bL{\mathbb L}
\newcommand\bN{\mathbb N}
\newcommand\bR{\mathbb R}
\newcommand\bS{\mathbb S}
\newcommand\bZ{\mathbb Z}
\newcommand\cA{\mathcal A}
\newcommand\cB{\mathcal B}
\newcommand\cC{\mathcal C}
\newcommand\cD{\mathcal D}
\newcommand\cF{\mathcal F}
\newcommand\cG{\mathcal G}
\newcommand\cO{\mathcal O}
\newcommand\cT{\mathcal T}
\newcommand\cV{\mathcal V}
\newcommand\rB{\mathrm B}
\newcommand\rH{\mathrm H}
\newcommand\rO{\mathrm O}
\newcommand\rT{\mathrm T}
\newcommand\sV{\mathscr V}
\newcommand\sX{\mathscr X}
\newcommand\sY{\mathscr Y}
\renewcommand\H{\homology}
\DeclareFontFamily{U}{min}{}
\DeclareFontShape{U}{min}{m}{n}{<-> udmj30}{}
\newcommand\unit{\mathbb 1}
\newcommand\mono\hookrightarrow
\newcommand\epi\twoheadrightarrow
\newcommand\isom{\overset\sim\to}
\newcommand\<\langle
\renewcommand\>\rangle
\newcommand\sminus\smallsetminus
\newcommand\id{\mathrm{id}}
\DeclareMathOperator\maps{maps}
\newcommand\pt{\{\mathrm{pt}\}}
\newcommand\op{{\mathrm{op}}}
\DeclareMathOperator{\Aut}{Aut}
\DeclareMathOperator\End{End}
\DeclareMathOperator\tr{tr}
\DeclareMathOperator\Forget{Forget}
\DeclareMathOperator\Or{Or}
\DeclareMathOperator\Her{Her}
\DeclareMathOperator\Spins{Spins}
\DeclareMathOperator\SpinStats{HerSpinStats}
\newcommand\Bord{\cat{Bord}}
\newcommand\Spans{\cat{Spans}}
\newcommand\Mod{\cat{Mod}}
\newcommand\Spaces{\cat{Spaces}}
\newcommand\Vect{\cat{Vect}}
\newcommand\SuperVect{\cat{SuperVect}}
\newcommand\MOD{\mathcal M\textsc{od}}
\newcommand\Man{\cat{Man}}
\newcommand\QCoh{\cat{Qcoh}} \newcommand\Qcoh\QCoh
\newcommand\Pres{\cat{Pres}}
\newcommand\Alg{\cat{Alg}}
\newcommand\Sets{\cat{Sets}}
\newcommand\Sch{\cat{Sch}}
\newcommand\CatAffSch{\cat{CatAffSch}}
\DeclareMathOperator\homology{H}
\DeclareMathOperator\Sym{Sym}
\DeclareMathOperator\Spec{Spec}
\newcommand\SO{\mathrm{SO}}
\newcommand\Spin{\mathrm{Spin}}
\newcommand\Pin{\mathrm{Pin}}
\newcommand\Cliff{\bC\mathrm{liff}}
\newcommand\pairing{\tikz[baseline=(b)] \draw[thick]  (0,1pt) coordinate (b) (0,0) arc (-90:90:4pt);}
\newcommand\copairing{\tikz[baseline=(b)] \draw[thick]  (0,1pt) coordinate (b) (0,0) arc (270:90:4pt);}
\newcommand\twist{\tikz[baseline=(b)] \draw[thick] (0,-1pt) coordinate (b) (-2pt,0) -- (0pt,0) .. controls +(4pt,0) and +(3pt,0) .. (4pt,5pt) .. controls +(-3pt,0) and +(-4pt,0) .. (8pt,0) -- (10pt,0);}
\newcommand\define[1]{\emph{#1}}
\newcommand\cat[1]{\textsc{#1}}
\title{Spin, statistics, orientations, unitarity}
\author{Theo Johnson-Freyd}
\address{Mathematics Department, Northwestern University \\ 2033 Sheridan Road, Evanston, IL, 60208}
\date{\today}
\email{\MAILTO{theojf@math.northwestern.edu}}
\keywords{TQFT, spin, super, categorification, torsors, Galois theory}
\subjclass[2010]{57R56 (Manifolds and cell complexes: Topological quantum field theories); 81T50 (Quantum theory: Anomalies); 14A22 (Algebraic geometry: Generalizations)}
\begin{document}
\begin{abstract}
A topological quantum field theory is \emph{Hermitian} if it is both oriented and complex-valued, and orientation-reversal agrees with complex-conjugation.  A field theory \emph{satisfies spin-statistics} if it is both spin and super, and $360^\circ$-rotation of the spin structure agrees with the operation of flipping the signs of all fermions.  We set up a framework in which these two notions are precisely analogous.  
In this framework, field theories are defined over $\textsc{Vect}_{\mathbb R}$, but rather than being defined in terms of a single tangential structure, they are defined in terms of a bundle of tangential structures over $\mathrm{Spec}(\mathbb R)$.  Bundles of tangential structures may be \'etale-locally equivalent without being equivalent, and Hermitian field theories are nothing but the field theories controlled by the unique nontrivial bundle of tangential structures that is \'etale-locally equivalent to Orientations.  This bundle owes its existence to the fact that $\pi_1^{\text{\'et}}(\mathrm{Spec}(\mathbb R)) = \pi_1\mathrm{BO}(\infty)$.  We interpret Deligne's ``existence of super fiber functors'' theorem as implying that in a categorification of algebraic geometry in which symmetric monoidal categories replace commutative rings,  $\pi_2^{\text{\'et}}(\mathrm{Spec}(\mathbb R)) = \pi_2\mathrm{BO}(\infty)$.  One finds that there are eight bundles of tangential structures \'etale-locally equivalent to Spins, one of which is distinguished; upon unpacking the meaning of a field theory with that distinguished tangential structure, one arrives at a field theory that is both Hermitian and satisfies spin-statistics.  Finally, we formulate in our framework a notion of ``reflection-positivity'' and prove that if an ``\'etale-locally-oriented'' field theory is reflection-positive then it is necessarily Hermitian, and if an ``\'etale-locally-spin'' field theory is reflection-positive then it necessarily both  satisfies spin-statistics and is Hermitian.  The latter result is a topological version of the famous Spin-Statistics Theorem.
\end{abstract}

\maketitle

\tableofcontents

\setcounter{section}{-1}

\section{Introduction}

The main result of this article is a topological version of the Spin-Statistics Theorem.  The usual Spin-Statistics Theorem (c.f.\ \cite{MR0161603}) asserts that in a unitary quantum field theory on Minkowskian spacetime, the fields of the theory live in a supervector space, the even (or bosonic) fields are integer spin representations of the Lorentz group, and the odd (or fermionic) fields are half-integer spin representations.  In other words, the spin of a ``particle'' agrees with its
parity.  
  Here ``unitarity'' is actually two conditions: a ``Hermiticity'' condition (asserting that the determinant-$(-1)$ component of the Lorentz group acts complex-antilinearly) and a ``reflection-positivity'' condition related to the requirement that the Hamiltonian of the quantum field theory have positive spectrum.

To formulate a version in the functorial setting of topological quantum field theory, we need:
\begin{itemize}
    \item to have orientations and spin structures on our ``source'' bordism category,
    \item to have complex supervector spaces in our ``target'' category, but to be able to talk about complex-antilinear maps as well as ``anti-super'' maps (i.e.\ maps that treat even and odd parts differently),
    \item to be able to link these structures on source and target categories.
\end{itemize}
We will solve all three problems by introducing generalizations of oriented and spin $\bR$-linear field theories (we generally drop the words ``topological'' and ``quantum'') that we call ``\'etale-locally-oriented'' and ``\'etale-local-spin.'' 
 \'Etale-locally-oriented and \'etale-locally-spin field theories admit a natural notion of ``reflection-positivity'' (defined in terms of a certain ``integration'' map taking in an \'etale-locally-oriented or -spin field theory and producing an unoriented $\bR$-linear field theory).  With this technology in place, our main result is the following version of the Spin-Statistics Theorem:

\begin{theorem} \label{mainthm}
  Every once-extended \'etale-locally-spin reflection-positive topological quantum field theory is Hermitian (hence unitary) and satisfies spin-statistics.
\end{theorem}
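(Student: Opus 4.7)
The plan is to use the classification already set up in the paper: every étale-locally-spin tangential structure is étale-locally equivalent to $\Spin$, and the author has identified exactly eight such bundles. Every once-extended étale-locally-spin field theory $Z$ is therefore classified by which of the eight bundles it lives over, and the theorem amounts to showing that reflection-positivity forces $Z$ to live over the distinguished bundle, which by its very construction yields both Hermiticity and spin-statistics.

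First I would recast ``étale-locally-spin'' as Galois descent: such a $Z$ is the same as a $\mathbb{C}$-linear spin TQFT $Z_{\mathbb{C}}$ together with a coherent equivalence to its complex conjugate. The eight possible choices of descent data form a torsor parametrized, roughly, by how $\Gal(\mathbb{C}/\mathbb{R})$ is allowed to interact with the two $\mathbb{Z}/2$-symmetries in sight: the complex-scalar action on the target super-vector-spaces, and the $360^\circ$-rotation (equivalently, fermion-parity) symmetry on the source. The distinguished bundle is the ``diagonal'' choice in which conjugation acts complex-antilinearly \emph{and} exchanges the two kinds of fermion sign.

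The main obstacle, and the heart of the theorem, will be to show that reflection-positivity forces exactly this diagonal choice. The reflection-positivity hypothesis provides, for each object $x$ of the étale-locally-spin bordism $2$-category, a pairing on the state space $Z(x)$ that must be positive-definite after the integration map is applied. I would analyze this pairing on even and odd vectors separately: on an even vector, demanding positive-definiteness (rather than merely nondegenerate or indefinite) should force Galois conjugation to act $\mathbb{C}$-antilinearly on $Z(x)$, i.e.\ Hermiticity; on an odd vector, positivity of $\langle \psi, \psi \rangle$ should force $360^\circ$-rotation to act as fermion parity, since any other descent datum would, via the Koszul signs assembled in the integration map, replace $\langle \psi, \psi\rangle$ by $-\langle \psi,\psi\rangle$. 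Ruling out the genuinely subtle case in which two sign-flips accidentally cancel each other to restore positivity is the bookkeeping I expect to be most delicate; it should reduce to a single string-diagram computation in the once-extended spin bordism 2-category.

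Finally, once the descent datum of $Z$ is pinned down to the distinguished bundle, Hermiticity and spin-statistics follow tautologically from the paper's earlier unpacking of that bundle as encoding precisely these two conditions. The once-extended hypothesis enters in an essential way by making the $\pi_2^{\text{ét}}$-twist predicted by Deligne's theorem visible on state spaces: without passing to the $2$-categorical level, the eight bundles would collapse to four and the spin-statistics half of the conclusion would be invisible to the integration map.
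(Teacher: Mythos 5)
Your high-level plan agrees with the paper's: reduce to showing that reflection-positivity rules out the non-distinguished bundles, after which Hermiticity and spin-statistics are tautological consequences of living over $\SpinStats$. The paper handles Hermiticity exactly as you suggest, by reducing to \cref{reflection positive oriented} (applied to $\int_{\Spins/\Or}Z$). But there is a genuine gap in the spin-statistics half of your argument, and it is where the entire content of the proof lies.

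Your plan is to ``analyze the pairing on even and odd vectors separately'' and argue that positivity of $\langle\psi,\psi\rangle$ for odd $\psi$ forces $\twist = (-1)^f$ via Koszul signs in the integration map, then to rule out a subtle cancellation by ``a single string-diagram computation'' that you leave undone. This is heuristics, not a proof, and the heuristic does not straightforwardly apply: the problematic case to rule out is spin-but-not-super, where the target of $Z$ is $\Vect_\bC$ and the state spaces carry no super-grading at all, so there are no odd vectors to point at. The indefiniteness comes not from Koszul signs on odd vectors but from the block structure of the integrated state space $\bigl(\int_{\Spins/\Or}Z\bigr)(N)$, which pairs states for distinct spin structures off-diagonally.

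The paper makes this concrete by a device you don't mention, and which is where the once-extended hypothesis is actually used: dimensionally reduce $Z_\bC$ along a $(d-2)$-manifold $(P,\sigma)$ with $Z(P,\sigma)\neq 0$ to obtain a two-dimensional $\bC$-linear spin-but-not-super theory, invoke the Schommer-Pries classification to identify $Z_\bC(\pt\times(P,\sigma))$ Morita-trivially with $\bC^{\oplus n}$, observe that the trivialization $A^*\otimes_A A^*\isom A$ cannot permute summands, and then compute directly that each summand contributes $\bigl(\int_{\Spins/\Or}Y^{(i)}\bigr)(S^1)\cong\bC^2$ with a purely off-diagonal form. That two-by-two block is what obstructs positivity. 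Without this reduction and the explicit 2D computation, the ``delicate bookkeeping'' you flag remains exactly the missing proof, and your stated reason that once-extension matters (visibility of $\pi_2^{\text{\'et}}$) is not the operative one --- what is needed is access to $\pt\times(P,\sigma)$ as a codimension-two object so that the 2D classification applies.
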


By definition, a field theory is \define{unextended} if it is defined in codimensions $0$ and $1$, and \define{once-extended} if it is defined in codimensions $0$, $1$, and $2$.
\Cref{main thm extended}, which we prove only in outline, extends \cref{mainthm} to more-than-once-extended field theories.
A similar ``Spin-Statistics Theorem'' appears in \cite[Theorem 11.3]{FreedHopkins}, but there are notable differences between the approach used there and the one used in this paper.

As a warm-up to \cref{mainthm}, in \cref{section.hermitian} we develop in detail the notions of ``\'etale-local orientation'' and ``reflection-positivity'' in the context of unextended field theories.  The following analog of \cref{mainthm} follows almost immediately from the definitions:
\begin{theorem}\label{reflection positive oriented}
  Every unextended \'etale-locally-oriented reflection-positive field theory is Hermitian.
\end{theorem}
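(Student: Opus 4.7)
The plan is a direct unwinding of definitions. By the framework developed in \cref{section.hermitian}, an \'etale-locally-oriented field theory over $\bR$ is, by descent along the \'etale cover $\Spec(\bC)\to\Spec(\bR)$, a $\bC$-linear oriented topological field theory $Z_\bC$ together with a compatible descent datum for $\Gal(\bC/\bR) = \bZ/2$. Because $\Aut(\text{Orientations}) = \bZ/2$ and $\pi_1^{\text{\'et}}(\Spec(\bR)) = \bZ/2$, there are exactly two such data: the trivial one, in which the nontrivial Galois element acts on $Z_\bC$ purely by complex conjugation and corresponds to a genuinely $\bR$-linear oriented theory; and the twisted one, in which it acts by complex conjugation composed with orientation-reversal. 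The latter is, by the definition recalled in the abstract, the Hermitian bundle.

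I would then unpack reflection-positivity in terms of the ``integration'' map sending an \'etale-locally-oriented theory to an unoriented $\bR$-linear one. Integration is morally a Galois-equivariant average over orientations, so its value on a closed unoriented manifold $M$ can be expressed in terms of $Z_\bC$ on an orientation $M_{\mathrm{or}}$ and on its reversal $M_{\mathrm{or}}^{\op}$. For the trivial descent, these two evaluations are a priori independent complex numbers and the resulting assignment carries no positivity; for the twisted (Hermitian) descent, they are forced to be complex conjugates, so the assignment gives $|Z_\bC(M_{\mathrm{or}})|^2 \geq 0$ tautologically.

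Demanding that the integration be reflection-positive therefore singles out the twisted descent datum, and by the classification recalled above this is precisely the Hermitian structure. The main obstacle will not be in the short ``proof'' as such but in setting up the definitions in \cref{section.hermitian} so that the notions of \'etale-local orientation, reflection-positivity, and Hermiticity line up cleanly enough for the comparison to be immediate. Once this groundwork is laid, the theorem reduces, as the paper indicates, to an observation about matching descent data.
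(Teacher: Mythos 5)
Your structural understanding is right — there are exactly two \'etale-local orientations, classified by $\bZ/2$-torsors over $\Spec(\bR)$, and reflection-positivity must be defined via the integration map $\int_{\Or}$ that produces an unoriented $\bR$-linear theory. But the mechanism you propose for why reflection-positivity singles out the Hermitian structure is not correct, and the key step of the argument is missing.

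The gap: reflection-positivity is a condition on the symmetric pairing on $\bigl(\int_{\Or}Z\bigr)(N)$ for $N$ a closed $(d-1)$-manifold (\cref{defn.reflection positive}, \cref{defn.reflection positive for oriented}), not on the values of the partition function on closed $d$-manifolds $M$. Your discussion stays entirely at the level of the scalars $Z_\bC(M_{\mathrm{or}})$ and $Z_\bC(M_{\mathrm{or}}^{\op})$, so it never engages with where positivity actually lives. (Incidentally, even for a closed $d$-manifold $M$, the Hermitian descent gives $\int_{\Or}Z_\bC(M) = Z_\bC(M)+\overline{Z_\bC(M)} = 2\operatorname{Re}Z_\bC(M)$, not $|Z_\bC(M)|^2$; the sum over orientations is additive, not multiplicative.) What is actually needed is the observation, from the discussion preceding \cref{defn.reflection positive for oriented}, that if $Z$ is \emph{oriented} then $(\int_{\Or}Z)(N) = Z(N)\oplus Z(N)^*$ with the hyperbolic (dual-pairing) form, and that a hyperbolic form on $V\oplus V^*$ is never positive-definite when $V\neq 0$. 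That one-line linear-algebra fact is the whole proof: reflection-positivity rules out the oriented option, and Hermitian is the only one left. You should also be careful not to claim that reflection-positivity is ``tautological'' in the Hermitian case — a Hermitian structure only gives a nondegenerate sesquilinear form on $Z(N)$, and positive-definiteness is a genuine extra hypothesis, not a consequence of the descent datum.
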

 The parallel between \cref{mainthm,reflection positive oriented} is an indication of the second main theme of this paper, which is to argue that Hermiticity and spin-statistics phenomena arise from the same source.  Note also that we reverse part of the logic  from the standard spin-statistics theorem: as usually presented, Hermiticity is a required assumption in order to imply spin-statistics; in our version, Hermiticity and spin-statistics are both forced by reflection-positivity.

In order to define ``\'etale-locally-oriented'' manifolds, we consider local structures on manifolds that range over not (as in the case of orientations) sets, but schemes over~$\bR$.  
There are precisely two ``local structures'' that are \'etale-locally-over-$\Spec(\bR)$ isomorphic to ``orientations.'' 
The two versions of ``\'etale-local-orientations'' are usual-orientations and ``Hermitian structures''; the latter are characterized by the property that the scheme of Hermitian structures on a point is $\Spec(\bC)$ and that the restriction map $\{$Hermitian structures on $[0,1]\} \to \{$Hermitian structures on $\{0,1\}\}$ is the ``antidiagonal'' map $\Spec(\bC) \to \Spec(\bC) \times_{\Spec(\bR)}\Spec(\bC)$ sending $\lambda \mapsto (\lambda,\bar\lambda)$.  Hermitian structures owe their existence to the fact that the absolute Galois group of $\bR$ happens to be the same as the group $\pi_0\rO(\infty)$ of connected components of the orthogonal group.

Each \'etale-local-orientation leads to a version of ``\'etale-locally-oriented field theory'': in addition to the usual (unextended) oriented bordism category $\Bord_{d-1,d}^{\Or}$, there is a ``Hermitian bordism category'' $\Bord_{d-1,d}^{\Her}$ which is not a category but rather a stack of categories over $\Spec(\bR)$; the two types of field theories are symmetric monoidal functors of stacks of categories $\Bord_{d-1,d}^{\Or} \to \Vect_\bR$ and $\Bord_{d-1,d}^{\Her} \to \Vect_\bR$, where $\Vect_\bR$ is enhanced to the stack of categories $\Qcoh$.  As such, our notion of ``\'etale-locally-oriented field theory'' involves infusing both the source and target categories with $\bR$-algebraic geometry.
 The two versions unpack  to $\bR$-linear oriented field theories and to Hermitian field theories in the usual sense. 
 
Our definition of ``\'etale-locally-spin'' structures requires a categorification of (some basic notions from) real algebraic geometry.  We begin this program in \cref{section.algebraic closure}.  Our main contribution here is to categorify the notion of ``field'' and to interpret the main theorem of \cite{MR1944506} as asserting that the ``categorified algebraic closure'' of $\bR$ is not $\bC$ but rather the category $\SuperVect_\bC$ of complex supervector spaces.  (As we will use a slight modification of the main result of \cite{MR1944506}, we include a complete proof.)

\begin{remark}
  As is already apparent, we will be working both with ``fields'' in the sense of commutative algebra and ``field theories'' in the sense of physics, and English includes an unfortunate terminological conflict.  
    We don't have a good solution to this problem, but will stick to the following convention: ``field'' used as a noun means ``field in the sense of algebra''; ``field theory'' means ``(classical or quantum) functorial topological field theory in the sense of physics.''
\end{remark}

 We also prove that the extension $\Vect_\bR \mono \SuperVect_\bC$ is Galois, and use this fact to categorify the notion of ``\'etale-local.''
   There are precisely eight types of ``\'etale-locally-spin'' structures, of which one is distinguished by the following coincidence: the ``categorified absolute Galois group of $\bR$'' is canonically equivalent to the Picard groupoid $\pi_{\leq 1}\rO(\infty)$.  This distinguished version incorporates both Hermiticity and spin-statistics phenomena.  In summary, we find that the second row of the following table is a categorification of the first:\vspace{6pt}

\begin{center}\mbox{}\hspace{-2in}
 \begin{tabular}{c|c|c|c}
 Algebraic closure & Tangential structure & Galois group & Physical phenomenon
 \\ \hline \hline
    $\bR \hookrightarrow \bC$ & $\SO(d) \hookrightarrow \rO(d)$ & $\mathrm{Gal}(\bC/\bR) = \pi_0\rO(\infty)$& Hermiticity \\ \hline
    $\Vect_\bR \hookrightarrow \SuperVect_\bC$ & $\Spin(d) \to \rO(d) $ &$\mathrm{Gal}(\SuperVect_\bC/\bR) = \pi_{\leq 1}\rO(\infty)$& spin-statistics \\
  \end{tabular}\hspace{-2in}\vspace{6pt}\mbox{}
\end{center}

 Our categorification result suggests to the following conjecture:
\begin{conjecture}
  There is an infinitely-categorified version of commutative algebra, and in it the ``infinitely categorified absolute Galois group'' of $\bR$ is $\rO(\infty)$.
\end{conjecture}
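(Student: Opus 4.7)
Since this is a conjecture that spans foundational ground not yet laid, my plan is to produce, not a proof, but an inductive framework in which the conjecture becomes a sequence of theorems to be proved one categorical level at a time. At level $0$ one has ordinary commutative algebra over $\bR$, and the classical computation $\pi_1^{\text{\'et}}(\Spec\bR) = \Gal(\bC/\bR) = \bZ/2 = \pi_1\rB\rO(\infty)$ already agrees with the conjecture. At level $1$, the paper itself asserts $\pi_2^{\text{\'et}}(\Spec\bR) = \bZ/2 = \pi_2\rB\rO(\infty)$ via Deligne's theorem, with the algebraically closed extension promoted from $\bC$ to $\SuperVect_\bC$. The strategy is therefore to define, for each $n\geq 0$, a notion of ``$n$-categorified commutative algebra'' in which $\Spec(\bR)$ makes sense, an \'etale site on it, and a Galois theory, and to prove inductively that the resulting $(n{+}1)$-groupoid of \'etale-local trivializations of the ``absolute algebraic closure'' is $\pi_{\leq n}\rO(\infty)$.

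Concretely, I would model $n$-categorified commutative algebra as the theory of $\bE_\infty$-algebra objects in symmetric monoidal $(\infty,n)$-categories, with ``$\Spec$'' the functor to the opposite category. The \'etale covers should be the ``$n$-Galois extensions'': symmetric monoidal $(\infty,n)$-functors admitting, locally in the covering topology, fiber functors to the ``$n$-algebraic closure'' $\sR_n$, which I conjecture to be built iteratively by adjoining higher Morita-invertible generators---at $n=0$ one adjoins $\sqrt{-1}$ to get $\bC$, at $n=1$ one adjoins an odd line to get $\SuperVect_\bC$, and at higher $n$ one should adjoin the appropriate $\bZ/2$-graded generator whose square root governs the next Clifford algebra $\Cliff_n$. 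The first key step is then to show that $\sR_n$ is ``algebraically closed'' in the $n$-categorified sense, meaning every finite \'etale $n$-algebra over it splits, and to identify its automorphism $(n{+}1)$-groupoid over $\sR_{n-1}$.

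The central theorem to prove at each level is a categorified Deligne-type Tannakian classification: every symmetric monoidal $(\infty,n)$-category over $\bR$ satisfying appropriate finiteness and rigidity hypotheses, and admitting a fiber functor to $\sR_n$, arises as representations of some $(n{+}1)$-groupoid acting on $\sR_n$. Given such a classification, the Galois correspondence will identify $\Aut(\sR_n/\bR)$ (as an $(n{+}1)$-groupoid) and hence compute $\pi_{n+1}^{\text{\'et}}\Spec(\bR)$. The crucial computation is to match this group at each stage with $\pi_n\rO(\infty)$: the coincidence $\Aut(\bC/\bR) = \pi_0\rO(\infty)$ comes from complex conjugation$=$reflection, and $\Aut(\SuperVect_\bC/\bC) = \pi_1\rO(\infty) = \bZ/2$ comes from the parity involution$=360^\circ$ rotation. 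The pattern strongly suggests the next generator corresponds to $\pi_2\rO = 0$ (no new generator) and then to $\pi_3\rO = \bZ$ (the generator of $\rK\rO^{-4}$, related to $\bH$), matching Clifford periodicity.

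The main obstacle---genuinely, not just a bookkeeping difficulty---is that proving a Deligne-type theorem one categorical level up is precisely the kind of hard Tannakian result that remains open in higher categories, and that the appearance of $\rO(\infty)$ rather than some other spectrum must be forced by a deep structural reason rather than hoped for level by level. The cleanest version of the conjecture would upgrade the whole story: show that the \'etale homotopy type of $\Spec(\bR)$, viewed as an object of spectral/derived higher commutative algebra, is $\rB\rO(\infty)$ as a connective spectrum, with Bott periodicity of $\rO(\infty)$ appearing as the $8$-periodicity of Morita classes of Clifford algebras $\Cliff_n$. A proof along these lines would likely need to identify the categorified \'etale site of $\Spec(\bR)$ with the site built from real Morita theory, and to leverage the fact---already in the literature (Freed--Hopkins, Stolz--Teichner)---that invertible topological phases over $\bR$ are governed by $\rK\rO$.
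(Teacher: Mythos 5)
The statement you are addressing is a \emph{conjecture}: the paper provides no proof, and indeed immediately follows it with a remark acknowledging that, at the level of evidence available, one cannot distinguish $\rO(\infty)$ from other spectra (such as the sphere spectrum) with the same low homotopy groups. There is therefore no argument in the paper to compare your proposal against; I can only assess the roadmap on its own terms.

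Your inductive framework is sensible and matches the structure the paper actually establishes at levels $0$ and $1$: classical Galois theory gives $\pi_0\rO(\infty)$, the categorified Deligne/Galois story of Sections 2--3 gives $\pi_{\leq 1}\rO(\infty)$, and the natural guess is that the $n$-categorified Galois group should be $\pi_{\leq n}\rO(\infty)$ with limit $\rO(\infty)$. Two places where I think your roadmap glosses over genuine obstacles are worth naming. First, your heuristic that at each level $n$ one ``adjoins a generator governing $\Cliff_n$'' does not line up with the actual homotopy of $\rO(\infty)$: the Morita classes of real Clifford algebras form $\bZ/8$ as a whole, but $\pi_2\rO(\infty)=0$ while $\Cliff(2)$ is not Morita trivial over $\bR$, and $\pi_3\rO(\infty)=\bZ$ is infinite while no single Clifford-style generator has infinite order. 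So the mechanism producing the correct $\pi_n$ cannot be a naive ``adjoin one $\bZ/2$-graded invertible object per level.'' Second, and related, the paper's torsor formalism (\cref{defn.torsor}, \cref{categorified torsor statement}) is set up for \emph{finite} Picard groupoids and a finite-dimensionality hypothesis whose correct form (\cref{defn.finite dimensional}) was already subtle at level $1$; once $\pi_3\rO(\infty)=\bZ$ enters, the absolute Galois object is no longer a finite higher group, and the entire framework of ``finite-dimensional categorified fields'' would need to be replaced by something closer to a pro-object or a condensed/continuous formalism. Your proposal acknowledges the central difficulty --- that a higher Deligne-type Tannakian theorem is wide open and that $\rO(\infty)$ versus $\bS$ must be forced rather than hoped for --- and this is the honest thing to say; but the Clifford-periodicity heuristic and the finiteness mismatch are specific ways in which a straightforward ``one level at a time'' induction, as literally described, would stall.
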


\begin{remark}
The papers \cite{MR3177367,Kapranov2015} suggest that rather than $\rO(\infty)$, it is the sphere spectrum that controls supermathematics.
Very low homotopy groups cannot distinguish between various important spectra.
 The connection with topological quantum field theory focused on in this paper provides a reason to prefer $\rO(\infty)$.
\end{remark}

We prove \cref{mainthm} in \cref{section.spin statistics}, which also contains examples of various types of \'etale-locally-spin field theories.  We end the paper in \cref{section.general nonsense} by outlining how to extend our ``\'etale-locally-structured'' cobordism categories to the ``fully-extended'' $\infty$-categorical world of \cite{Lur09}.

\section{Oriented, Hermitian, and unitary field theories}  \label{section.hermitian}

This section serves as an extended warm-up to the remainder of the paper.  We will develop in a 1-categorical setting the notions of ``\'etale-locally-oriented'' and ``reflection-positive'' and  prove \cref{reflection positive oriented}, which asserts that \'etale-locally-oriented reflection-positive topological quantum field theories are necessarily Hermitian.

The functorial framework for quantum field theory, as formulated in \cite{MR1001453,MR2079383}, is well-known.  Fix a dimension $d$ and construct a symmetric monoidal category $\Bord_{d-1,d}$ whose objects are $(d-1)$-dimensional closed smooth manifolds, morphisms are $d$-dimensional smooth cobordisms up to isomorphism, and the symmetric monoidal structure is disjoint union.  An (unextended) \define{unoriented} or \define{unstructured $\bR$-linear $d$-dimensional functorial topological quantum field theory} is a symmetric monoidal functor $\Bord_{d-1,d} \to \Vect_\bR$.  We will henceforth drop the words ``functorial topological quantum.''

In general, one does not care simply about unstructured field theories.  Let $\Man_d$ denote the site of $d$-dimensional (possibly open) manifolds and local diffeomorphisms, with covers  the surjections. 
If $\sX$ is a category with limits, an \define{$\sX$-valued local structure} is a sheaf $\cG : \Man_d \to \sX$.
A local structure is \define{topological} if it takes isotopic (among local diffeomorphisms) maps of manifolds to equal morphisms in $\sX$. 

The reason for considering local structures valued in general categories is because, in examples, the collection of $\cG$-structures on a manifold $M$ is not just a set but carries more algebraic or analytic structure.  For example, the paper \cite{MR2742432} requires local structures valued in supermanifolds.  We will focus on the case when $\cG$ is valued in the category $\Sch_\bR$ of schemes over $\bR$.  (In fact, all of our examples will take values in the subcategory $\cat{AfSch}_\bR$ of affine schemes.)

The following is an easy exercise:
\begin{lemma}\label{set theory cob hyp}
  Suppose $d\geq 1$.  There are precisely two isotopy classes of local diffeomorphisms $\bR^d \to \bR^d$ (the identity and orientation reversal), and so if $\cG$ is an $\sX$-valued topological local structure, then $\cG(\bR^d)$ has an action by $\bZ/2$.  The assignment $\cG \mapsto \cG(\bR^d)$ gives an equivalence of categories between the category of $\sX$-valued topological local structures and the category $\sX^{\bZ/2}$ of objects in $\sX$ equipped with a $\bZ/2$-action. \qedhere
\end{lemma}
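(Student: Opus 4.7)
The plan is to prove the three assertions in turn. For the first, on isotopy classes of local self-diffeomorphisms of $\bR^d$, I would use the sign of the Jacobian determinant as the key invariant: a local diffeomorphism $f \colon \bR^d \to \bR^d$ has nowhere-singular Jacobian, and since $\bR^d$ is connected, $\det(Jf)$ has constant sign, giving at least two isotopy classes. To see these are the only ones, I would first translate so that $f(0) = 0$ and then apply the zoom-in deformation $f_t(x) := t^{-1} f(tx)$ for $t \in (0,1]$, extended by $f_0(x) := (Jf)(0) \cdot x$, which is a smooth isotopy through local diffeomorphisms from $f$ to the linear map $(Jf)(0) \in GL_d(\bR)$; the path-connectedness of $GL_d^+(\bR)$ (resp.\ $GL_d^-(\bR)$) then connects this further to the identity (resp.\ a fixed reflection). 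The second assertion is then immediate: the isotopy-class group $\pi_0 \mathrm{LocDiff}(\bR^d) = \bZ/2$ acts on $\cG(\bR^d)$ by functoriality of the topological sheaf $\cG$.

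For the equivalence of categories, I would construct an explicit inverse functor $X \mapsto \cG_X$ to the restriction $\cG \mapsto \cG(\bR^d)$. Every $d$-manifold $M$ admits a good cover $\{U_i\}$ with each $U_i$ and every iterated intersection a disjoint union of opens diffeomorphic to $\bR^d$, and with all transition maps local diffeomorphisms between such opens. The sheaf and topologicity conditions force $\cG_X(M)$ to be the equalizer (in $\sX$) of the \v Cech descent diagram built from copies of $X$, where each transition map acts either as the identity or as the generator of the $\bZ/2$-action according to whether it is orientation-preserving on the relevant component. I would take this limit (which exists by the hypothesis that $\sX$ has limits) as the definition of $\cG_X(M)$, and then verify independence of the chosen good cover via a refinement argument, the sheaf property on all of $\Man_d$, topologicity, and that the two round-trips $\cG \mapsto \cG_{\cG(\bR^d)}$ and $X \mapsto \cG_X(\bR^d)$ are naturally isomorphic to the identity. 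The cleanest way to package the descent data is to observe that the only topologically relevant gluing information on $M$ is its orientation $\bZ/2$-torsor, so $\cG_X(M)$ becomes the ``$X$-object associated to this torsor'' via the given $\bZ/2$-action.

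The main obstacle, to my mind, is checking that $X \mapsto \cG_X$ is functorial along arbitrary local diffeomorphisms of manifolds (not merely the transition maps of one fixed good cover), which reduces to a cover-refinement argument compatible with the isotopy relation. Once that is established, sheafiness, topologicity, and invertibility of the two constructions follow routinely from the universal property of limits and the definitions.
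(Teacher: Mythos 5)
Your proof is correct. The paper itself declares this lemma ``an easy exercise'' and supplies no proof, so there is no argument to compare against; but your two ingredients match what the paper uses implicitly: the Alexander-trick linearization $f_t(x) = t^{-1}f(tx)$ together with connectedness of $GL_d^\pm(\bR)$ settles the isotopy classification, and the \v{C}ech-descent construction you land on --- once repackaged via the orientation $\bZ/2$-torsor --- is precisely the formula $\cG_X(M) = \maps_{\bZ/2}(\Or_M, X)$ that the paper records in \cref{orientations} as the inverse functor.
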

\begin{example}\label{orientations}
  The topological local structure $\cG_X$ corresponding to a $\bZ/2$-set $X \in \Sets^{\bZ/2}$ can be constructed as follows.  For any manifold $M$, let $\Or_M \to M$ denote the orientation double cover; then $\cG_X(M) = \maps_{\bZ/2}(\Or_M,X)$, where $\maps_{\bZ/2}$ denotes continuous $\bZ/2$-equivariant functions.  If $\sX$ has limits, then for $X \in \sX^{\bZ/2}$ the formula ``$\maps_{\bZ/2}(\Or_M,X)$'' continues to make sense, and again defines the topological local structure corresponding to $X$.
  
  The most important example is when $X = \bZ/2$ is the \define{trivial $\bZ/2$-torsor} given by the translation action of $\bZ/2$ on itself.  Then $\cG_{\bZ/2} = \Or$ is the sheaf $\Or(M) = \{\text{orientations of $M$}\}$.
\end{example}

 Given a $\Sets$-valued topological local structure $\cG$, there is a \define{$\cG$-structured bordism category} $\Bord^\cG_{d-1,d}$, an object of which consists of a closed $(d-1)$-manifold $N$ together with an element of $\cG(N\times \bR)$, and whose morphisms are $d$-dimensional cobordisms similarly equipped with $\cG$-structure.  
  If $\cG$ is a $\Sets$-valued topological local structure, a \define{$\cG$-structured $\bR$-linear $d$-dimensional field theory} is a symmetric monoidal functor $\Bord^\cG_{d-1,d} \to \Vect_\bR$.
It will be useful to unpack the construction of $\Bord_{d-1,d}^\cG$ in order to have a more explicit description of $\cG$-structured field theories.  The following logic is used in \cite[Section 3.2]{Lur09} to reduce the ``$\cG$-structured Cobordism Hypothesis'' to the unstructured case; see also \cite[Section 3.5]{Schommer-Pries:thesis}.

Let $\Spans(\Sets)$ denote the symmetric monoidal category whose objects are sets and morphisms are isomorphism classes of \define{correspondences}, i.e.\ diagrams of shape $X \leftarrow A \to Y$; composition is by fibered product and the symmetric monoidal structure is by cartesian product.  A \define{$\cG$-structured classical field theory} is a symmetric monoidal functor $\Bord_{d-1,d}^\cG \to \Spans(\Sets)$.  Every $\Sets$-valued topological local structure $\cG$ defines an unstructured classical field theory $\widetilde\cG : \Bord_{d-1,d} \to \Spans(\Sets)$:
$$\begin{tikzpicture}[baseline=(middle)]
  \coordinate (middle) at (0,0);
  \draw[very thick,dotted] (-.5,.7) arc (90:270:.35 and .7);
  \draw[very thick,dotted] (-.5,.7) arc (90:-90:.35 and .7);
  \draw[thin] (-.25,.6) arc (90:270:.3 and .6);
  \draw[thin] (-.25,.6) arc (90:-90:.3 and .6);
  \draw[thick,dotted] (0,.5) arc (90:270:.25 and .5);
  \draw[very thick,dotted] (0,.5) arc (90:-90:.25 and .5);
  \draw[very thick,dotted] (4,.5) arc (90:270:.25 and .5);
  \draw[thick,dotted] (4,.5) arc (90:-90:.25 and .5);
  \draw[thin] (4.25,-.525) arc (-90:270:.275 and .525);
  \draw[very thick,dotted] (4.5,-.6) arc (-90:270:.3 and .6);
  \draw[thick] (-.5,.7) -- (0,.5) .. controls +(1,-.4) and +(-1,0) .. (2,1) .. controls +(1,0) and +(-1,0) .. (4,.5) .. controls +(.1,0) and +(-.2,-.1) .. (4.5,.6);
  \draw[thick] (-.5,-.7) -- (0,-.5) .. controls +(1,.4) and +(-1,0) .. (2,-1) .. controls +(1,0) and +(-1,0) .. (4,-.5) .. controls +(.1,0) and +(-.2,.1) .. (4.5,-.6);
  \draw[thick] (1.5,0) .. controls +(.3,.2) and +(-.3,.2) .. (2.5,0);
  \draw[thick] (1.35,.1) -- (1.5,0) .. controls +(.3,-.2) and +(-.3,-.2) .. (2.5,0) -- (2.65,.1);
  \draw[thick,decoration={brace,amplitude=3},decorate] (.1,-.9) -- coordinate (N1) (-.6,-.9) (N1) +(0,-.1) node[anchor=north]  {$\scriptstyle N_1$};
  \draw[thick,decoration={brace,amplitude=3},decorate] (4.6,-.9) -- coordinate (N2) (3.9,-.9) (N2) +(0,-.1) node[anchor=north]  {$\scriptstyle N_2$};
  \node at (2,-.5) {$\scriptstyle M$};
\end{tikzpicture}
\quad\overset{\textstyle \widetilde\cG}\longmapsto
\begin{tikzpicture}[baseline=(middle),scale=2]
  \path (0,0) node (M) {$ \{\cG\text{-structures on }M\}$}
  (-1.25,-1) node (N1) {$ \{\cG\text{-structures on }N_1\}$}
  (1.25,-1) node (N2) {$ \{\cG\text{-structures on }N_2\}$}
  (0,-.5) coordinate (middle);
  \draw[->] (M) -- node[anchor=south east] {restrict} (N1);
  \draw[->] (M) -- node[anchor=south west] {restrict} (N2);
\end{tikzpicture}
$$
Functoriality for $\widetilde\cG : \Bord_{d-1,d} \to \Spans(\Sets)$ follows from the sheaf axiom for~$\cG$.

Definition-unpacking implies:
\begin{lemma}\label{description of G-structures}
  Let $\Spans(\Sets;\Vect_\bR)$ denote the symmetric monoidal category whose objects are pairs $(X, V)$ where $X \in \Sets$and $V$ is a vector bundle over $X$, and for which a morphism from $(X,V)$ to $(Y, W)$ is an isomorphism class of diagrams $X \overset f \leftarrow A \overset g \to Y$ together with a vector bundle map $f^*V \to g^*W$.  Then a $\cG$-structured field theory is the same data as a choice of lift:
  
  \mbox{}\hfill $ \begin{tikzpicture}[baseline=(LR.base)]
   \path coordinate (UL) +(3,0) node (UR) {$\Spans(\Sets;\Vect_\bR)$} +(0,-1.5) node (LL) {$\Bord_{d-1,d}$} +(3,-1.5) node (LR) {$\Spans(\Sets)$}
    ;
   \draw[->] (LL) -- node[auto] {$\scriptstyle \widetilde\cG$} (LR); \draw[->] (UR) -- node[auto] {\scriptsize {Forget} the $\Vect_\bR$-data} (LR);
   \draw[->,dashed] (LL) -- (UR);
  \end{tikzpicture} $ \qedhere
\end{lemma}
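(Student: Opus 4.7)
The plan is to observe that both sides of the claimed equivalence repackage the same underlying data: an assignment of a vector space to each $\cG$-structured $(d-1)$-manifold and a linear map to each $\cG$-structured cobordism, compatible with gluing. I would construct the equivalence explicitly in both directions and then observe that the two constructions are mutually inverse, using the basic fact that a vector bundle over a set (i.e.\ a family of vector spaces indexed by that set) is determined by its fibers, and a morphism of such bundles is determined by its fiberwise components.

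In the forward direction, given a $\cG$-structured field theory $Z : \Bord_{d-1,d}^\cG \to \Vect_\bR$, I send a closed $(d-1)$-manifold $N$ to the pair $(\cG(N), \hat Z(N))$, where $\hat Z(N)$ is the vector bundle over the set $\cG(N)$ whose fiber at $\sigma$ is $Z(N,\sigma)$. For a cobordism $M : N_1 \to N_2$ with $\cG$-restriction maps $f : \cG(M) \to \cG(N_1)$ and $g : \cG(M) \to \cG(N_2)$, the collection of linear maps $\{Z(M,\tau) : \tau \in \cG(M)\}$ assembles fiberwise into a vector bundle morphism $f^*\hat Z(N_1) \to g^*\hat Z(N_2)$, precisely the extra datum required to enhance the span $\widetilde\cG(M)$ to a morphism in $\Spans(\Sets;\Vect_\bR)$. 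The inverse construction goes in the other direction: given a lift, one recovers $Z(N,\sigma)$ as the fiber at $\sigma$ of the associated vector bundle, and $Z(M,\tau)$ as the corresponding fiberwise component of the associated bundle map.

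The only nontrivial content is checking functoriality and symmetric monoidality. If $M_1 : N_1 \to N_2$ and $M_2 : N_2 \to N_3$ glue to $M = M_1 \cup_{N_2} M_2$, the sheaf condition for $\cG$ supplies the pullback identification $\cG(M) \cong \cG(M_1) \times_{\cG(N_2)} \cG(M_2)$, which is exactly the fibered product used to compose correspondences in $\Spans(\Sets;\Vect_\bR)$; on top of this, the vector bundle map associated to $M$ agrees fiberwise with the composite $Z(M_2,\tau_2) \circ Z(M_1,\tau_1)$ by the functoriality of $Z$ on $\Bord_{d-1,d}^\cG$. Compatibility with the symmetric monoidal structure reduces similarly to the fact that $\cG$ sends disjoint unions of manifolds to cartesian products of structure sets. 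I do not anticipate any serious obstacle: the lemma is essentially a bookkeeping exercise, and the author's aside that it follows by ``definition-unpacking'' is accurate; the most one needs to be careful about is that collar neighborhoods give well-defined restriction maps $\cG(M) \to \cG(N_i)$, which is already implicit in the construction of $\Bord_{d-1,d}^\cG$.
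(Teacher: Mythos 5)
Your proposal is correct and matches the paper's approach: the paper offers no written proof beyond the remark that the lemma follows by ``definition-unpacking,'' and your argument is exactly that unpacking, carried out carefully in both directions with the functoriality and monoidality checks spelled out.
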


Suppose that $\cG$ is a topological local structure valued not in $\Sets$ but in $\Sch_\bR$.  Our strategy will be to take \cref{description of G-structures} as the model for the definition of ``$\cG$-structured field theory.''  To do this, note that $\Vect_\bR$ is naturally an object of $\bR$-algebraic geometry.  Indeed, there is a stack of categories on $\Sch_\bR$, namely $\Qcoh: \Spec(A) \mapsto \cat{Mod}_A$, whose category of global sections  is nothing but $\Qcoh(\Spec(\bR)) = \Vect_\bR$.  We can therefore define:

\begin{definition} \label{schemes structured field theory}
  Let $\cG$ be a topological local structure valued in schemes over $\bR$, thought of as a ``classical field theory'' $\widetilde\cG: \Bord_{d-1,d} \to \Spans(\Sch_\bR)$.
  Let $\Spans(\Sch_\bR;\Qcoh)$ denote the symmetric monoidal category whose objects are pairs $(X,V)$ where $X$ is a scheme over $\bR$ and $V \in \Qcoh(X)$, in which a morphism from $(X,V)$ to $(Y,W)$ is (an isomorphism class of) a correspondence of schemes $X \overset f \leftarrow A \overset g \to Y$ together with a map of quasicoherent sheaves $f^*V \to g^*W$, in which composition is by fibered product, and in which the symmetric monoidal structure is $\times_{\Spec(\bR)}$.
    A \define{$\cG$-structured field theory} is a choice of lift:
       $$ \begin{tikzpicture}[baseline=(LR.base)]
   \path coordinate (UL) +(3,0) node (UR) {$\Spans(\Sch_\bR;\Qcoh)$} +(-1,-1.5) node (LL) {$\Bord_{d-1,d}$} +(3,-1.5) node (LR) {$\Spans(\Sch_\bR)$}
    ;
   \draw[->] (LL) -- node[auto] {$\scriptstyle \widetilde\cG$} (LR); \draw[->] (UR) -- node[auto] {\scriptsize {Forget} the $\Qcoh$-data} (LR);
   \draw[->,dashed] (LL) -- (UR);
  \end{tikzpicture}$$
\end{definition}

  Any topological local structure $\cG$ valued in $\Sets$ defines a topological local structure, which we will also call $\cG$, valued in $\Sch_\bR$, via the symmetric monoidal inclusion $\Sets \hookrightarrow \Sch_\bR, S \mapsto S \times \Spec(\bR)$.  In this case, the notion of ``$\cG$-structured field theory'' from \cref{schemes structured field theory} agrees with the usual notion in terms of symmetric monoidal functors $\Bord_{d-1,d}^\cG \to \Vect_\bR$, since $\Qcoh(S\times\Spec(\bR)) = \{\text{real vector bundles on $S$}\}$.
  
 We will focus on four examples of topological local structures $\cG$ valued in $\Sch_\bR$, two of which come from topological local structures valued in $\Sets$.  We will unpack a bit about the values of $\cG$-structured field theories in all four cases to make everything explicit.
  
\begin{example} \label{eg.unstructured field theory}
  An \define{unstructured} or \define{unoriented} field theory is a ``${\Spec(\bR)}$-structured'' one, where ${\Spec(\bR)}(M) = \Spec(\bR)$ for all manifolds $M$.  Let $Z$ be an unstructured field theory.   If $M$ a closed $d$-dimensional manifold, then $Z(M) \in \cO(\Spec(\bR)) = \bR$.  If $N$ is a closed $(d-1)$-dimensional manifold, then $Z(N) \in \Qcoh(\Spec(\bR)) = \Vect_\bR$.  Consider the \define{macaroni} cobordisms $N \times \pairing : N\sqcup N \to \emptyset$ and $N \times \copairing : \emptyset \to N\sqcup N$.  The first defines a symmetric pairing $Z(N \times \pairing) : Z(N) \otimes Z(N) \to \bR$ and the second a symmetric copairing $\bR \to Z(N) \otimes Z(N)$.  The \define{zig-zag equations} $\tikz[baseline=(b)] \draw[thick] (0,.5pt) coordinate (b) (0,0) -- (4pt,0) arc (-90:90:2pt) arc (270:90:2pt) -- +(4pt,0); = \tikz[baseline=(b)] \draw[thick] (0,.5pt) coordinate (b) (0,0) .. controls +(4pt,0) and +(-4pt,0) .. ++(8pt,8pt);$ and $\tikz[baseline=(b),xscale=-1] \draw[thick] (0,.5pt) coordinate (b) (0,0) -- (4pt,0) arc (-90:90:2pt) arc (270:90:2pt) -- +(4pt,0); = \tikz[baseline=(b),xscale=-1] \draw[thick] (0,.5pt) coordinate (b) (0,0) .. controls +(4pt,0) and +(-4pt,0) .. ++(8pt,8pt);$ require this pairing and copairing to be inverse to each other, and are equivalent to making $V = Z(N)$ into a symmetrically-self-dual vector space over $\bR$, i.e.\ we have $\varphi : V \isom V^*$ with $\varphi^* \circ \varphi = \id_V$.
\end{example}

\begin{example} \label{eg.oriented field theory}
  An \define{oriented} field theory is one with topological local structure $\Or = \cG_{\bZ/2}$ from \cref{orientations}, thought of as being valued in $\Sch_\bR$ via $S \mapsto S \times \Spec(\bR)$.  Orientations are distinguished among all topological local structures by \cref{set theory cob hyp}: they correspond to the trivial $\bZ/2$-torsor.  We will review the basic structure enjoyed by an oriented field theory $Z$.
  
  Let $M$ be a connected closed $d$-dimensional manifold.  Then $Z(M)$ is a function on  $\Or(M)\times \Spec(\bR)$.  If $M$ is unorientable, then $\Or(M) = \emptyset$ and $Z(M)$ is no data.  If $M$ is orientable, then $\Or(M) \times \Spec(\bR) \cong \Spec(\bR) \sqcup \Spec(\bR)$, the two points corresponding to the two orientations of $M$, and $Z(M)$ is an element of $\cO(\Spec(\bR) \sqcup \Spec(\bR)) = \bR \times \bR$, i.e.\ a pair of numbers (indexed by the two orientations of $M$).
  
  Suppose now that $N$ is a closed connected $(d-1)$-dimensional manifold.  Again if $N$ is unorientable, $\Or(N)$ is empty and $Z$ assigns no data.  If $N$ is orientable,  $Z(N)$ is a sheaf on $\Or(N) \times \Spec(\bR) \cong \Spec(\bR) \sqcup \Spec(\bR)$, i.e.\ a pair $(V,V')$ of real vector spaces, one for each orientation of $N$.  These vector spaces are not independent.  Rather, the macaroni cobordisms $N \times \pairing : N\sqcup N \to \emptyset$ and $N \times \copairing : \emptyset \to N\sqcup N$ each admit two orientations, which induce orientations of their boundaries such that the two copies of $N$ have opposite orientations.  Some definition-unpacking shows that the data of $Z(N \times \pairing)$ is nothing but a linear map $V \otimes_\bR V' \to \bR$, and the data of $Z(N \times \copairing)$ is a linear map $\bR \to V \otimes_\bR V'$.  The zig-zag equations assert that $Z(N \times \pairing)$ and $Z(N \times \copairing)$ make $V$ and $V'$ into dual vector spaces.
\end{example}

\begin{example} \label{eg.hermitian field theory}
  \Cref{set theory cob hyp} distinguishes a second topological local structure valued in $\Sch_\bR$.  Specifically, there is a canonical nontrivial $\bZ/2$-torsor over $\Spec(\bR)$, namely $\Spec(\bC)$ with the complex conjugation action.  We will suggestively write $\Her : \Man_d \to \Sch_\bR$ for this topological local structure, and call $\Her(M)$ the scheme of \define{Hermitian structures} on $M$.  
  One easily sees that for any manifold $M$, $$ \Her(M) = \Or(M) \underset{\bZ/2}\times \Spec(\bC) $$
  where $\bZ/2$ acts on $\Or(M)$ by orientation reversal and on $\Spec(\bC)$ by complex conjugation, and $\times_{\bZ/2}$ denotes the coequalizer of these actions.  
  A Hermitian field theory is \define{\'etale-locally-oriented} in the sense that that $\Her$ and $\Or$ are both valued in schemes \'etale over $\Spec(\bR)$ and are \'etale-locally isomorphic as topological local structures over $\Spec(\bR)$, since they
  pull back to isomorphic topological local structures along $\Spec(\bC) \to \Spec(\bR)$.  Since there are precisely two $\bZ/2$-torsors over $\Spec(\bR)$, there are precisely two topological local structures \'etale-locally isomorphic to $\Or$, i.e.\ precisely two kinds of \'etale-locally-oriented field theory.
  
  We now justify the name ``Hermitian.'' Suppose that $Z$ is a $\Her$-structured field theory and $M$ is a closed $d$-dimensional manifold.   If $M$ is not orientable, then $\Her(M) = \emptyset$ is the empty scheme and $Z(M)$ is no data.  If $M$ is orientable and non-empty, then $\Her(M)$ is noncanonically isomorphic to the disjoint union of $2^{|\pi_0 M|-1}$  copies of $\Spec(\bC)$.  In particular, if $M$ is connected and orientable, then either orientation of $M$ determines an isomorphism $\Her(M) \cong \Spec(\bC)$.  Thus, either choice of orientation identifies $Z(M) \in \cO(\Her(M))$ with a complex number.  The two choices of orientation determine isomorphisms that differ by complex conjugation.  So one can think of $Z$ as assigning to every oriented manifold a complex number, subject to the condition that orientaiton-reversal agrees with complex conjugation.  Finally, if $M = \emptyset$, then $\Her(M) = \Spec(\bR)$ and $Z(M) = 1$.
  
    Suppose now that $N$ is a closed connected $(d-1)$-dimensional manifold.  Again if $N$ is unorientable, then $\Her(N) = \emptyset$ and $Z(N)$ is no data.  If $N$ is orientable, $Z(N)$ is a vector bundle on $\Her(N) \cong \Spec(\bC)$, i.e.\ a complex vector space.  The values of the macaroni $Z(N \times \pairing)$ and $Z(N \times \copairing)$ now are bundles of linear maps over $\Her(N \times \pairing) \cong \Her(N\times \copairing) \cong \Spec(\bC)$.  The domain and codomain of $Z(N \times \pairing)$ are given by pulling back $Z(N\sqcup N)$ and $Z(\emptyset)$ along the restrictions $\Her(N \times \pairing) \to \Her(N\sqcup N) = \Her(N) \times_{\Spec\bR}\Her(N)$ and $\Her(N\times \pairing) \to \Her(\emptyset)$, and similarly for $Z(N\times \copairing)$.  Unpacking gives:
\begin{align*}
 Z(N \times \pairing) & \in  \hom_\bR\bigl(Z(N\times \pt)\otimes_\bR Z(N\times \pt),\bR\bigr) \otimes_\bR \bC \\
 Z(N \times \copairing) & \in  \hom_\bR\bigl(\bR,Z(N \times \pt)\otimes_\bR Z(N \times \pt)\bigr) \otimes_\bR \bC 
\end{align*}
The restriction map $\Spec(\bC) = \Her(N \times \pairing) \to \Her(N) \times_{\Spec\bR}\Her(N) = \Spec(\bC) \times_{\Spec(\bR)} \Spec(\bC)$ is the ``antidiagonal'' map $\lambda \mapsto (\lambda,\bar\lambda)$, and so $Z(N \times \pairing)$ is a sesquilinear pairing on $Z(N)$.
It follows from the zig-zag equations that $Z(N \times \pairing)$ and $Z(N \times \copairing)$ identify the $\bC$-linear dual vector space $Z(N)^*$ to $Z(N) \in \Vect_\bC$ with the complex conjugate space $\overline{Z(N)}$.  Finally, the symmetry of $N \times \pairing$ translates into the requirement that the sesquilinear pairing on $Z(N)$ is symmetric, or equivalently the isomorphism $\varphi : Z(N)^* \isom \overline{Z(N)}$ satisfies $\bar\varphi^* \circ \varphi = \id$.  It is in this sense that Hermitian field theories are ``Hermitian.''
\end{example}

\begin{example} \label{eg.complex field theory}
  In addition to $\Her : \Man_d \to \Sch_\bR$, there is another topological local structure whose value on $\bR^d$ is $\Spec(\bC)$, namely the one corresponding via \cref{set theory cob hyp} to $\Spec(\bC)$ with the trivial $\bZ/2$-action.  We will simply call this topological local structure ``$\Spec(\bC)$.''  It satisfies ${\Spec(\bC)}(M) = \Spec(\bC)^{\pi_0M}$ for every manifold $M$.  When one unpacks the notion of ``${\Spec(\bC)}$-structured field theory,'' one finds that they are nothing but \define{complex-linear} unstructured field theories.  For example, the values of $\Spec(\bC)$-structured field theories on closed connected $(d-1)$- and $d$-dimensional manifolds are objects of $\QCoh(\Spec(\bC)) = \Vect_\bC$ and elements of $\cO(\Spec(\bC)) = \bC$ respectively.
\end{example}

\Cref{eg.hermitian field theory} provided one of two reasons why Hermitian field theories are distinguished: they correspond to the unique nontrivial torsor over $\Spec(\bR)$ for the group $\bZ/2 = \pi_0\hom_{\Man_d}(\bR^d,\bR^d)$.  \Cref{reflection positive oriented} provides the second reason, by asserting that of the two types of \'etale-locally-oriented field theories, only Hermiticity is compatible with reflection-positivity.  We now define reflection-positivity and prove \cref{reflection positive oriented}.

\begin{definition}\label{defn.reflection positive}
  A $d$-dimensional unstructured (i.e.\ $\Spec(\bR)$-structured) field theory $Z : \Bord_{d-1,d} \to \Vect_\bR$ is \define{reflection-positive} of the nondegenerate symmetric pairing $Z(N \times \pairing) : Z(N) \otimes Z(N) \to \bR$ is positive-definite for every closed $(d-1)$-dimensional manifold $N$.
\end{definition}

Most of the physics literature, including the original definition of functorial topological field theory from \cite{MR1001453}, includes Hermiticity directly in the definition of ``quantum field theory.''  As such, reflection-positivity is usually  posed as the requirement that the Hermitian form on the complex vector space $Z(N)$ should be positive-definite.  For non-topological quantum field theories defined on Minkowski $\bR^{d-1,d}$, reflection-positivity is a stronger condition assuring the existence of an analytic continuation to ``imaginary time'' $\bR^{d-1} \times i\bR_{\geq 0}$, and ``reflection'' refers to reflection in the  time axis.  Positive-definiteness of the Hilbert space is what remains when interpreting this stronger condition for topological field theories.

From the point of view of this paper, the non-Hermitian version of reflection-positivity in \cref{defn.reflection positive} is the most primitive.  The Hermitian version arises as follows.  Suppose first that $Z$ is not Hermitian but oriented.  One can produce an unstructured field theory $\int_{\Or}Z$ from $Z$ by integrating out the choice of orientation:
$$ \int_{\Or} Z : M \mapsto \int_{\sigma \in \Or(M)}Z(M,\sigma) $$
Here the integral is a finite sum of numbers when $M$ is $d$-dimensional and a finite direct sum when $\dim M < d$.  In particular, for $N$ a connected $(d-1)$-dimensional manifold, $(\int_{\Or}Z)(N) = Z(N) \oplus Z(N)^*$ with the obvious symmetric pairing.

Let $Z^*$ denote the orientation-reversal of the field theory $Z$.  There is a canonical equivalence $\int_{\Or}Z \cong \int_{\Or} Z^*$.  It follows that ``$\int_{\Or}$'' makes sense not just for oriented field theories but for any \'etale-locally-oriented field theory.  Indeed, suppose $Z$ is not oriented but Hermitian.  Using the isomorphism $\Her \times_{\Spec(\bR)} \Spec(\bC) \cong \Or \times \Spec(\bC)$, one sees that the base-changed field theory $Z_\bC = Z \otimes_\bR \bC$ is naturally oriented and $\bC$-linear, and so $\int_{\Or}Z_\bC$ makes sense as a $\bC$-linear unstructured field theory.  But the Hermiticity of $Z$ defines a Galois action on $\int_{\Or} Z_\bC$, describing how to descend it to an $\bR$-linear unstructured field theory $\int_{\Or} Z$.  One finds that, for $Z$ a Hermitian field theory and $N$ a connected $(d-1)$-dimensional manifold, $(\int_{\Or} Z)(N)$ is nothing but the underlying real vector space of the Hermitian vector space $Z(N)$; the symmetric pairing is twice the real part of the Hermitian pairing on $Z(N)$.

The usual notion of ``reflection-positive'' is then captured by the following:

\begin{definition}\label{defn.reflection positive for oriented}
  An \'etale-locally-oriented field theory $Z$ is \define{reflection-positive} if the unoriented field theory $\int_{\Or}Z$ is reflection-positive.
  A field theory is \define{unitary} if it is reflection-positive and Hermitian.
\end{definition}

With this notion, the proof of \cref{reflection positive oriented} is immediate:

\begin{proof}[Proof of \cref{reflection positive oriented}]
  If $V$ is a non-zero real vector space, $V\oplus V^*$ is never positive-definite.
\end{proof}

\begin{remark}\label{remark.reflection positive for complex}
  One can also integrate a $\Spec(\bC)$-structured field theory to a $\Spec(\bR)$-structured one.  One finds that for $c\in \cO(\Spec(\bC))$, $\int_{\Spec(\bC)}c = 2\operatorname{Re}(c)$, and that the integral of a complex vector space $V \in \Vect_\bC$ is the underlying real vector space of $V$.  If $Z$ is a $\Spec(\bC)$-structured field theory, then $Z(N \times \pairing)$ is a $\bC$-linear symmetric pairing on the complex vector space $Z(N)$, and $\int_{\Spec(\bC)}Z(N\times \pairing)$ is twice its real part, thought of as a symmetric pairing on the real vector space $\int_{\Spec(\bC)}Z(N)$.  The real part of a complex-linear symmetric pairing is never positive-definite.
\end{remark}

\section{A categorified Galois extension} \label{section.algebraic closure}

\Cref{section.hermitian} illustrated the important role that algebraic geometry and Galois theory play in explaining the origin of Hermitian phenomena in quantum field theory.  The goal of this section and the next is to tell a similar story concerning ``super'' phenomena of fermions and spinors.  Explicitly, $\bC$ appeared because it is the algebraic closure of $\bR$.  This section will explain that $\SuperVect_\bC$ is the ``categorified algebraic closure'' of $\Vect_\bR$.   This is essentially the ``existence of super fiber functors'' theorem from \cite{MR1944506}.  We state this result as \cref{thm.Deligne} and provide details of its proof, as our phrasing is somewhat different from that of \cite{MR1944506}.

A convenient setting for ``categorified $\bR$-linear algebra'' is provided by the bicategory $\Pres_\bR$ of $\bR$-linear locally presentable categories, $\bR$-linear cocontinuous functors, and natural transformations: direct sums play the role of addition and quotients play the role of subtraction.   
 Two of the many ways that $\Pres_\bR$ is convenient are that it admits all limits and colimits \cite{BirdThesis} and that it has a natural symmetric monoidal structure $\boxtimes = \boxtimes_\bR$ satisfying a hom-tensor adjunction \cite{MR651714}.  The unit object for $\boxtimes$ is $\Vect_\bR$.  Basic examples of $\bR$-linear locally presentable categories include the categories $\Mod_A$ of $A$-modules for any $\bR$-algebra $A$; the tensor product enjoys $\Mod_A \boxtimes \Mod_B \simeq \Mod_{A\otimes B}$.

\begin{definition}\label{defn.cat com alg}
  A \define{categorified commutative $\bR$-algebra} is a symmetric monoidal object in $\Pres_\bR$.
\end{definition}

We embed non-categorified commutative $\bR$-algebras among categorified commutative $\bR$-algebras with the following lemma, whose proof is a straightforward exercise (see \cite[Proposition 2.3.9]{MR3097055}):
\begin{lemma}\label{lemma.every affine is categorifiable}
  The assignment taking a commutative $\bR$-algebra $R$ to the categorified commutative $\bR$-algebra $(\Mod_R,\otimes_R)$ and an $\bR$-algebra homomorphism $f : R \to S$ to extension of scalars $\otimes_R S : \Mod_R \to \Mod_S$ defines a fully faithful embedding of the category of commutative $\bR$-algebras into the bicategory of categorified commutative $\bR$-algebras. \qedhere
\end{lemma}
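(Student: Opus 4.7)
The plan is to use a Morita/Eilenberg--Watts-type reconstruction to identify cocontinuous $\bR$-linear functors $\Mod_R \to \Mod_S$ with $(R,S)$-bimodules, and then to cut down via the symmetric monoidal structure to $\bR$-algebra homomorphisms $R \to S$. Since the source $1$-category of commutative $\bR$-algebras has discrete hom-sets, full faithfulness as a functor into a bicategory amounts to two conditions: (i) every symmetric monoidal cocontinuous $\bR$-linear functor $F : \Mod_R \to \Mod_S$ is isomorphic (as such) to $-\otimes_R S$ for some algebra homomorphism $f : R \to S$; and (ii) for algebra homomorphisms $f, g : R \to S$, the set of symmetric monoidal natural transformations between the two induced functors is a singleton when $f = g$ and empty otherwise.

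For (i), the strategy relies on the fact that $R$ is a compact projective generator of $\Mod_R$, so any cocontinuous $\bR$-linear $F$ is determined up to natural isomorphism by $F(R)$ together with its natural right $R$-action via functoriality on the endomorphisms $R = \End_{\Mod_R}(R)$; thus $F \simeq -\otimes_R F(R)$. The unit axiom for the symmetric monoidal structure on $F$ then forces $F(R) \cong S$ as an $S$-module, and under this identification the right $R$-action becomes an $\bR$-linear ring homomorphism $f : R \to \End_S(S) = S$, using commutativity of $S$ to identify $\End_S(S)$ with $S$. The iso $F \simeq -\otimes_R S$ upgrades to a symmetric monoidal one: the monoidal constraint of $F$ applied to the multiplication iso $R \otimes_R R \isom R$ recovers the standard multiplication on $S$, so everything matches.

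For (ii), let $\eta$ be a symmetric monoidal natural transformation between the functors induced by $f, g : R \to S$. Its component at $R$ is an $S$-linear endomorphism $\eta_R : S \to S$, hence multiplication by some $s \in S$. The unit axiom for symmetric monoidal natural transformations forces $\eta_R(1) = 1$, so $s = 1$ and $\eta_R = \id_S$. Naturality applied to the $R$-module endomorphism ``multiplication by $r$'' on $R$ then forces $f(r) = g(r)$ for every $r \in R$, i.e.\ $f = g$ and $\eta$ is the identity.

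The main obstacle is checking that the Eilenberg--Watts equivalence, initially only giving $F \simeq -\otimes_R F(R)$ as cocontinuous $\bR$-linear functors, can be genuinely promoted to a symmetric monoidal equivalence once $F$ is assumed symmetric monoidal. This reduces to showing that all the monoidal constraints are determined by the unit isomorphism and the bimodule structure on $F(R)$, and is ultimately routine bookkeeping of associator and hexagon axioms of the kind written out in \cite[Proposition~2.3.9]{MR3097055}.
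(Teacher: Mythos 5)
The paper does not actually prove this lemma in the text; it declares it ``a straightforward exercise'' and points to \cite[Proposition~2.3.9]{MR3097055} --- the same reference you appeal to for the final bookkeeping --- so there is no in-paper proof to compare against. Your outline is correct and is the standard Eilenberg--Watts argument: identify a cocontinuous $\bR$-linear $F$ with $-\otimes_R F(R)$, let the monoidal unit constraint pin $F(R)\cong S$, extract the ring map from $R=\End_{\Mod_R}(R)\to\End_{\Mod_S}(F(R))\cong S$, and check that monoidal natural transformations $\eta$ between $-\otimes_R S$ and $-\otimes_R S'$ are killed by the unit axiom plus naturality.

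One small point you elide: after establishing $\eta_R = \id_S$ and $f = g$, you assert ``$\eta$ is the identity,'' but that needs an extra line --- for general $M$, each $\eta_M$ is $S$-linear and naturality against $\phi_m : R \to M$, $1\mapsto m$, forces $\eta_M(m\otimes 1) = m\otimes 1$, and these elements generate $M\otimes_R S$ over $S$. That, together with the coherence bookkeeping you already flag as the main obstacle, is all that stands between your sketch and a complete proof; nothing here would fail.
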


We turn now to categorifying the notion of ``algebraic closure.''  Algebraic closures of fields are determined by a weak universal property ranging over only finite-dimensional algebras.  Summarizing the story over $\bR$, we have:

\begin{lemma} \label{lemma.definition of field}
  \begin{enumerate} \label{lemma.RversusC} \setcounter{enumi}{-1}
    \item \label{item.fd} $\bC$ is a non-zero finite-dimensional commutative $\bR$-algebra.
    \item \label{item.field} Every map $\bC \to A$ of non-zero  finite-dimensional commutative $\bR$-algebras is an injection.
    \item \label{item.closed} If $A $ is a non-zero  finite-dimensional commutative $\bR$-algebra, then there exists a map $A \to \bC$ of commutative $\bR$-algebras.
    \item \label{item.unique} Items (\ref{item.fd}--\ref{item.closed}) determine $\bC$ uniquely up to non-unique isomorphism.
    \qedhere
  \end{enumerate}
\end{lemma}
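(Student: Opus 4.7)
The plan is to tackle the four items in order, concentrating essentially all of the substance in item~(\ref{item.closed}); the other three are bookkeeping.

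Item~(\ref{item.fd}) is immediate, since $\bC \cong \bR[x]/(x^2+1)$ is two-dimensional over $\bR$. Item~(\ref{item.field}) is the standard observation that any ring homomorphism out of a field has trivial kernel provided its codomain is nonzero: the kernel is a proper ideal of $\bC$, and $\bC$ has no proper ideals other than $(0)$.

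For item~(\ref{item.closed}), let $A$ be a non-zero finite-dimensional commutative $\bR$-algebra. I would first pick a maximal ideal $\mathfrak{m} \subset A$, which exists because $A$ is Artinian (or by a direct dimension-counting argument: among proper ideals of $A$, pick one of maximal $\bR$-dimension). The quotient $k := A/\mathfrak{m}$ is then a field and a finite-dimensional $\bR$-algebra. The key external input I intend to invoke is the classical fact that every finite-dimensional field extension of $\bR$ is isomorphic to $\bR$ or $\bC$: each element of $k$ satisfies a minimal polynomial over $\bR$, which by the fundamental theorem of algebra splits into linear factors over $\bC$, and picking compatible roots yields an embedding $k \hookrightarrow \bC$. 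Composing with $A \twoheadrightarrow k$ produces the desired map $A \to \bC$.

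For item~(\ref{item.unique}), suppose $K$ is a commutative $\bR$-algebra satisfying the analogs of items~(\ref{item.fd}--\ref{item.closed}). Applying~(\ref{item.closed}) for $\bC$ to the algebra $K$ produces a map $K \to \bC$, injective by the analog of~(\ref{item.field}) for $K$. Dually, applying the analog of~(\ref{item.closed}) for $K$ to the algebra $\bC$ gives a map $\bC \to K$, injective by~(\ref{item.field}) for $\bC$. The composite $\bC \to K \to \bC$ is then an injective $\bR$-algebra endomorphism of $\bC$; since $\bC$ is finite-dimensional over $\bR$, injectivity forces surjectivity, so the composite is an automorphism. Hence $K \to \bC$ admits a right inverse and is also injective, so it is the desired isomorphism. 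No step presents a genuine obstacle; the only nontrivial ingredient is the classification of finite-dimensional field extensions of $\bR$, i.e.\ the fundamental theorem of algebra.
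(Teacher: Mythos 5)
Your proof is correct. The paper itself gives no argument for this lemma — it is stated with a \verb|\qedhere| and followed by the one-line gloss that items~(0)--(1) encode the assertion that $\bC$ is a field and item~(2) the assertion that $\bC$ is algebraically closed — so there is no "paper proof" to diverge from; you have simply supplied the standard details (maximal ideal, fundamental theorem of algebra, and the injective-plus-finite-dimensional-implies-automorphism back-and-forth for uniqueness) that the author left implicit.

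One minor remark: for item~(\ref{item.closed}) you do not actually need the full classification of finite extensions of $\bR$ as $\bR$ or $\bC$; it suffices to observe that $A/\mathfrak m$ is a finite, hence algebraic, extension of $\bR$ and therefore embeds in the algebraically closed field $\bC$. The phrase "picking compatible roots" glosses over the inductive construction of such an embedding, which in general requires a Zorn-type or primitive-element argument, though of course here everything is finite and unproblematic. This is a presentational quibble rather than a gap.
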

Of course, (\ref{item.fd}--\ref{item.field}) are equivalent to the statement that $\bC$ is a \define{field}, and (\ref{item.closed}) is equivalent to the statement that $\bC$ is \define{algebraically closed}.  We categorify these notions in turn.

\begin{definition}\label{defn.finite dimensional}
  A \define{strongly generating set} in an $\bR$-linear locally presentable category $\cC$ is a set of objects in $\cC$ that generate $\cC$ under  colimits.  The category $\cC$ is \define{finite-dimensional} if it admits a finite strongly generating set $\{C_1,\dots,C_n\}$ such that all hom-spaces between generators $\hom(C_i,C_j)$ are finite-dimensional and moreover every generator $C_i$ is \define{compact projective} in $\cC$, in the sense that $\hom(C_i,-) : \cC \to \Vect_\bR$ is cocontinuous.
  
  A categorified commutative $\bR$-algebra $(\cC,\otimes_\cC,\dots)$ is \define{finite-dimensional} as a categorified commutative $\bR$-algebra if the underlying $\bR$-linear category of $\cC$ is finite-dimensional and moreover every projective object $P \in \cC$ is {dualizable}.
\end{definition}

  Compact projectivity, sometimes called ``tininess,'' is a strong but reasonable finiteness condition to impose on an object.  There are many definitions of ``projective'' that agree for abelian categories but diverge for locally presentable but not necessarily abelian categories; ours is one of the stronger possible choices.  If $\cC$ is a finite-dimensional $\bR$-linear locally presentable category, then $\cC$ is automatically equivalent to the category $\Mod_A$ of modules for a finite-dimensional associative algebra $A$ (e.g.\ one can  take $A = \End(\bigoplus_i C_i)$).
  
  Finite-dimensionality as a categorified algebra is stronger than just finite-dimensionality of the underlying category.  The condition that compact projectivity implies dualizability expresses a compatibility between ``internal'' and ``external'' notions of finite-dimensionality in a symmetric monoidal category, which otherwise might badly diverge \cite{MR1711565}.  Indeed, $P \in \Mod_A$ is compact projective exactly when the functor $\otimes_\bR P : \Vect_\bR \to \Mod_A$ has a right adjoint of the form $\otimes_A P^\vee$ for some left $A$-module $P^\vee$, whereas, for $(\cC,\otimes_\cC,\dots)$ a symmetric monoidal category, $P \in \cC$ is dualizable when the functor $\otimes P : \cC \to \cC$ has a right adjoint of the form $\otimes P^*$ for some $P^* \in \cC$.
  
  To check that $(\cC,\otimes_\cC,\dots)$ is finite-dimensional as a categorified commutative $\bR$-algebra, it suffices to check that the underlying $\bR$-linear category $\cC$ is finite-dimensional and that each generator $C_i$ is dualizable.

\Cref{defn.finite dimensional} explains how to categorify item (\ref{item.fd}) from \cref{lemma.definition of field}.  With it in hand, we may   categorify the notion of ``algebraically closed field'' by following items (\ref{item.field}) and (\ref{item.closed}):

\begin{definition}\label{defn.categorified field}
 A  \define{categorified field} is a non-zero finite-dimensional categorified commutative $\bR$-algebra $(\cC,\otimes_\cC,\dots)$ such that every 1-morphism $(\cC,\otimes_\cC,\dots) \to (\cD,\otimes_\cD,\dots)$ of non-zero categorified commutative $\bR$-algebras is faithful and injective on isomorphism classes of objects.
  
   A finite-dimensional categorified field $(\cC,\otimes,\dots)$ is \define{algebraically closed} if for every non-zero finite-dimensional categorified commutative $\bR$-algebra $(\cB,\otimes,\dots)$, there exists a 1-morphism $F : (\cB,\otimes_\cB,\dots) \to (\cC,\otimes_\cC,\dots)$ of categorified commutative $\bR$-algebras.
\end{definition}

\begin{lemma} \label{fields are categorified fields}
  A finite-dimensional commutative $\bR$-algebra $R$ is a field if and only if $(\Mod_R,\otimes_R,\dots)$ is a categorified field.
\end{lemma}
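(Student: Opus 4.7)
The plan is to prove both directions separately, leaning on \cref{lemma.every affine is categorifiable} (which embeds ordinary commutative $\bR$-algebras fully faithfully into the bicategorical setting). The forward implication will require unpacking how a 1-morphism out of $\Mod_R$ for $R$ a field is determined by what it does on the unit, and the reverse will be a direct construction of a non-faithful extension-of-scalars functor.

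For the forward direction, suppose $R$ is a finite-dimensional commutative $\bR$-algebra that is a field; by Wedderburn, $R \cong \bR$ or $R \cong \bC$. A 1-morphism $F: (\Mod_R, \otimes_R, \dots) \to (\cD, \otimes_\cD, \dots)$ is a symmetric monoidal cocontinuous functor; since $\Mod_R$ is generated under colimits by $R$, such $F$ is determined up to unique isomorphism by $F(R) = \unit_\cD$ together with the induced $\bR$-algebra homomorphism $\rho: R \to \End_\cD(\unit_\cD)$. When $\cD$ is nonzero, $\unit_\cD \neq 0$, so $\rho \neq 0$, and because $R$ is a field, $\rho$ is injective. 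Every object of $\Mod_R$ is a free module $R^{(I)}$ (a colimit of copies of $R$), and every morphism $R^{(I)} \to R^{(J)}$ is a matrix over $R$; cocontinuity forces $F$ to act on these matrices by applying $\rho$ entry-wise, whence injectivity of $\rho$ yields faithfulness of $F$. For injectivity on isomorphism classes, $F(R^{(I)}) \cong F(R^{(J)})$ gives $\unit_\cD^{(I)} \cong \unit_\cD^{(J)}$; applying $\hom_\cD(\unit_\cD, -)$ produces an isomorphism $\End_\cD(\unit_\cD)^{(I)} \cong \End_\cD(\unit_\cD)^{(J)}$ of $\End_\cD(\unit_\cD)$-modules. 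Since $\End_\cD(\unit_\cD)$ is a nonzero commutative $\bR$-algebra (commutative by Eckmann-Hilton), it has invariant basis number---for example, by modding out a maximal ideal to get a field---so $|I| = |J|$, and hence $R^{(I)} \cong R^{(J)}$ in $\Mod_R$.

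For the backward direction, we argue the contrapositive. If $R$ is not a field then, being finite-dimensional hence Artinian, it admits a maximal ideal $\mathfrak{m}$ with $\mathfrak{m} \neq 0$. The quotient $R/\mathfrak{m}$ is a field (either $\bR$ or $\bC$), so $\Mod_{R/\mathfrak{m}}$ is nonzero, and by \cref{lemma.every affine is categorifiable} the extension of scalars $F = -\otimes_R R/\mathfrak{m}: \Mod_R \to \Mod_{R/\mathfrak{m}}$ is a 1-morphism of categorified commutative $\bR$-algebras. Pick any $x \in \mathfrak{m} \smallsetminus \{0\}$: multiplication by $x$ is a nonzero element of $\hom_R(R,R) = R$, but its image under $F$ is multiplication by the class of $x$ in $R/\mathfrak{m}$, which vanishes. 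So $F$ is not faithful, showing $\Mod_R$ fails to be a categorified field.

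The main obstacle is the injective-on-iso-classes half of the forward direction: unlike faithfulness, which reduces immediately to the injectivity of $\rho$, ruling out $\unit_\cD^{(I)} \cong \unit_\cD^{(J)}$ for $|I| \neq |J|$ requires a genuine invariant-basis-number argument on $\End_\cD(\unit_\cD)$. Commutativity (automatic for unit endomorphism rings) is what makes this tractable, via reduction to a residue field; without it one would need to appeal to something like a Hattori-Stallings trace.
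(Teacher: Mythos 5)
Your overall strategy matches the paper's, and your backward direction (extension of scalars along $R \to R/\mathfrak m$ kills multiplication by $x \in \mathfrak m$) is exactly the argument the paper sketches in one line. Your faithfulness argument in the forward direction also works and is arguably more elementary than the paper's: the paper instead observes that, since short exact sequences in $\Mod_R$ split, $F(f)=0$ for a nonzero $f$ forces $F(R)=0$, hence $\cD = 0$. One small point you should add: when a nonzero $g : R^{(I)} \to R^{(J)}$ is reduced to a finite submatrix, you need the direct-sum inclusion $\unit_\cD^{(J')} \to \unit_\cD^{(J)}$ to be a monomorphism; this holds because $\cD$ is additive, but say so.

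The genuine gap is in your injectivity-on-iso-classes step. The identification
\[
  \hom_\cD\bigl(\unit_\cD, \unit_\cD^{(I)}\bigr) \;\cong\; \End_\cD(\unit_\cD)^{(I)}
\]
is precisely the assertion that $\unit_\cD$ is a \emph{compact} object of $\cD$, i.e.\ that $\hom_\cD(\unit_\cD,-)$ commutes with arbitrary direct sums. This is false for a general $\bR$-linear locally presentable symmetric monoidal $\cD$, and it is the nontrivial input that the paper supplies in the middle of its proof: since $\cD$ is finite-dimensional, $\cD \simeq \Mod_A$ with $A$ a finite-dimensional algebra; $\End_\cD(\unit_\cD)$ embeds into the center $Z(A)$, hence is finite-dimensional; if the $A$-module underlying $\unit_\cD$ were infinite-dimensional its endomorphism algebra would be too; so $\unit_\cD$ is a finitely generated $A$-module, hence compact. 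Your proof never invokes the finite-dimensionality of $\cD$, yet without it there is no reason for $\hom_\cD(\unit_\cD,-)$ to preserve infinite coproducts, and the counting argument does not start. You need to insert this compactness lemma. Once you do, your invocation of invariant basis number for the nonzero commutative ring $\End_\cD(\unit_\cD)$ is a clean way to finish --- slightly more robust than the paper's explicit cardinal computation $\dim_\bR \End_\cD(\unit_\cD)\cdot\alpha$, which additionally uses that $\dim_\bR\End_\cD(\unit_\cD)$ is finite --- but both versions stand or fall with compactness, which you must prove.
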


\begin{proof}It is clear that if $(\Mod_R,\otimes_R,\dots)$ is a categorified field, then $R$ is a field, simply by using the faithfulness assumption and 1-morphisms to categorified algebras of the form $(\Mod_S,\otimes_S,\dots)$.  

Conversely, suppose $R$ is a field and $F : \Mod_R \to \cC$ is any $\bR$-linear functor.  Suppose $F$ is not faithful.  Then there is a non-zero morphism $f : X \to Y$ in $\Mod_R$ with $F(f) = 0$.  Using the fact that in $\Mod_R$ all exact sequences split, one can show that $F(\operatorname{im}(f)) = 0$, from which it follows that $F(R) = 0$.  If $F$ is symmetric monoidal, $F(R) \cong \unit_\cC$ is the monoidal unit in $\cC$, and so $\cC$ is the zero category.  This verifies the faithfulness condition in \cref{defn.categorified field}.

Suppose that $(\cC,\otimes_\cC,\dots)$ is a finite-dimensional categorified commutative algebra over $\bR$, and let $\unit_\cC$ denote its monoidal unit.  Any $\lambda \in \End_\cC(\unit_\cC)$ defines a natural endomorphism of the identity functor on $\cC$ via $\lambda|_X = \lambda \otimes \mathrm{id}_X : X = \unit_\cC \otimes_\cC X \to \unit_\cC \otimes_\cC X = X$, and clearly $\lambda|_\unit = \lambda$.  Since $\cC$ is finite-dimensional, it is equivalent to $\Mod_A$ for a finite-dimensional associative algebra~$A$; then the algebra of natural endomorphisms of the identity functor is nothing but the center $Z(A) \subseteq A$.  If follows that $\End_\cC(\unit_\cC) \subseteq Z(A)$ is finite-dimensional.  Suppose that $\unit_\cC \in \cC$ corresponded to an infinite-dimensional $A$-module $M_A$.  
Then $\End_A(M_A) = \End_\cC(\unit_\cC)$ would be infinite-dimensional, as it is the subalgebra of $\End_\bR(M)$ cut out by finitely many equations (imposing compatibility with multiplication by a basis in the finite-dimensional algebra $A$).
  It follows that $\unit_\cC$ corresponds to a finite-dimensional $A$-module, and so $\unit_\cC$ is a \define{compact} object in $\cC$ in the sense that $\hom_\cC(\unit_\cC,-) : \cC \to \Vect_\bR$ preserves infinite direct sums.

If $R$ is a field, every object in $\Mod_R$ is isomorphic to $R^{\oplus \alpha}$ for some cardinal $\alpha$.
Let $F : (\Mod_R,\otimes_R,\dots) \to \cC$ be a cocontinuous symmetric monoidal functor.  On objects it takes $R \in \Mod_R$ to $\unit_\cC$, and so takes $R^{\oplus \alpha}$ to $\unit_\cC^{\oplus \alpha}$.  Since $\unit_\cC$ is compact, $\hom_\cC(\unit_\cC,\unit_\cC^{\oplus \alpha}) = \End_\cC(\unit_\cC)^{\oplus \alpha}$ is $(\dim(\End_\cC(\unit_\cC)) \times \alpha)$-dimensional over $\bR$.  Since $\dim(\End_\cC(\unit_\cC)) < \infty$, the cardinal $\alpha$ is determined by the cardinal $\dim(\End_\cC(\unit_\cC)) \times \alpha$. This verifies the injectivity-on-objects condition in \cref{defn.categorified field}.
\end{proof}

We now answer the question of finding the categorified algebraic closure of $\bR$.  Recall that the symmetric monoidal category 
$\SuperVect_\bC$ of \define{supervector spaces} over $\bC$ is by definition equivalent as a monoidal category, but not as a symmetric monoidal category, to the category $\cat{Rep}_\bC(\bZ/2)$ of complex representations of the group $\bZ/2$.  Let   $\bJ$ denote the ``sign'' representation, also called the \define{odd line}.  In $\cat{Rep}_\bC(\mathbb Z/2)$, the symmetry $\bJ \otimes \bJ \to \bJ \otimes \bJ$ is multiplication $+1$; in $\SuperVect_\bC$ the symmetry is $-1$.  The rest of the symmetry is determined from this law by the axioms of a symmetric monoidal category.

The following is, with just a few changes of context, the main result of \cite{MR1944506}; because of these few changes, we review the proof.

\begin{theorem} \label{thm.Deligne}
  $\SuperVect_\bC$ is the unique (up to non-unique equivalence) finite-dimensional algebraically closed categorified field over $\bR$.
\end{theorem}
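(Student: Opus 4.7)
The plan is to verify four properties of $\SuperVect_\bC$ in turn: finite-dimensionality as a categorified commutative $\bR$-algebra, the categorified-field property, algebraic closure, and uniqueness up to equivalence.

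Finite-dimensionality is immediate: the strongly generating set $\{\bC,\bJ\}$ consists of compact projective, self-dual objects whose hom-spaces are at most two-dimensional over $\bR$. For the categorified-field property, let $F : \SuperVect_\bC \to \cD$ be any non-zero $1$-morphism of categorified commutative $\bR$-algebras. Faithfulness proceeds exactly as in the proof of \cref{fields are categorified fields}: from $F(f) = 0$ with $f \neq 0$ one extracts, using semisimplicity of $\SuperVect_\bC$, the vanishing of $F(\bC) = \unit_\cD$ (which forces $\cD = 0$) or of $F(\bJ)$ (impossible since $F(\bJ)$ is $\otimes$-invertible, hence non-zero). For injectivity on isomorphism classes, I would run a refinement of the compactness-of-the-unit argument: computing $\dim_\bR\hom_\cD(\unit_\cD, F(X))$ and $\dim_\bR\hom_\cD(F(\bJ), F(X))$ for $X = \bC^a\oplus\bJ^b$ yields a linear system in $(a,b)$ whose coefficient matrix is invertible once one knows $F(\bJ)\not\cong\unit_\cD$; and this non-isomorphism holds because the double-braiding on $F(\bJ)\otimes F(\bJ)$ is multiplication by $-1 \neq 1$ in the non-zero $\bR$-algebra $\End_\cD(\unit_\cD)$.

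The main obstacle will be algebraic closure. Given any non-zero finite-dimensional categorified commutative $\bR$-algebra $\cB$, one must produce a $1$-morphism $\cB\to\SuperVect_\bC$; this is essentially the content of Deligne's super-fiber-functor theorem of \cite{MR1944506}. My approach would be to present $\cB\simeq\Mod_A$ for a finite-dimensional $\bR$-algebra $A$, restrict to the full rigid symmetric monoidal subcategory $\cB^{\mathrm{dual}}$ of compact projective (equivalently dualizable) objects---which sits inside the finite-dimensional module category $\Mod_A$ and hence automatically inherits a ``subexponential growth'' bound---invoke Deligne's theorem to produce a super fiber functor $\cB^{\mathrm{dual}}\to\SuperVect_\bC^{\mathrm{fd}}$, and extend cocontinuously to obtain the desired functor $\cB\to\SuperVect_\bC$. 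Reworking Deligne's abelian-tensor-category argument in this finite-dimensional categorified-algebra framework is the technical heart of the theorem, and I expect it to occupy the bulk of the proof.

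For uniqueness, suppose $\cC$ is another finite-dimensional algebraically closed categorified field. Algebraic closure, applied in both directions, produces $1$-morphisms $F:\SuperVect_\bC\to\cC$ and $G:\cC\to\SuperVect_\bC$. Then $GF$ is a symmetric monoidal cocontinuous endofunctor of $\SuperVect_\bC$, which by the categorified-field property of $\SuperVect_\bC$ is faithful and injective on isomorphism classes; it sends $\bC$ to $\bC$ and $\bJ$ to an invertible object not isomorphic to $\bC$, hence necessarily to $\bJ$. So $GF$ is an autoequivalence of $\SuperVect_\bC$ (acting on $\End(\bC)=\bC$ as either the identity or complex conjugation); a parallel argument using the categorified-field property of $\cC$ makes $FG$ an autoequivalence of $\cC$; hence $F$ is an equivalence.
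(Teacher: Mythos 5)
The first, second, and fourth parts of your outline (finite-dimensionality, the categorified-field property, uniqueness) agree closely with the paper's own argument: the self-braiding on $F(\bJ)$ distinguishing it from $\unit_\cD$, the compactness-of-$\unit$ count for injectivity on isomorphism classes, and the fact that $GF$ and $FG$ are autoequivalences are exactly the moves the paper makes.

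The algebraic-closure step is where your plan breaks down. You propose to pass to the full rigid subcategory $\cB^{\mathrm{dual}}$ of dualizable objects, apply Deligne's super-fiber-functor theorem there, and extend cocontinuously. But Deligne's theorem in \cite{MR1944506} is stated for \emph{abelian} rigid symmetric monoidal categories over an algebraically closed field $k$ of characteristic $0$ with $\End(\unit) = k$, and $\cB^{\mathrm{dual}}$ typically satisfies none of these. It need not be abelian: for $\cB = (\Mod_{\bR[x]/x^2},\otimes_{\bR[x]/x^2})$, a legitimate finite-dimensional categorified commutative $\bR$-algebra, the dualizable objects are the free modules, which are not closed under subobjects. (Note also that ``compact projective (equivalently dualizable)'' misquotes \cref{defn.finite dimensional}, which only asserts the forward implication.) Moreover $\End_\cB(\unit_\cB)$ is an arbitrary finite-dimensional commutative $\bR$-algebra, not $\bR$ or $\bC$, and in any case the setting is $\bR$-linear rather than $\bC$-linear. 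The paper handles all of this by first base-changing along $\cC \to \cC\boxtimes_\bR \SuperVect_\bC$ (simultaneously enforcing $\bC$-linearity and the presence of an odd line) and then \emph{re-executing} Deligne's algebra-construction inside the locally presentable category, quoting only those lemmas of \cite{MR1944506} --- in particular his Lemma 2.8 --- that do not presuppose rigidity or abelianness. The geometric-growth bound on $\operatorname{length}(P^{\otimes n})$ that forces some Schur functor to vanish is likewise not ``automatic from sitting inside $\Mod_A$''; the paper derives it from subadditivity of length under extensions together with finiteness of the set of simple $A$-modules. Finally, the argument terminates in $\cat{SuperMod}_R$ for some commutative superalgebra $R$, and a further step --- using that $\End_\cC(P)$ is finite-dimensional over $\bC$ --- is needed to land in $\SuperVect_\bC$ rather than $\SuperVect_\bL$ for a larger field. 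In short, the technical heart you defer cannot be discharged by citing Deligne's conclusion; it requires reworking his proof in a setting where his hypotheses fail.
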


\begin{proof}
To show that $\SuperVect_\bC$ is a categorified field, one proceeds as in the proof of \cref{fields are categorified fields}.  The additional observation needed is the following.  Let $F : \SuperVect_\bC \to \cC$ be a morphism of finite-dimensional categorified commutative $\bR$-algebras, and let $\bJ_\cC = F(\bJ)$ denote the image of the odd line.  Then $\bJ_\cC$ has self-braiding $-1$ whereas $\unit_\cC$ has self-braiding $+1$, from which it follows that $\unit_\bC$ and $\bJ_\bC$ are not isomorphic.  On the other hand, tensoring with $\bJ_\cC$ induces an autoequivalence of $\cC$, and so $\bJ_\cC$, like $\unit_\cC$, is compact and non-zero.  From these facts, it follows that $F$ is faithful and that one can recover the isomorphism type of an object $V = \unit^{\oplus \alpha} \oplus \bJ^{\oplus \beta} \in \SuperVect_\bC$ from the vector space $\hom_\cC(\unit_\cC \oplus \bJ_\cC,F(V))$.

  We next verify that, assuming $\SuperVect_\bC$ is algebraically closed, it is the unique such category.  Suppose that $\cC$ is another algebraically closed finite-dimensional categorified field over $\bR$.  Then there are symmetric monoidal functors $\cC \to \SuperVect_\bC$ and $\SuperVect_\bC \to \cC$, both faithful and injective on objects.  Their composition $\SuperVect_\bC \to \cC \to \SuperVect_\bC$ is full and essentially surjective as it necessarily takes $\unit \mapsto \unit$ and $\bJ\mapsto\bJ$.  Thus the functor $\cC \to \SuperVect_\bC$ is essentially surjective and full (fullness uses that $\cC\to \SuperVect_\bC$ is injective on objects).

  Finally, we prove that $\SuperVect_\bC$ is algebraically closed.  
Let $\cC$ be a non-zero finite-dimensional categorified commutative $\bR$-algebra. We must construct a 1-morphism $\cC \to \SuperVect_\bC$.  By including $\cC \to \cC \boxtimes_\bR \SuperVect_\bC$ if necessary, we may assume without loss of generality that $\cC$  receives a 1-morphism $\SuperVect_\bC \to \cC$.  As above, we will denote the images under this 1-morphism of $\unit,\bJ \in \SuperVect_\bC$ by $\unit_\cC,\bJ_\cC$.

  We will need the following notion.
    Let $\lambda$ be a partition of $n \in \bN$ and $V_\lambda$ the corresponding irrep of the symmetric group $\bS_n$.  Recall that, for any $\bC$-linear symmetric monoidal category $(\cC,\otimes,\dots)$ containing direct sums and splittings of idempotents,  the \define{Schur functor} $S_\lambda : \cC \to \cC$ is the (nonlinear) functor $X \mapsto (X^{\otimes n}\otimes V_\lambda)_{\bS_n}$, where $\bS_n$ acts on $X^{\otimes n}$ via the symmetry on $\cC$, and $(-)_{\bS_n}$ denotes the functor of coinvariants.  
    $S_\lambda$ is natural for symmetric monoidal $\bC$-linear functors.

Choose a strong projective generator $P \in \cC$.  (In the notation of \cref{defn.finite dimensional}, one can for example take $P = \bigoplus_i C_i$.)  
Then the underlying category of $\cC$ is equivalent to the category of $\End_\cC(P)$-modules, and the subcategory of compact objects of $\cC$ is the abelian category of finite-dimensional $\End_\cC(P)$-modules.  In particular, every compact object has finite length.  
As shown in the proof of \cref{fields are categorified fields}, $\unit_\cC$ is compact, from which it follows that all dualizable objects are compact.  Since $P$ is dualizable by assumption, $P^{\otimes n}$ is also dualizable and hence compact.  

    We claim that there exists some $\lambda$ such that $S_\lambda(P) = 0$.  Indeed, suppose that there were not.  Then, as in \cite[Paragraph 1.20]{MR1944506}, the isomorphism $P^{\otimes n} \cong \bigoplus_{|\lambda| = n} V_\lambda \otimes S_\lambda(P)$, would imply that 
  $$\operatorname{length}(P^{\otimes n}) \geq \sum_{|\lambda| = n} \dim V_\lambda \geq \left( \sum (\dim V_\lambda)^2\right)^{1/2} = (n!)^{1/2},$$
  which grows more quickly than any geometric series.  Suppose that $X,Y,M \in \cC$ are compact objects and $E$ is an extension of $X$ by $Y$.  Then, as in \cite[Lemma 4.8]{MR1944506}, right exactness of the tensor functor implies:
  $$\operatorname{length}(E\otimes M) \leq \operatorname{length}(E\otimes X) + \operatorname{length}(E\otimes Y).$$  From this, the lengths of the tensor products of simple objects, and the fact that finite-dimensional algebras admit  only finitely many simple modules, one can bound the growth of $\operatorname{length}(P^{\otimes n})$ by some geometric series.

Given a commutative algebra object $A \in \cC$, let $\unit_A$ and $\bJ_A$ denote the images of $\unit_\cC$ and $\bJ_\cC$ under the extension-of-scalars functor $\otimes A : \cC \to \{\text{$A$-modules in $\cC$}\}$.  Note $\otimes_A$ makes $\{\text{$A$-modules in $\cC$}\}$ into a categorified commutative $\bR$-algebra.   Following \cite[Proposition 2.9]{MR1944506}, we will find a non-zero commutative algebra $A \in \cC$ such that $P \otimes A \cong \unit_A^{\oplus r} \oplus \bJ_A^{\oplus s}$ for some $r,s\in \bN$. 
Supposing we have done so, let $R$ be the commutative superalgebra whose even part is $\End(\unit_A)\cong\End(\bJ_A)$ and whose odd part is $\hom(\bJ_A,\unit_A)\cong \hom(\unit_A,\bJ_A)$, i.e.\ the ``endomorphism superalgebra'' of $\unit_A$.  Since $P$ is a compact projective generator of $\cC$ and $P \otimes A \cong \unit_A^{\oplus r} \oplus \bJ_A^{\oplus s}$, the symmetric monoidal category $\{\text{$A$-modules in $\cC$}\}$ is strongly generated as a category by $\unit_A$ and $\bJ_A$, and so is equivalent to the category $\cat{SuperMod}_R$ of $R$-modules in $\SuperVect_\bC$; this equivalence is then manifestly symmetric monoidal.

   Suppose by induction that we have found a non-zero commutative algebra object $A \in \cC$ such that $P \otimes A \cong \unit_A^{\oplus r'} \oplus \bJ_A^{\oplus s'} \oplus P'$ for some $P' \in \{\text{$A$-modules in $\cC$}\}$.  Then $P'$ is a summand of a dualizable object and hence dualizable.  If $\Sym^n P' = \bigwedge^n P' = 0$ for all sufficiently large $n$, then $P' = 0$ by \cite[Corollary 1.7 and Lemma 1.17]{MR1944506}.  If on the other hand $\Sym^n P' \neq 0$ for all $n$ (resp.\ $\bigwedge^n P' \neq 0$ for all $n$), then \cite[Lemma 2.8]{MR1944506}, which does not assume the category to be rigid, constructs a non-zero $A$-algebra $A'$ such that $P' \otimes A' \cong \unit_{A'} \oplus P''$ (resp.\ $P' \otimes A' \cong \bJ_{A'} \oplus P''$).  We iterate, continually splitting off $\unit_A$s and $\bJ_A$s.  The iteration must terminate as otherwise $S_\lambda(P) \neq 0$ for all $\lambda$ \cite[Corollary 1.9]{MR1944506}.

   Thus we have found a non-zero commutative superalgebra $R$ and a morphism $\cC \to \cat{SuperMod}_R$  of categorified commutative $\bR$-algebras.  We can choose a field $\bL$ that receives a map from $R$ and extend scalars further so as to build a linear cocontinuous symmetric monoidal functor $\cC \to \SuperVect_\bL$.    Moreover, since $\End_\cC(P)$ is finite-dimensional over $\bC$, the functor $\cC \to \SuperVect_\bL$ factors through $\SuperVect_{\bK}$ for some intermediate field $\bC \subseteq \bK \subseteq \bL$ which is finite-dimensional over $\bC$.  But since $\bC$ is algebraically closed, the only such field is $\bK = \bC$.
\end{proof}

\begin{remark}
  The fact that $\SuperVect_\bC$ is algebraically closed explains its central role in ``categorified'' representation theory \cite{MR2165457,MR3177367}.
\end{remark}

\begin{remark}\label{remark.char p version}
  The categorified algebraic closure of $\bar{\bF}_p$ is not yet known.  When $p>2$, \cite{Ostrik2015} conjectures that the answer is a characteristic-$p$ version of ``quantum $\mathrm{SU}(2)$ at level $p-2$'' called $\cat{Ver}_p$.  Etingof has conjectured that the categorified algebraic closure of $\bar{\bF}_2$ is a non-semisimple characteristic-$2$ version of $\SuperVect$, described by the triangular Hopf algebra $\bar{\bF}_2[x]/(x^2)$ with $\Delta(x) = 1\otimes x + x\otimes 1$ and $R$-matrix $R = 1\otimes 1 + x\otimes x$.
\end{remark}

We now use the algebraic closure $\Vect_\bR \to \SuperVect_\bC$ to categorify the notion of ``torsor over $\Spec(\bR)$.'' We first show that $\Vect_\bR \to \SuperVect_\bC$ is ``Galois.''  Let $(\cC,\otimes_\cC,\dots)$ be a categorified commutative $\bR$-algebra.  A \define{$\cC$-module} is an $\bR$-linear locally presentable category $\cV \in \Pres_\bR$ together with an action of $\cC$ on $\cV$ which is cocontinuous in each variable.  A morphism of finite-dimensional $\cC$-modules is a cocontinuous strong module functor.  Since $\cC$ is commutative, the bicategory $\MOD_\cC$ of finite-dimensional $\cC$-modules carries a symmetric monoidal structure $\boxtimes_\cC$.  See for example Definitions 2.1, 2.6, and 3.2 of \cite{DSPS2}.

  Let $(\cC,\otimes_\cC,\dots) \to (\cD,\otimes_\cD,\dots)$ be a 1-morphism of categorified commutative $\bR$-algebras.  Such a map makes $(\cD,\otimes_\cD,\dots)$ into a commutative algebra object in $\MOD_\cC$.  Let $\Aut = \Aut_\cC(\cD)$ denote the group of $\cC$-linear symmetric monoidal automorphisms of $\cD$.  
We will denote by $\MOD_{\cD \rtimes \Aut}$  the bicategory of $\cD$-modules equipped with a $\cC$-linear $\Aut$-action such that the $\cD$-action is $\Aut$-equivariant.  It is a symmetric monoidal bicategory with symmetric monoidal structure given by the tensor product of underlying $\cD$-modules.  The scalar extension functor $\boxtimes_\cC \cD : \MOD_\cC \to \MOD_\cD$ factors canonically through $\MOD_{\cD \rtimes \Aut}$:
$$ \MOD_\cC \overset{\boxtimes_\cC \cD}\longrightarrow \MOD_{\cD \rtimes \Aut} \overset{\text{forget}}\longrightarrow \MOD_\cD$$
The functor $\boxtimes_\cC \cD : \MOD_\cC \to \MOD_{\cD \rtimes \Aut}$ has a  right adjoint $(-)^{\Aut} :  \MOD_{\cD \rtimes \Aut} \to  \MOD_\cC$ given by taking the $\Aut$-fixed points of a module $\cV \in  \MOD_{\cD \rtimes \Aut}$.

\begin{definition}
  An extension of categorified fields $(\cC,\otimes_\cC,\dots) \to (\cD,\otimes_\cD,\dots)$ is \define{Galois} if 
  $$ \boxtimes_\cC \cD :\MOD_\cC \leftrightarrows \MOD_{\cD \rtimes \Aut} :(-)^{\Aut}$$ is an equivalence of bicategories.
\end{definition}

We will prove:
\begin{theorem} \label{supervect is Galois}
  The extension $\Vect_\bR \to \SuperVect_\bC$ is Galois.
\end{theorem}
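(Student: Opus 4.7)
The strategy is to prove the adjunction is an equivalence by imitating the classical Galois descent argument in the 2-categorical setting of $\Pres_\bR$.

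I would first identify the 2-group $\Aut = \Aut_{\Vect_\bR}(\SuperVect_\bC)$.  Any symmetric monoidal $\bR$-linear autoequivalence $F$ preserves $\unit$ up to isomorphism and must send $\bJ$ to an object with tensor-square $\unit$ that is not isomorphic to $\unit$ (by the injectivity on isomorphism classes from \cref{thm.Deligne}); the only such object is $\bJ$ itself.  Hence $F$ is trivial on isomorphism classes and is determined by its action on $\End(\unit) = \bC$, which must be an $\bR$-algebra automorphism.  So $\pi_0 \Aut = \bZ/2$, generated by complex conjugation.  A monoidal natural automorphism $\sigma$ of $\id$ satisfies $\sigma_\unit = 1$ and $\sigma_\bJ^2 = 1 \in \End(\unit) = \bC$, so $\sigma_\bJ = \pm 1$; hence $\pi_1 \Aut = \bZ/2$, generated by the parity operator $(-1)^F$.

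The key technical step is a torsor trivialization: an equivalence of categorified commutative $\bR$-algebras
$$ \SuperVect_\bC \boxtimes_\bR \SuperVect_\bC \;\simeq\; \mathrm{Fun}(\Aut, \SuperVect_\bC),$$
where the right-hand side is the 2-limit over the 2-groupoid underlying $\Aut$ of the constant diagram at $\SuperVect_\bC$, and the comparison functor is assembled from the multiplication map together with the $\Aut$-action on the second tensor factor.  Both sides have 8 simple objects (on the left, $(\bC \times \bC) \otimes_\bR (\bC \times \bC) \cong \bC^8$; on the right, the two objects of $\pi_0 \Aut$ times the two eigenlines under $\pi_1 \Aut$ times the two super-degrees), and I would check that the comparison is an equivalence by enumerating these simples and matching the symmetric monoidal data, in particular that the self-braiding of $\bJ \boxtimes \bJ$ matches the $\pi_1 \Aut$-twist on the right-hand side.

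Given the torsor trivialization, the final step is formal.  For the counit $\cW^\Aut \boxtimes_\bR \SuperVect_\bC \to \cW$: it is an equivalence when $\cW = \SuperVect_\bC$ with its tautological $\Aut$-action (this is precisely the torsor trivialization), and extends to arbitrary $\cW \in \MOD_{\SuperVect_\bC \rtimes \Aut}$ by the 2-categorical Barr--Beck theorem, since every such $\cW$ is a colimit of free modules and both functors preserve the requisite colimits.  For the unit $\cV \to (\cV \boxtimes_\bR \SuperVect_\bC)^\Aut$: $\boxtimes_\bR \SuperVect_\bC$ is conservative (since $\SuperVect_\bC^\Aut = \Vect_\bR$ recovers $\cV$ from its base change with its $\Aut$-action), and after tensoring with $\SuperVect_\bC$ the counit case forces the unit to be an equivalence as well.

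The main obstacle is the torsor trivialization.  Although the underlying equivalence of $\bR$-linear categories is elementary, tracking the 2-categorical symmetric monoidal coherences is delicate: the 2-group $\Aut$ has a Postnikov $k$-invariant (matching that of $\pi_{\leq 1}\rO(\infty)$ advertised in the introduction), and this must line up with the sign in the self-symmetry of $\bJ \boxtimes \bJ \in \SuperVect_\bC \boxtimes_\bR \SuperVect_\bC$.  This coherence is precisely what prevents the Galois extension $\Vect_\bR \to \SuperVect_\bC$ from splitting as a direct product of simpler extensions.
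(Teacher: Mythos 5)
Your proposal follows the classical Galois--descent template: compute $\Aut$, establish the torsor trivialization $\SuperVect_\bC \boxtimes_\bR \SuperVect_\bC \simeq \maps(\Aut,\SuperVect_\bC)$, then invoke monadicity/conservativity to propagate the counit and unit equivalences. The paper instead argues directly: it writes down the adjunction, gives an explicit description of $\cV \boxtimes_\bR \SuperVect_\bC$ with its $\Aut$-action for arbitrary $\cV \in \Pres_\bR$, and computes both composites $(-)^{\Aut} \circ (\boxtimes_\bR \SuperVect_\bC)$ and $(\boxtimes_\bR \SuperVect_\bC) \circ (-)^{\Aut}$ by hand, reducing each to ordinary Galois descent for $\bR \to \bC$ (for the second composite, the key step is splitting a $\SuperVect_\bC$-module $\cV$ into ``purely even'' and ``purely odd'' pieces via the natural automorphism $\theta$ coming from $(-1)^f$, then observing $\bJ\otimes$ interchanges them). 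Your route is the more conceptual one --- it buys reusability for future categorified Galois extensions and makes the analogy with $\bC\otimes_\bR\bC \cong \bC\times\bC$ explicit --- but it leans on more machinery; the paper's argument is self-contained and avoids any appeal to a bicategorical Barr--Beck theorem.

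Two points in your sketch should be tightened before you could call it a proof. First, the parenthetical ``this is precisely the torsor trivialization'' is a mislabel: when $\cW = \SuperVect_\bC$ with the tautological action, $\cW^{\Aut} \simeq \Vect_\bR$ and the counit $\Vect_\bR \boxtimes_\bR \SuperVect_\bC \to \SuperVect_\bC$ is just the identity, while the torsor trivialization is instead the counit evaluated on the \emph{free} object $\maps(\Aut,\SuperVect_\bC)$ (equivalently, it is what verifies the counit on objects in the image of the left adjoint to the forgetful functor $\MOD_{\SuperVect_\bC \rtimes \Aut} \to \MOD_{\SuperVect_\bC}$). To run the descent argument you must show the counit is an equivalence on all free objects $\cV \boxtimes_\bR \SuperVect_\bC$, which beyond the trivialization needs a projection-formula step $(\cV \boxtimes_\bR \SuperVect_\bC)^{\Aut} \simeq \cV \boxtimes_\bR (\SuperVect_\bC)^{\Aut}$; the paper addresses this head-on by describing $(\cV \boxtimes_\bR \SuperVect_\bC)^{\Aut}$ directly. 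Second, your closing remark misstates the role of the Postnikov $k$-invariant: the paper's \cref{categorified abs Galois group of R} proves the extension is split, $\Aut \cong \bZ/2 \times \rB(\bZ/2)$, with \emph{trivial} $k$-invariant (matching the vanishing $k$-invariant of $\pi_{\leq 1}\rO(\infty)$ noted in \cref{canonical splitting}), and correspondingly $\SuperVect_\bC \simeq \Vect_\bC \boxtimes_\bR \SuperVect_\bR$ does split; the subtlety in the self-braiding of $\bJ \boxtimes \bJ$ is real and must be tracked, but it is not an obstruction to splitting.
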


\begin{remark}
  For comparison, the extensions $\Vect_{\bar{\bF}_p} \to \cat{Ver}_p$ and $\Vect_{\bar{\bF}_2} \to \cat{Rep}(\bF[x]/(x^2))$ from \cref{remark.char p version} are not Galois (except for when $p=3)$.  Indeed, the latter is ``purely inseparable,'' and the maximal ``separable'' subextension of $\Vect_{\bar{\bF}_p} \to \cat{Ver}_p$ is $\SuperVect_{\bar{\bF}_p}$.
\end{remark}

We henceforth write $\cat{Gal}(\bR) = \Aut_\bR(\SuperVect_\bC)$, and call it the \define{categorified absolute Galois group of $\bR$}.  We first calculate it:
\begin{lemma} \label{categorified abs Galois group of R}
  The categorified absolute Galois group of $\bR$ is $\bZ/2 \times \rB(\bZ/2)$.
\end{lemma}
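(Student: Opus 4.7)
The plan is to compute $\cat{Gal}(\bR) = \Aut_\bR^\otimes(\SuperVect_\bC)$ as a Picard 2-groupoid (i.e.\ a 2-group) by separately identifying $\pi_0$, $\pi_1$, the $\pi_0$-action on $\pi_1$, and the $k$-invariant, and then checking that the resulting 2-group is $\bZ/2 \times \rB(\bZ/2)$.

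For $\pi_0$, any $\bR$-linear symmetric monoidal autoequivalence $F$ of $\SuperVect_\bC$ must permute the two isomorphism classes of invertible simples $\unit$ and $\bJ$. Since their self-braidings are $+1$ and $-1$ respectively and $F$ preserves the symmetry, $F$ must fix both classes. Choosing an isomorphism $F(\unit) \simeq \unit$, the induced map on $\End(\unit) = \bC$ is an $\bR$-algebra automorphism, so either $\id_\bC$ or complex conjugation, yielding a homomorphism $\pi_0 \cat{Gal}(\bR) \to \Gal(\bC/\bR) = \bZ/2$. Surjectivity is witnessed by the honest complex conjugation functor $V \mapsto \bar V$. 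For injectivity I would argue that any $F$ fixing $\bC$ pointwise is $\bC$-linear, pick an iso $F(\bJ) \simeq \bJ$, and exploit the coherence of the monoidal constraint $F(\bJ) \otimes F(\bJ) \to F(\bJ \otimes \bJ) = F(\unit) \simeq \unit$ together with $\bJ \otimes \bJ \cong \unit$ to rescale $F$ to $\id$ by an explicit natural monoidal transformation. I expect this rigidification to be the main obstacle, as it is the step requiring the most careful chase of coherence data.

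For $\pi_1$, a monoidal natural automorphism $\eta$ of $\id_{\SuperVect_\bC}$ is determined by $\eta_\bJ \in \End(\bJ)^\times = \bC^\times$, and the monoidality constraint $\eta_{\bJ \otimes \bJ} = \eta_\bJ \otimes \eta_\bJ$ together with $\bJ \otimes \bJ \cong \unit$ and $\eta_\unit = 1$ forces $\eta_\bJ^2 = 1$, hence $\eta_\bJ \in \{\pm 1\}$; compatibility with the symmetry is automatic for scalar transformations. Thus $\pi_1 \cat{Gal}(\bR) = \bZ/2$, generated by the fermion parity transformation $v \mapsto (-1)^{|v|}v$.

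For the extension, the $\pi_0$-action on $\pi_1$ is trivial because complex conjugation fixes $\pm 1 \in \bC$. To verify the $k$-invariant vanishes, I would exhibit a strict involutive lift of complex conjugation: the functor $V \mapsto \bar V$ (with $\bar V$ the same real supervector space but with conjugate $\bC$-action) satisfies $\bar{\bar V} = V$ canonically, so its square is equal on the nose to $\id_{\SuperVect_\bC}$ rather than fermion parity. Therefore the 2-group $\cat{Gal}(\bR)$ splits as $\bZ/2 \times \rB(\bZ/2)$ as claimed.
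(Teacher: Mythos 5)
Your proof is correct and follows essentially the same outline as the paper: identify $\pi_0$ with $\Gal(\bC/\bR)=\bZ/2$ by examining the induced automorphism of $\End(\unit)=\bC$ and using $2$-divisibility of $\bC^\times$ to normalize the monoidal coherence datum on $\bJ$; identify $\pi_1$ with $\{\pm 1\}=\ker(\bC^\times\xrightarrow{\alpha\mapsto\alpha^2}\bC^\times)$; observe the $\pi_0$-action on $\pi_1$ is trivial because conjugation fixes $\pm1$; and then rule out the nontrivial class in $\H^3(\bZ/2;\bZ/2)\cong\bZ/2$.

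Where you improve on the paper is the last step. The paper asserts ``direct calculation verifies that it is the trivial extension'' (and appends a somewhat circular remark about split extensions) without actually exhibiting the calculation. Your observation that $V\mapsto\bar V$ is a \emph{strictly} involutive, strictly symmetric monoidal lift of the generator of $\pi_0$ --- so that $\bZ/2\to\cat{Gal}(\bR)$ is a genuine $2$-group homomorphism sectioning the projection to $\pi_0$, hence the $k$-invariant is a coboundary --- is exactly the content of that ``direct calculation,'' stated cleanly. The one point you should make explicit when writing this up is that the equality $C\circ C=\id$ holds not merely as plain functors but compatibly with the (identity) monoidal and symmetric structure maps on $C$, so that the section really is a homomorphism of $2$-groups and not just of underlying $1$-groups; this is routine once one takes $\bar V$ to be the same underlying $\bR$-super-vector space with conjugated scalar action, but it is the crux of why ``on the nose'' suffices.

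One small remark on your $\pi_0$ argument: you flag the rigidification as ``the main obstacle,'' but it is lighter than you suggest. After trivializing $F(\unit)\cong\unit$ and choosing any $F(\bJ)\cong\bJ$, the only remaining datum is the coherence isomorphism $\phi\in\bC^\times$, and a monoidal natural isomorphism rescaling $\bJ$ by $\alpha$ changes $\phi$ by $\alpha^2$. Since $\bC^\times$ is $2$-divisible, every $\phi$ is normalizable, and no further coherence chase is needed; this is exactly the paper's computation $\Aut(F)\cong\ker(\alpha\mapsto\alpha^2)$ read in reverse.
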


\begin{proof}
  Since categorified commutative $\bR$-algebras form a bicategory,  $\Aut_{\bR}(\SuperVect_\bC)$ is a group object in 
  homotopy 1-types.
   A symmetric monoidal autoequivalence of $\SuperVect_\bC$ consists of a functor $F : \SuperVect_\bC \to \SuperVect_\bC$ and  some compatible isomorphisms.  We can canonically trivialize the isomorphisms $F(\unit) \cong \unit$ and $F(\unit \otimes X) \cong \unit \otimes F(X)$, and so the only remaining datum is an isomorphism $\phi: F(\bJ \otimes \bJ) \isom F(\bJ) \otimes F(\bJ)$, of which there are $\bC^\times$-many.  
  The functor $F$ admits symmetric monoidal natural automorphisms that are trivial on $\unit$ but act on $\bJ$ by $\alpha \in \bC^\times$.  Under such an automorphism, the map $\phi$ transforms to $\phi\alpha^2$.  Thus we find that $\Aut(F) \cong \operatorname{ker} \bigl( \bC^\times \overset{\alpha \mapsto \alpha^2}\longrightarrow \bC^\times \bigr) \cong \bZ/2$.
    
   $F$ induces an automorphism of $\bC = \End(\unit)$.  If this is the identity, then $F$ is  monoidally equivalent to the identity; otherwise, $F$ is monoidally equivalent to extension of scalars along the complex conjugation map $\bC \to \bC$.  Thus $\pi_0(\Aut_{\bR}(\SuperVect_\bC)) \cong \bZ/2$, and the above computation shows that each connected component is a $\rB(\bZ/2)$.  These fit together via the Galois action of $\bZ/2$ on $\operatorname{ker} \bigl( \bC^\times \overset{\alpha \mapsto \alpha^2}\longrightarrow \bC^\times \bigr)$, and so $\Aut_{\bR}(\SuperVect_\bC)$ is a split extension $\bZ/2 \ltimes \rB(\bZ/2)$.  Direct calculation verifies that it is the trivial extension; one can also show via standard techniques that there are no nontrivial split extensions of $\bZ/2$ by $\rB(\bZ/2)$.
\end{proof}

  The nontrivial element in $\pi_1\rB(\bZ/2)$ acts on $\SuperVect_\bC$ as the natural transformation of the identity commonly called ``$(-1)^f$,'' where $f$ stands for ``fermion number.''

\begin{proof}[Proof of \cref{supervect is Galois}]
  The bicategory $\MOD_{\Vect_\bR}$ is nothing but $\Pres_\bR$ itself.  Given $\cV \in \Pres_\bR$, its image under $\boxtimes_\bR \SuperVect_\bC$ in $\MOD_{\SuperVect_\bC\rtimes \Aut}$ can be described as follows.  The objects of $\cV \boxtimes_\bR \SuperVect_\bC$ are formal direct sums $V_0 \oplus \bJ V_1$ where $V_0$ and $V_1$ are objects of $\cV$.  The morphisms are $\hom(V_0\oplus\bJ V_1,W_0 \oplus \bJ W_1) = \hom_\cV(V_0,W_0) \otimes \bC \oplus \hom_\cV(V_1,W_1) \otimes \bC$.  $\SuperVect_\bC$ acts on $\cV \boxtimes_\bR \SuperVect_\bC$ in the obvious way.  The action of $\Aut_{\bR}(\SuperVect_\bC) = \bZ/2 \times \rB(\bZ/2)$ is via complex conjugation and $(-1)^f$, just as it is on $\SuperVect_\bC$.  The fixed-points of this action are therefore the ``purely even'' objects --- those of the form $V_0 \oplus \bJ 0$ --- equipped with a $\bC$-antilinear involutive automorphism of $V_0$.  
  The fact that $\bR \to \bC$ is Galois then implies that the composition $(-)^{\Aut} \circ (\boxtimes_\bR \SuperVect_\bC)$ is equivalent to the identity.

  It remains to verify that $(\boxtimes_\bR \SuperVect_\bC) \circ (-)^{\Aut}$ is equivalent to the identity.  Let $\cV$ be a $\SuperVect_\bC$-module.  Then $\cV$ comes equipped with an endofunctor $\bJ \otimes : \cV \to \cV$, given by the action of the odd line $\bJ \in \SuperVect_\bC$, satisfying $(\bJ \otimes)^2 \cong \id$, and for each $X \in \Vect_\bC$ an endofunctor $X \otimes : \cV \to \cV$.  The data of an $\Aut$-action on $\cV$ compatible with these actions consists of: an endofunctor $V \mapsto \bar V$, squaring to the identity, such that for $X \in \Vect_\bC$, $\overline{X\otimes V} \cong \bar X \otimes \bar V$; and a natural automorphism $\theta$ of the identity functor, squaring to the identity, such that $\theta_{\bJ \otimes V} = -\id_\bJ \otimes \theta_V$.  Let's say that $V \in \cV$ is \define{purely even} if $\theta_V = +1$ and \define{purely odd} if $\theta_V = -1$.  Then every $V\in \cV$ canonically decomposes into a direct sum $V = V_0 \oplus V_1$ of purely even and purely odd submodules.  The $\Aut$-fixed points are the purely even submodules $V = V_0$ equipped with isomorphisms $V \cong \bar V$.  
  Note that $\bJ \otimes$ interchanges purely even and purely odd objects, and so
   \begin{align*} \cV & \simeq \{\text{purely even objects in }\cV\} \boxplus \{\text{purely odd objects in }\cV\} \\ & \simeq \{\text{purely even objects in }\cV\} \boxtimes_\bC \SuperVect_\bC.\end{align*}
   Finally, since $\bR \to \bC$ is Galois, restricting from $\{\text{purely even objects in }\cV\}$ to those with $V \cong \bar V$ gives an $\bR$-linear category which tensors with $\bC$ to $\{\text{purely even objects in }\cV\}$.
\end{proof}

\begin{remark}\label{supervectH}
  \Cref{supervect is Galois} implies that the full list of categorifield field extensions of $\bR$ consists of the familiar categories $\Vect_\bR$, $\Vect_\bC$, $\SuperVect_\bR$, and $\SuperVect_\bC$, and a less-familiar category that deserves to be called $\SuperVect_\bH$.  The first four are the fixed-points for the obvious subgroups $\bZ/2 \times \rB(\bZ/2)$, $\rB(\bZ/2)$, $\bZ/2$, and $\{1\}$ of the categorified Galois group $\cat{Gal}(\bR)$.  The last is the fixed-points for the non-obvious inclusion $\bZ/2 \hookrightarrow \bZ/2 \times \rB(\bZ/2)$ which is the identity on the first component and the nontrivial map $\bZ/2 \to \rB(\bZ/2)$ on the second component (corresponding to the nontrivial class in $\H^2(\rB(\bZ/2);\bZ/2)$).  As a category, $\SuperVect_\bH \simeq \Vect_\bR \boxplus \Mod_\bH$, hence the name.  The monoidal structure involves the Morita equivalence $\bH \otimes_\bR \bH \simeq \bR$.
\end{remark}

We can now categorify the usual classification of torsors in terms of Galois actions.

\begin{definition} \label{defn.torsor}
  Let $G$ be a finite Picard groupoid.  A \define{categorified $G$-torsor over $\bR$} is a non-zero $G$-equivariant categorified commutative $\bR$-algebra $\cT$ such that 
  the functor
  $$ \cT \boxtimes_\bR \cT \to \maps(G,\cT), \quad t_1 \boxtimes t_2 \mapsto \bigl( g \mapsto (g\triangleright t_1) \otimes t_2 \bigr) $$
  is an equivalence, where $\maps(G,\cT)$ denotes the categorified commutative algebra of $\cT$-valued functors on the underlying groupoid of $G$,  $\triangleright$ denotes the action of $G$ on $\cT$, and $\otimes$ denotes the multiplication in $\cT$.
\end{definition}

\begin{proposition}\label{categorified torsor statement}
  Let $\cat{Gal}(\bR) = \Aut_\bR(\SuperVect_\bC)$ denote the {categorified absolute Galois group} of $\bR$.  For each finite categorified group $G$, there is a natural-in-$G$ equivalence
  $$ \{\text{categorified $G$-torsors over $\bR$}\} \simeq \maps(\rB\operatorname{\cat{Gal}}(\bR),\rB G).$$
\end{proposition}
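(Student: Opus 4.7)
The plan is to follow the classical Galois-theoretic classification of torsors, suitably categorified.  Classically, $G$-torsors over a field $k$ are classified by $\maps(\rB\Gal(k^{\mathrm{sep}}/k),\rB G)$; one trivializes the torsor after base change to $k^{\mathrm{sep}}$ and interprets the resulting Galois descent datum as a homomorphism into $G$.  I would imitate this argument, using \cref{supervect is Galois} in place of classical Galois descent and \cref{thm.Deligne} (algebraic closure of $\SuperVect_\bC$) in place of the analogous classical fact.

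First I would show that every categorified $G$-torsor $\cT$ becomes trivial after extending scalars to $\SuperVect_\bC$.  Set $\cT_\bC := \cT \boxtimes_\bR \SuperVect_\bC$.  The torsor equation base-changes to a $G$-equivariant equivalence of categorified commutative $\cT_\bC$-algebras
$$\cT_\bC \boxtimes_{\SuperVect_\bC} \cT_\bC \simeq \maps(G,\cT_\bC).$$
Since $G$ is finite, this forces $\cT_\bC$ to be non-zero and finite-dimensional as a categorified commutative $\SuperVect_\bC$-algebra (compare $\cT_\bC$-ranks on both sides), so algebraic closure supplies a 1-morphism $p : \cT_\bC \to \SuperVect_\bC$ of categorified commutative $\SuperVect_\bC$-algebras.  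Tensoring the base-changed torsor equation along $p$ in the second factor yields a $G$-equivariant equivalence $\cT_\bC \simeq \maps(G,\SuperVect_\bC)$: the ``trivial'' $G$-torsor over $\SuperVect_\bC$.

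Second I would apply \cref{supervect is Galois}.  Because the Galois equivalence is symmetric monoidal and respects $G$-equivariance, categorified $G$-torsors over $\bR$ correspond to $\cat{Gal}(\bR)$-equivariant enhancements of $\maps(G,\SuperVect_\bC)$ as a $G$-equivariant categorified commutative $\SuperVect_\bC$-algebra.  Such an enhancement is a homomorphism of Picard 2-groupoids from $\cat{Gal}(\bR)$ into the 2-group of $G$-equivariant symmetric monoidal $\SuperVect_\bC$-linear self-equivalences of $\maps(G,\SuperVect_\bC)$.  A direct coherence calculation, analogous to the one in \cref{categorified abs Galois group of R}, identifies this latter 2-group with $G$ itself, with $g \in G$ acting by translation.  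Homomorphisms $\cat{Gal}(\bR) \to G$ of Picard 2-groupoids, taken up to natural equivalence, are exactly $\maps(\rB\cat{Gal}(\bR),\rB G)$, and naturality in $G$ is immediate from the construction.

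The hard part will be the coherence-tracking in the final step.  Since both $\cat{Gal}(\bR) = \bZ/2 \times \rB(\bZ/2)$ and $G$ are genuine Picard 2-groupoids rather than abelian groups, the mapping space must capture not only the homomorphism of underlying groups $\pi_0$ and $\pi_1$ but also the matching of Postnikov $k$-invariants; in particular the fermion-parity class generating $\pi_2\rB\cat{Gal}(\bR) = \bZ/2$ must map coherently into $\pi_1 G$, and the outer $\bZ/2$ conjugation action on $\rB(\bZ/2)$ must be respected.  A secondary technicality in Step~1 is arranging the point $p$ so that the resulting trivialization is genuinely $G$-equivariant, which should follow by working consistently inside the 2-category of $G$-equivariant categorified commutative $\SuperVect_\bC$-algebras.
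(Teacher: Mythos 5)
Your proposal matches the paper's proof in structure and substance: both trivialize the torsor after base change to $\SuperVect_\bC$ using algebraic closure (\cref{thm.Deligne}), both then invoke \cref{supervect is Galois} to reduce to classifying $\cat{Gal}(\bR)$-actions covering the canonical one, and both conclude by identifying the relevant automorphism 2-group and reading off $\maps(\rB\cat{Gal}(\bR),\rB G)$. One small point of caution in your Step~2: the Galois descent datum on $\maps(G,\SuperVect_\bC)$ is \emph{not} a map of $\cat{Gal}(\bR)$ into the $\SuperVect_\bC$-linear automorphisms — the Galois action is only $\bR$-linear and must cover the nontrivial action on $\SuperVect_\bC$; the paper handles this by computing $\Aut_\bR(\maps(G,\SuperVect_\bC)) \simeq G \times \cat{Gal}(\bR)$ and imposing that the second projection be the identity, which is cleaner than naively restricting to $\SuperVect_\bC$-linear automorphisms. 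Your version works after choosing the canonical basepoint (the torsor induced from $\maps(G,\SuperVect_\bR)$) and checking the conjugation action of $\cat{Gal}(\bR)$ on the $\SuperVect_\bC$-linear automorphism 2-group $G$ is trivial, but that check should be made explicit.
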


The proof is just as in the uncategorified situation:

\begin{proof}
  Let $\cT$ be a categorified $G$-torsor over $\bR$.  Then $\cT' = \cT \boxtimes_\bR \SuperVect_\bC$ is a $G$-torsor over $\SuperVect_\bC$. 
  Since $\SuperVect_\bC$ is algebraically closed, we can choose a symmetric monoidal functor $F : \cT' \to \SuperVect_\bC$.  
  Let $\boxtimes'$ denote $\boxtimes_{\SuperVect_\bC}$.  The equivalence $\cT' \boxtimes' \cT' \mapsto \maps(G,\cT')$ making $\cT'$ into a torsor over $\SuperVect_\bC$ fits into a commutative square
   $$ \begin{tikzpicture}[auto]
   \path node (UL) {$\cT' \boxtimes' \cT'$} +(5,0) node (UR) {$\maps(G,\cT')$} +(0,-2) node (LL) {$\cT'$} +(5,-2) node (LR) {$\maps(G,\SuperVect_\bC)$}
     ;
   \draw[->] (UL) -- node {$\sim$} (UR); \draw[->] (UL) -- node {$\scriptstyle \id \boxtimes' F$} (LL); \draw[->] (LL) -- node {$\scriptstyle t \mapsto (g\mapsto F(g\triangleright t))$} (LR); \draw[->] (UR) -- node {$\scriptstyle F \circ$} (LR);
  \end{tikzpicture}$$
  in which the downward arrows are both equivalent to $\boxtimes_{\cT'} \SuperVect_\bC$.  It follows that $\cT'$ is a trivial $G$-torsor over $\SuperVect_\bC$.  
  
   \Cref{supervect is Galois} then provides an equivalence of homotopy 2-types $\{$categorified $G$-torsors over $\bR\} \simeq \{\cat{Gal}(\bR)$-actions on the trivial $G$-torsor over $\SuperVect_\bC$ compatible with the action on $\SuperVect_\bC\}$.
  But $\Aut_\bR(\maps(G,\SuperVect_\bC)) \simeq G \times \cat{Gal}(\bR)$,
    and the equivariance requirement is equivalent to the requirement that the morphism $\cat{Gal}(\bR) \to G \times \cat{Gal}(\bR)$ is the identity on the second component.  Therefore we are left with maps $\cat{Gal}(\bR) \to G$ up to equivalences given by inner automorphism.
\end{proof}

\section{Spin and Spin-Statistics field theories} \label{section.spin statistics}

With the categorified Galois extension $\Vect_\bR \to \SuperVect_\bC$ from \cref{section.algebraic closure} in hand, we are equipped to categorify the story from \cref{section.hermitian}.  The uncategorified story related orientations with Hermiticity; the categorified story will relate spin and statistics.

Recall that a \define{spin structure} on a $d$-dimensional manifold $M$ is a $\Spin(d)$-principal bundle $P \to M$ together with an isomorphism $P \times_{\Spin(d)}\bR^d \cong \rT M$.  The collection $\Spins(M)$ of spin structures on $M$ is not naturally a set, but rather a groupoid.  We therefore extend without further comment the notion of \define{topological local structure} valued in a bicategory $\sX$ to be a sheaf $\Man_d \to \sX$ that takes homotopies between maps in $\Man_d$ to isomorphisms between maps in $\sX$ and homotopies between homotopies to equalities between isomorphisms.
Generalizing \cref{set theory cob hyp}, we have:
\begin{lemma} \label{gpoid cob hyp}
Let $\sX$ be a bicategory with limits.
  Topological local structures on $\Man_d$ valued in $\sX$ are equivalent to objects of $\sX$ equipped with an action by the Picard groupoid $\pi_{\leq 1}\hom_{\Man_d}(\bR^d,\bR^d) = \pi_{\leq 1}\rO(d)$.  When $d\geq 3$, this Picard groupoid is canonically equivalent to $\bZ/2 \times \rB(\bZ/2)$. \qedhere
\end{lemma}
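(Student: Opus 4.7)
The plan is to categorify the proof of \cref{set theory cob hyp} one level up, replacing $\pi_0$ with $\pi_{\leq 1}$ throughout. First, I would construct the ``evaluate at $\bR^d$'' functor: given a topological local structure $\cG : \Man_d \to \sX$, the object $\cG(\bR^d) \in \sX$ inherits a coherent action of the monoidal groupoid $\hom_{\Man_d}(\bR^d,\bR^d)$ from functoriality of $\cG$, and the ``topological'' hypothesis forces this action to descend to the Picard groupoid $\pi_{\leq 1}\hom_{\Man_d}(\bR^d,\bR^d)$. The identification $\pi_{\leq 1}\hom_{\Man_d}(\bR^d,\bR^d) \simeq \pi_{\leq 1}\rO(d)$ is standard: differentiate a local diffeomorphism at the origin and Gram--Schmidt.

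Next I would construct the inverse functor by categorifying \cref{orientations}. Given $X\in \sX$ equipped with a $\pi_{\leq 1}\rO(d)$-action, for each $M \in \Man_d$ choose a Riemannian metric and let $\mathrm{Fr}(M) \to M$ be the orthonormal frame bundle. I set
\[ \cG_X(M) := \maps_{\pi_{\leq 1}\rO(d)}\bigl(\pi_{\leq 1}\mathrm{Fr}(M),\, X\bigr), \]
computed as a limit in $\sX$ indexed by the fundamental groupoid of $\mathrm{Fr}(M)$ with its residual $\pi_{\leq 1}\rO(d)$-action. The space of metrics is contractible, so different choices give canonically equivalent outputs; the sheaf axiom follows from gluing of frames; and the unit and counit of the adjunction come from the identification $\pi_{\leq 1}\mathrm{Fr}(\bR^d) \simeq \pi_{\leq 1}\rO(d)$, which makes the evaluation-at-$\bR^d$ functor inverse to $\cG_{(-)}$.

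For the final clause I would use the standard computations $\pi_0\rO(d) = \bZ/2$ (generated by a hyperplane reflection $r$) and $\pi_1\rO(d) = \pi_1\SO(d) = \bZ/2$ for $d\geq 3$. A Picard groupoid with these homotopy groups is classified by the $\pi_0$-action on $\pi_1$ and a $k$-invariant in $\H^3(\rB(\bZ/2);\bZ/2)$. The action is automatically trivial since $\Aut(\bZ/2)$ is trivial. To kill the $k$-invariant I would note that the reflection $r$ satisfies $r^2 = \id$ on the nose, giving an honest group homomorphism $\bZ/2 \to \rO(d)$; postcomposing with $\rO(d) \to \pi_{\leq 1}\rO(d)$ produces a section of the projection to $\pi_0\rO(d)$. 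Combined with the inclusion $\rB(\bZ/2) = \pi_{\leq 1}\SO(d) \hookrightarrow \pi_{\leq 1}\rO(d)$, this yields the equivalence $\bZ/2\times \rB(\bZ/2) \simeq \pi_{\leq 1}\rO(d)$.

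The main obstacle is the coherence bookkeeping in the inverse construction. Showing that $\cG_X$ is a genuine 2-functorial sheaf --- and in particular that 2-morphisms coming from isotopies of local diffeomorphisms are transported correctly by the $\pi_{\leq 1}\rO(d)$-action on $X$ --- requires descent in a bicategory, not just in a 1-category. At the level of a single chart $\bR^d$ this is automatic from functoriality, but for general $M$ one has to verify that the limit respects homotopies of homotopies; this is routine but the step where one must actually use that $\sX$ is a bicategory rather than a 1-category.
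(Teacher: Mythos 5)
Your approach matches the paper's. The paper gives no separate proof of \cref{gpoid cob hyp}, treating it as the categorification of \cref{set theory cob hyp} (``Generalizing \cref{set theory cob hyp}, we have\dots''); the inverse construction via $\maps_{\rO(d)}(\mathrm{Fr}_M,X)$ is spelled out in \Cref{section.general nonsense} (the paragraph following \cref{cobord hyp for sheaves}), and your section-via-reflection argument for the last clause is exactly the content of \cref{canonical splitting}. One small inaccuracy worth noting: you invoke the classification of monoidal groupoids (``$\pi_0$-action plus $k$-invariant in $\H^3$''), which is the classification of general 2-groups rather than of Picard groupoids --- for Picard groupoids the $\pi_0$-action on $\pi_1$ is forced to be trivial by the symmetry, and the extra datum is a quadratic refinement rather than an arbitrary $\H^3$-class. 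This does not affect your conclusion, since exhibiting a strict section of $\pi_{\leq 1}\rO(d)\to\pi_0\rO(d)$ splits the 2-group in either classification, but it is the kind of distinction the paper is implicitly relying on when it says ``Picard groupoid,'' so it is worth being careful about.
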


\begin{remark}\label{canonical splitting}
  The existence of an identification $\pi_{\leq 1}\rO(d) \cong \bZ/2 \times \rB(\bZ/2)$, $d\geq 3$, is the same as the standard assertion that the k-invariant connecting $\pi_1\rB\rO(\infty)$ and $\pi_2 \rB\rO(\infty)$ vanishes.    
  However, the group $\bZ/2 \times \rB(\bZ/2)$ admits a nontrivial group automorphism, given by the identity on each factor and the nontrivial group map $\bZ/2 \to \rB(\bZ/2)$, corresponding to the nontrivial element of $\rH^2(\rB(\bZ/2);\bZ/2) = \bZ/2$, mixing the factors.  
  Thus there are two inequivalent identifications $\pi_{\leq 1}\rO(\infty) \cong \bZ/2 \times \rB(\bZ/2)$.
  To pick one is the same as to pick a splitting of the projection $\pi_{\leq 1}\rO(d) \to \pi_0\rO(d) = \bZ/2$.  There is a canonical choice: the ``stable'' splitting $\bZ/2 \to \rO(d)$ sending the nontrivial element of $\bZ/2$ to $\left( \begin{smallmatrix} -1 & && \\ &+1&& \\ && +1 & \\ &&& \ddots \end{smallmatrix}\right)$, called ``T'' in the physics literature.
  Corresponding to the two splittings $\bZ/2 \to \pi_{\leq 1}\rO(d)$ are two projections $\pi_{\leq 1}\rO(d) \to \rB(\bZ/2)$, the kernels of which  are the two \define{pin} groups $\Pin^\pm(d)$.
\end{remark}

\begin{example}\label{eg.spins locally}
  We recall two standard facts about spin structures.  First,
given any spin structure on a $d$-dimensional manifold $M$, let $P \to M$ denote the corresponding $\Spin(d)$-bundle.  Then $P \times_{\Spin(d)} \Pin^+(d)$ is a $\Pin^+(d)$-bundle over $M$ with a distinguished sheet.  The other sheet of $P \times_{\Spin(d)} \Pin^+(d)$ also defines a $\Spin(d)$-bundle over $M$, corresponding to the \define{orientation reversal} of the original spin structure.  
Second, any spin structure on $M$ admits a square-$1$ automorphism which acts on the bundle $P \to M$ by multiplication by the nontrivial central element of $\Spin(d)$ coming from $360^\circ$-rotation in $\SO(d)$.  The mapping cylinder of this automorphism is the product spin  manifold $M \times \twist$, where $\twist$ denotes the nontrivial-rel-boundary spin structure on the interval $[0,1]$.  We will also use the name ``$\twist$'' to denote the automorphism of the spin structure.  Together, orientation reversal and $\twist$ define an action of $\bZ/2 \times \rB(\bZ/2)$ on $\Spins(M)$.

When $M = \bR^d$, the orientation reversal and $360^\circ$-rotation action of $\bZ/2 \times \rB(\bZ/2)$ witness $\Spins(\bR^d)$ as the trivial $(\bZ/2 \times \rB(\bZ/2))$-torsor.  When $d\geq 3$, orientation reversal and $360^\circ$-rotation comprise the full group $\pi_{\leq 1} \rO(d) = \bZ/2 \times \rB(\bZ/2)$, and so $\Spins$ is the topological local structure corresponding to the trivial $\pi_{\leq 1} \rO(d)$-torsor via \cref{gpoid cob hyp}.  When $d< 3$, the canonical inclusion $X \mapsto \bigl( \begin{smallmatrix} X & \\ & 1 \end{smallmatrix}\bigr)$ of $\rO(d)$ into $\rO(3)$ provides an action of $\pi_{\leq 1}\rO(d)$ on $\pi_{\leq 1}\rO(3) \cong \bZ/2 \times \rB(\bZ/2)$, which in turn corresponds to the topological local structure $\Spins$.
\end{example}

We now move to an algebrogeometric setting in which there are interesting topological local structures that are \'etale-locally equivalent to $\Spins$ in the way that $\Her$ was \'etale-locally equivalent to $\Or$.  Ordinary algebraic geometry does not suffice, since $\Spec(\bC)$ is \'etale-contractible in the ordinary sense.  Instead, since groupoids are a categorification of sets, we work with a categorification of schemes:
\begin{definition}\label{defn.categorified stack}
  The bicategory $\CatAffSch_\bR$ of \define{categorified affine schemes over $\bR$} is opposite to the bicategory of categorified commutative $\bR$-algebras in the sense of \cref{defn.cat com alg}.  We will write $\Spec(\cC)$ for the categorified affine scheme corresponding to a categorified commutative algebra $\cC$.
\end{definition}

  \Cref{lemma.every affine is categorifiable} provides a fully faithful inclusion of the category $\cat{AffSch}_\bR$ of uncategorified affine schemes into $\CatAffSch_\bR$; in particular, we identify $\Spec(\bR)$ with $\Spec(\Vect_\bR)$.  The details of notions like ``non-affine categorified scheme'' and ``categorified \'etale topology'' have yet to be worked out, and are the subject of joint work in progress by A.\ Chirvasitu, E.\ Elmanto, and the author.
\Cref{supervect is Galois} suggest that $\Spec(\SuperVect_\bC) \to \Spec(\bR)$ is a ``categorified \'etale cover'' and \cref{thm.Deligne} suggests that $\Spec(\SuperVect_\bC)$ is ``categorified \'etale contractible.''  In particular, we will say that categorified affine schemes $X$ and $Y$ are \define{\'etale-locally equivalent} if their pullbacks $X \times_{\Spec(\bR)} \Spec(\SuperVect_\bC)$ and $Y \times_{\Spec(\bR)}\Spec(\SuperVect_\bC)$ are equivalent as categorified affine schemes over $\SuperVect_\bC$.  This in particular implies that for any Picard groupoid $G$, the geometric notion of ``categorified $G$-torsors over $\Spec(\bR)$,'' defined as $G$-objects over $\Spec(\bR)$ \'etale-locally equivalent to $G$ acting on itself, agrees with the algebraic notion from \cref{defn.torsor}, which by \cref{categorified abs Galois group of R,categorified torsor statement} are classified by maps $\bZ/2 \times \rB(\bZ/2) \to G$.  

Now note the following coincidence: there is a canonical equivalence $\pi_{\leq 1}\rO(d) \cong \bZ/2 \times \rB(\bZ/2) \cong \cat{Gal}(\bR)$, and hence a canonical categorified $\pi_{\leq 1}\rO(d)$-torsor, when $d\geq 3$.  This torsor is nothing but the categorified affine scheme $\Spec(\SuperVect_\bC)$ equipped with its $\cat{Gal}(\bR)$-action.

\begin{definition}\label{defn.spinstats}
  The sheaf of \define{Hermitian spin-statistics structures} is the sheaf $\SpinStats: \Man_d \to \CatAffSch_\bR$ such that $\SpinStats(\bR^d) = \Spec(\SuperVect_\bC)$ on which $\pi_{\leq 1}\rO(d) \cong \cat{Gal}(\bR)$ acts via the Galois action.
\end{definition}

%We will justify the name ``Hermitian spin-statistics structure'' momentarily.  First note the following categorification of the formula for $\Her$ from \cref{eg.hermitian field theory}:
\begin{lemma} \label{spins versus spinstats}
  For any manifold $M$, 
  $$ \SpinStats(M) = \frac{\Spins(M) \times \Spec(\SuperVect_\bC)}{\bZ/2 \times \rB(\bZ/2)}, $$
  where $\bZ/2 \times \rB(\bZ/2)$ acts on $\Spins(M)$ by orientation reversal and $\twist$ from \cref{eg.spins locally}, and it acts on $\SuperVect_\bC$ by complex conjugation and $(-1)^f$ from \cref{categorified abs Galois group of R}. \qedhere
\end{lemma}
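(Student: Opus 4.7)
The plan is to verify the identity by invoking the classification of topological local structures from \cref{gpoid cob hyp} and comparing both sides as $\pi_{\leq 1}\rO(d)$-equivariant objects in $\CatAffSch_\bR$. First I would check that the right-hand side, as a function of $M$, really defines a topological local structure valued in $\CatAffSch_\bR$: functoriality in local diffeomorphisms is inherited from $\Spins$ (with $\Spec(\SuperVect_\bC)$ carrying trivial functoriality), and the sheaf property follows from the sheaf property of $\Spins$ together with the fact that the quotient $(-)/(\bZ/2\times\rB(\bZ/2))$ is realized in $\CatAffSch_\bR$ as a fixed-point construction on categorified algebras (as in the computation $\Spec(\SuperVect_\bC)/\cat{Gal}(\bR) = \Spec(\Vect_\bR)$), and fixed-point functors preserve the relevant limits.

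With both sides recognized as topological local structures, \cref{gpoid cob hyp} reduces the problem to matching their values on $\bR^d$ with the $\pi_{\leq 1}\rO(d)\cong\bZ/2\times\rB(\bZ/2)$-action (in the range $d\geq 3$; the low-dimensional cases follow by pullback along $\rO(d)\hookrightarrow\rO(3)$, as in the last sentence of \cref{eg.spins locally}). The left-hand side $\SpinStats(\bR^d)$ is, by \cref{defn.spinstats}, exactly $\Spec(\SuperVect_\bC)$ with the categorified Galois action. For the right-hand side, \cref{eg.spins locally} says that $\Spins(\bR^d)$ is the trivial $(\bZ/2\times\rB(\bZ/2))$-torsor; after choosing a trivialization $\sigma_0\in\Spins(\bR^d)$, the diagonal action on $\Spins(\bR^d)\times\Spec(\SuperVect_\bC)$ is free and transitive in the first factor, so the quotient is canonically equivalent to $\Spec(\SuperVect_\bC)$ via $[\sigma_0,s]\mapsto s$.

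The remaining task is to trace the residual $\pi_{\leq 1}\rO(d)$-action. An element $g\in\pi_{\leq 1}\rO(d)$, acting via a local diffeomorphism of $\bR^d$, moves $\sigma_0$ to $g\cdot\sigma_0$ in $\Spins(\bR^d)$ while leaving $\Spec(\SuperVect_\bC)$ fixed (the second factor is constant in $M$). Under the identification $g\cdot\sigma_0 = \sigma_0\cdot g$ in the trivial torsor, we rewrite $[g\cdot\sigma_0,s] = [\sigma_0,g^{-1}\triangleright s]$, where $\triangleright$ denotes the diagonal quotient action, i.e.\ the prescribed $\bZ/2\times\rB(\bZ/2)$-action on $\Spec(\SuperVect_\bC)$ by complex conjugation and $(-1)^f$. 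By \cref{categorified abs Galois group of R} this is precisely the Galois action, matching the left-hand side.

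The main subtle point I anticipate is keeping straight the two a priori different actions of $\bZ/2\times\rB(\bZ/2)$ on $\Spins(\bR^d)$: the ``local-diffeomorphism'' action from \cref{gpoid cob hyp} and the ``orientation-reversal and $\twist$'' action used on the right-hand side for arbitrary $M$ (\cref{eg.spins locally}). On $\Spins(\bR^d)$ both are just the regular action of the group on a trivial torsor, so they agree up to the unique automorphism of $\bZ/2\times\rB(\bZ/2)$ that fixes the canonical T-splitting of \cref{canonical splitting}; this is what makes the identification on $\bR^d$ unambiguous and hence the global identity of sheaves well-defined.
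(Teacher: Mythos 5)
Your proof is correct, and it essentially fills in the argument that the paper leaves implicit: Lemma~\ref{spins versus spinstats} carries a \verb|\qedhere| with no separate proof, because it is treated as an immediate reformulation of Definition~\ref{defn.spinstats} via the classification in Lemma~\ref{gpoid cob hyp}. Your plan --- check both sides are topological local structures, reduce by Lemma~\ref{gpoid cob hyp} to comparing values on $\bR^d$ with the $\pi_{\leq 1}\rO(d)$-action, identify $\Spins(\bR^d)$ as the trivial torsor, and observe that the residual action on the quotient is the Galois action --- is exactly the route the paper has in mind, just spelled out.

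One small remark on the last paragraph: the phrasing ``they agree up to the unique automorphism of $\bZ/2\times\rB(\bZ/2)$ that fixes the canonical T-splitting'' is a roundabout way of saying ``they agree,'' since (per Remark~\ref{canonical splitting}) the only automorphism of $\bZ/2\times\rB(\bZ/2)$ fixing the T-splitting is the identity. It would be cleaner to say directly that Example~\ref{eg.spins locally} already asserts that, under the canonical identification $\pi_{\leq 1}\rO(d)\cong\bZ/2\times\rB(\bZ/2)$ furnished by the T-splitting, the local-diffeomorphism action of $\pi_{\leq 1}\rO(d)$ on $\Spins(\bR^d)$ \emph{is} the orientation-reversal-and-$\twist$ action; there is no residual ambiguity to reconcile. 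Also, the sign discrepancy between $g$ and $g^{-1}$ in your computation of the residual action is harmless since every element of $\bZ/2\times\rB(\bZ/2)$ is its own inverse, but it is worth saying so.
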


\Cref{spins versus spinstats} begins to justify the name ``Hermitian spin-statistics structure'' in \cref{defn.spinstats}: that orientation reversal acts by complex conjugation is the essence of Hermiticity, and that $\twist$ acts by $(-1)^f$ is a version of ``spin-statistics'' as it is used in physics. 

 To further justify the name, we should study Hermitian spin-statistics field theories directly.  The definition of ``Heritian spin-statistics field theory'' will be a direct analog of ``Hermitian field theory'' from \cref{section.hermitian}.  

Let $\Bord_{d-2,d-1,d}$ denote the ``once-extended'' $d$-dimensional bordism bicategory constructed in \cite{Schommer-Pries:thesis}.  Given any topological local structure valued in groupoids $\cG : \Man_d \to \cat{Gpoids}$, \cite{Schommer-Pries:thesis} also explains how to build a symmetric monoidal bicategory $\Bord_{d-2,d-1,d}^\cG$ of bordisms with $\cG$-structure.  A \define{once-extended $\cG$-structured field theory} is then a symmetric monoidal functor $Z : \Bord_{d-2,d-1,d}^\cG \to \sV$ for some symmetric monoidal bicategory $\sV$ of ``categorified vector spaces.''  

We will take $\sV = \Alg_\bR$ to be the symmetric monoidal ``Morita'' bicategory of associative algebras, bimodules, and intertwiners.  Just as $\Vect_\bR$ had a natural extension to the stack $\Qcoh$ of categories over $\Sch_\bR$, so too $\Alg_\bR$ has a natural extension allowing for ``bundles'' or ``sheaves'' of algebras over any categorified affine scheme: given a categorified commutative $\bR$-algebra $\cC$, set $\Alg(\Spec(\cC)) = \Alg(\cC)$ to be the symmetric monoidal bicategory of algebra objects in $\cC$, bimodule objects in $\cC$, and intertwiners in $\cC$.  Although we have not defined, and will not use, any topology on $\CatAffSch_\bR$, and so cannot say precisely what it means to be a ``stack of bicategories,'' it is not hard to find a bicategory object internal to $\CatAffSch_\bR$ that represents $\Alg(-)$, and so $\Alg(-)$ is certainly  a stack of bicategories in any subcanonical topology.

\begin{remark}
  The Eilenberg--Watts theorem \cite{MR0125148,MR0118757} identifies $\Alg_\bR$ with the full subbicategory of $\Pres_\bR$ whose objects admit a compact projective generator.  The ``correct'' target for once-extended \emph{non-topological} quantum field theory is more likely the larger $\Pres_\bR$.  But it is reasonable to expect that every \emph{topological} field theory factors through $\Alg_\bR$, since it is expected that only categories equivalent to $\Mod_A$, $A \in \Alg_\bR$, are sufficiently ``dualizable''  (c.f.\ \cite{BCJF2014}).  Indeed, one should expect more: topological  field theories should factor through the subbicategory of $\Alg_\bR$ whose objects are finite-dimensional algebras and whose morphisms are finite-dimensional bimodules.  This subbicategory is equivalent to the bicategory $\MOD_{\Pres_\bR}$ of finite-dimensional $\Pres_\bR$-modules from \cref{section.algebraic closure}.  More generally, for $\cC$ a finite-dimensional categorified commutative ring, the bicategory $\MOD_\cC$ of finite-dimensional $\cC$-modules is a subbicategory of $\Alg(\cC)$, which is a subbicategory of the bicategory of all $\cC$-modules.
\end{remark}

\begin{definition} \label{bicategorical description of G-structures}
  Let $\Spans_2(\CatAffSch_\bR)$ denote the symmetric monoidal bicategory whose objects are categorified affine schemes, 1-morphisms are spans $X \leftarrow A \rightarrow Y$, and 2-morphisms are spans-between-spans:
  $$
  \begin{tikzpicture}
    \path (0,0) node (A) {$X$} (1.5,0) node (Z) {$M$} (3,0) node (B) {$Y$}  (1.5,1) node (X) {$A$} (1.5,-1) node (Y) {$B$};
    \draw[->] (X) -- (A); \draw[->] (X) -- (B);
    \draw[->] (Y) -- (A); \draw[->] (Y) -- (B);
    \draw[->] (Z) -- (X); \draw[->] (Z) -- (Y);
  \end{tikzpicture}
  $$
Composition is by fibered product, and the symmetric monoidal structure is the cartesian product in $\CatAffSch_\bR$.  Let $\cG$ be a topological local structure valued in $\CatAffSch_\bR$; it defines a symmetric monoidal functor $\widetilde\cG : \Bord_{d-2,d-1,d} \to \Spans_2(\CatAffSch_\bR)$.  Let $\Spans_2(\CatAffSch_\bR;\Alg)$ be the symmetric monoidal bicategory whose objects are a categorified affine scheme $X$ together with an algebra $V \in \Alg(X)$, whose 1-morphisms are spans $X \overset f \leftarrow A \overset g \to Y$ together with a bimodule between $f^*V$ and $g^*W$ in $\Alg(A)$, and whose 2-morphisms are spans of spans together with an intertwiner between pulled-back bimodules.  A \define{$\cG$-structured field theory} is  a lift:
$$ \begin{tikzpicture}[baseline=(LR.base)]
   \path coordinate (UL) +(3,0) node (UR) {$\Spans_2(\CatAffSch_\bR;\Alg)$} +(-2,-1.5) node (LL) {$\Bord_{d-2,d-1,d}$} +(3,-1.5) node (LR) {$\Spans_2(\CatAffSch_\bR)$}
    ;
   \draw[->] (LL) -- node[auto] {$\scriptstyle \widetilde\cG$} (LR); \draw[->] (UR) -- node[auto] {\scriptsize {Forget} the $\Alg$-data} (LR);
   \draw[->,dashed] (LL) -- (UR);
  \end{tikzpicture}$$
\end{definition}

\begin{example}
  We now continue to justify the name ``Hermitian spin-statistics'' from \cref{spins versus spinstats}.  Let $Z$ be a $d$-dimensional $\SpinStats$-structured field theory.  We will unpack its values on various manifolds.
  
  Suppose first that $M$ is a closed $d$-dimensional manifold.  As an element of $\Bord_{d-2,d-1,d}$, $M$ is an endo-2-morphism of the identity 1-morphism of the unit object.  Then $Z(M)$ is  an endo-2-morphism of the identity 1-morphism of the unit object in $\Alg(\SpinStats(M))$, i.e.\ a function $Z(M) \in \cO(\SpinStats(M))$.  Any choice of spin structure for $M$ determines a map $\Spec(\SuperVect_\bC) \to \SpinStats(M)$, and these maps together cover $\SpinStats(M)$ as the spin structure varies over $M$.  Thus the data of $Z(M)$ is the data of an element of $\cO(\Spec(\SuperVect_\bC)) = \bC$ for each spin structure on $M$.  By the construction of $\SpinStats$ from \cref{spins versus spinstats}, two spin structures on $M$ with reversed orientation lead to complex-conjugate values of $Z(M)$.  This is a manifestation of the Hermiticity of $Z$.
  
  To see  spin-statistics phenomena, consider next the case of $N$ a closed $(d-1)$-dimensional manifold.  Then $Z(N)$ is an endo-1-morphisms of the unit object in $\Alg(\SpinStats(N))$, i.e.\ a vector bundle on $\SpinStats(N)$.  Again any spin structure on $N$ allows this vector bundle to be pulled back to a vector bundle on $\Spec(\SuperVect_\bC)$, and so $Z(N)$ assigns a complex supervector space to each spin structure on $N$.  In addition to the Hermiticity requirement that orientation-versed spin structures map to complex-conjugate supervector spaces, there is another relation between these supervector spaces and the spin structures.  Indeed, fix a spin structure $\sigma$ on $N$, and let $Z(N,\sigma)$ denote the corresponding complex supervector space.  Consider the spin cobordism $(N,\sigma) \times \twist$.  This spin structure picks out a particular map $\Spec(\SuperVect_\bC) \to \SpinStats(N \times[0,1])$, along which $Z(N \times [0,1])$ pulls back to a map $Z((N,\sigma) \times \twist) : Z(N,\sigma) \to Z(N,\sigma)$.  But $(N,\sigma) \times \twist$ is simply the mapping cylinder of the $360^\circ$-rotation of $\sigma$, and \cref{spins versus spinstats} identifies $360^\circ$-rotation with $(-1)^f$.  All together, we find that $Z((N,\sigma) \times \twist)$ is required to evaluate to $(-1)^f : Z(N,\sigma) \to Z(N,\sigma)$.
  
  Similar discussion applies also in codimension-2, and  Hermitian spin-statistics field theories unpack to spin field theories $\Bord_{d-2,d-1,d}^{\Spins}\to \Alg(\SuperVect_\bC)$ such that the actions of $\bZ/2 \times \rB(\bZ/2)$ on the source and target categories are intertwined.  The phrase ``spin-statistics'' refers to the identification $\twist = (-1)^f$.  In a spin field theory the $(-1)$-eigenstates of $\twist$ are called \define{spinors} and in a super field theory the $(-1)$-eigenstates of $(-1)^f$ are called \define{fermions}, so ``{spin-statistics}'' can be equivalently described as the assertion that the classes of spinors and fermions agree.
\end{example}

By construction, $\SpinStats$ is an example of a \define{\'etale-locally-spin} topological local structure in the sense that $\SpinStats \times_{\Spec(\bR)} \Spec(\SuperVect_\bC)$ and $\Spins \times \Spec(\SuperVect_\bC)$ are equivalent.  Since $\cat{Gal}(\bR) = \bZ/2 \times \rB(\bZ/2)$ and $\Spins$ corresponds to the trivial $\bZ/2 \times \rB(\bZ/2)$-torsor, \cref{categorified torsor statement} asserts that the set of inequivalent topological local structures \'etale-locally-equivalent to $\Spins$ is equivalent to $\pi_0\maps(\rB(\bZ/2 \times \rB(\bZ/2)), \rB(\bZ/2 \times \rB(\bZ/2)))$, which can be easily computed as
$$ \rH^1(\rB(\bZ/2);\bZ/2) \times \rH^2(\rB(\bZ/2);\bZ/2) \times \rH^1(\rB^2(\bZ/2);\bZ/2) \times \rH^2(\rB^2(\bZ/2);\bZ/2) \cong (\bZ/2)^3, $$
and so there are exactly eight different choices.  Whether the corresponding field theories are oriented or Hermitian is controlled by the component $\bZ/2 \to \bZ/2$ relating complex conjugation with orientation reversal.  Whether the field theories are spin or spin-statistics is controlled by the component $\rB(\bZ/2) \to \rB(\bZ/2)$ relating $(-1)^f$ with $\twist$.  But once these choices are made, there is still the choice of map $\bZ/2 \to \rB(\bZ/2)$ --- the possible choices are parameterized by $\H^2(\bZ/2;\bZ/2) \cong \bZ/2$ --- which adjusts how orientation reversal behaves on fermions.

There are also various  topological local structures $\cG$ satisfying $\cG(\bR^d) = \Spec(\SuperVect_\bC)$ but in which part or all of $\pi_{\leq 1}\rO(d)$ acts trivially, analogous to the $\bC$-linear unstructured field theories from \cref{eg.complex field theory}.  We now illustrate a few of the possible choices to emphasize that spin and statistics are not intrinsically linked, even in the presence of Hermiticity.  We will then prove \cref{mainthm} showing that spin and statistics are linked when an extra reflection-positivity hypothesis is imposed.  In order to construct examples of field theories with various topological local structures, we focus on the case when $d=2$, since then we can use the classification of 2-dimensional field theories from \cite{Schommer-Pries:thesis}.

\begin{example}\label{eg.spin but not super}
  A \define{Hermitian spin field theory} is an $\bR$-linear field theory with local structure $\Spins \times_{\bZ/2} \Spec(\bC)$. 
        Unpacking the definition, a Hermitian spin field theory is a non-super $\bC$-linear spin field theory such that orientation reversal agrees with complex conjugation.  In terms of simultaneously-spin-and-super field theories,  $\twist$ acts nontrivially but $(-1)^f$ acts trivially.
  
  Two-dimensional $\bC$-linear spin field theories in $\Alg$ are classified by finite-dimensional complex semisimple algebras $A$ equipped with a trivialization $\varphi: A^* \otimes_A A^* \isom A$ of $A$-$A$ bimodules, where $A^*$ denotes the linear dual bimodule to $A$, such that the two maps $\varphi \otimes \id: A^* \otimes_A A^* \otimes_A A^* \isom A \otimes_A A^* = A^*$ and $\id \otimes \varphi : A^* \otimes_A A^* \otimes_A A^* \isom A^* \otimes_A A = A^*$ agree.  The Hermiticity requirement unpacks to having a
  ($\bC$-antilinear) \define{stellar} structure, i.e.\ a 
   Morita equivalence $A^\op \cong \bar A$, where $\bar A$ is the complex-conjugate algebra, satisfying satisfying certain requirements \cite[Section 3.8.6]{Schommer-Pries:thesis}.  
   Stellar structures are the Morita-equivariant version of $*$-structures, and any $*$-structure defines a stellar structure.
  Hermiticity requires that  $\varphi$  be real.
  
  For example, we can take $A = \bC$ with its standard $\ast$-algebra structure, and choose the trivialization $\varphi: \bC = \bC^* \otimes_\bC \bC^* \isom \bC$ to be multiplication by $-1$.  
  Either trivialization $\pm \sqrt{-1} : \bC^* \to \bC$ presents the $\bC$-linear field theory defined by $A$ as the 
  underlying spin field theory of an oriented field theory over $\bC$.
  But as a Hermitian spin theory, the field theory defined by $A$ is fundamentally spin, since neither $\pm\sqrt{-1}$ is real.
\end{example}

\begin{example}\label{eg.super but not spin}
  A \define{Hermitian super field theory} is an $\bR$-linear field theory with local structure $\Or \times_{\bZ/2} \Spec(\SuperVect_\bC) \cong \Her \times \Spec(\SuperVect_\bR)$,  
  i.e.\ an
  oriented field theory valued in $\SuperVect_\bC$ such that orientation reversal agrees with complex conjugation. 
   In terms of simultaneously-spin-and-super field theories,  $(-1)^f$ acts nontrivially but $\twist$ acts trivially.
  
  Two-dimensional Hermitian super field theories are classified by  symmetric Frobenius stellar superalgebras.  In particular, every symmetric Frobenius $\ast$-superalgebra determines a Hermitian super field theory.  Consider the complex superalgebra $\Cliff(2) = \bC\langle x,y\rangle / (x^2 = y^2 = 1, \, [x,y]=0)$, where $x$ and $y$ are odd.  It admits a $\ast$-structure in which $x^* = x\sqrt{-1}$ and $y^* = y\sqrt{-1}$.  Then $xy$ is imaginary and even, and $\Cliff(2)$ admits a symmetric Frobenius $\ast$-superalgebra structure in which $\tr(xy) = \sqrt{-1}$ and $\tr(1) = \tr(x) = \tr(y) = 0$.  
  
  As a complex Frobenius superalgebra, $\Cliff(2)$ is Morita-equivalent to $\bC$, and so the $\bC$-linear oriented super field theory defined by $\Cliff(2)$ is the super-ification of a purely bosonic theory.  But the Morita equivalence $\Cliff(2) \simeq \bC$ is not compatible with the stellar structure, and so the corresponding Hermitian super field theory defined by $\Cliff(2)$ is fundamentally super.
\end{example}

\begin{example}\label{twisted frobenius structure}
  Two-dimensional spin-statistics field theories are classified by finite-dimensional semisimple ``twisted-symmetric'' Frobenius superalgebras.  Specifically, let $A$ be a finite-dimensional semisimple superalgerba arising as $Z(\pt)$ for some two-dimensional field theory.  Then $360^\circ$ rotation acts by the dual bimodule $Z(\twist) = {_A^{}A^*_A}$.  Let $_A^{}(-1)^f_A$ denote the bimodule $A$ with actions $a \triangleright m \triangleleft b = am(-1)^{|b|}b$; it is the bimodule corresponding to the algebra automorphism $(-1)^f : A \to A$.  The spin-statistics data ``$\twist = (-1)^f$'' then corresponds to a bimodule isomorphism $\phi : {_A^{}A^*_A} \isom {_A^{}(-1)^f_A}$.
  
  Consider the trace $\tr (a) = \langle \phi^{-1}(1_A),a\rangle$, where $\langle,\rangle : A^* \otimes A \to \bC$ denotes the canonical pairing.  This trace is not symmetric.  In a symmetric  Frobenius superalgebra, the trace should satisfy $\tr(ab) = (-1)^{|a|\cdot |b|}\tr(ba)$.  Instead, the trace pairing above satisfies $\tr(ab) = (-1)^{|a|\cdot (|b|+1)}\tr(ba) = \tr(ba)$, where the second equality follows from the fact that $\tr$, being an even map, vanishes on odd elements.  Thus not the superalgebra $A$ but rather the underlying non-super algebra $\Forget(A)$ is symmetric Frobenius.
  
  Real spin-statistics field theories are classified by twisted-symmetric Frobenius superalgebras in $\SuperVect_\bR$.  Hermitian spin-statistics field theories are classified by twisted-symmetric Frobenius stellar superalgebras in $\SuperVect_\bC$, where the isomorphism $\phi$ is real.
  
  The Clifford algebras $\Cliff(n) = \bC\langle x_1,\dots,x_n\rangle / ([x_j,x_k] = 2\delta_{jk})$ admit twisted-symmetric Frobenius $\ast$-superalgebra structures.   As in \cref{eg.super but not spin}, we can give $\Cliff(n)$ a $\ast$-structure by declaring $x_j^* = x_j\sqrt{-1}$.    
  When $n$ is odd, there is an isomorphism of superalgebras $\Cliff(n) \cong \Cliff(1) \otimes \mathrm{Mat}_\bC(2^{(n-1)/2})$, where $\mathrm{Mat}_\bC(m)$ is the purely-even algebra of $m\times m$ complex matrices, and so we can define the trace $\tr: \Cliff(n) \to \bC$ to be the matrix trace on the even part (and to vanish on the odd part).   
  This $\tr$ is twisted-symmetric and real and so defines a two-dimensional spin-statistics Hermitian field theory.  When $n$ is even, the isomorphism $\Forget(\Cliff(n)) \cong \mathrm{Mat}_\bC(2^{n/2})$ defines a twisted-symmetric Frobenius structure on $\Cliff(n)$.  When $n$ is even, $\Cliff(n)$ also admits a non-twisted symmetric Frobenius structure; \cref{eg.super but not spin} describes the case $n=2$.
\end{example}

\begin{example}\label{twisted versions of above}
  A \define{twisted-Hermitian spin field theory} is like a Hermitian spin field theory except that rather than the canonical action of $\bZ/2$ on $\Spins$, we  twist the action by the nontrivial map $\bZ/2 \to \rB(\bZ/2)$.  This unpacks to the requirement that the trivialization $\varphi$ in \cref{eg.spin but not super} be pure-imaginary.

  A \define{twisted-Hermitian super field theory} is like a Hermitian super field theory except that rather than the canonical action of $\bZ/2$ on $\SuperVect_\bC$, we  twist the action by the nontrivial map $\bZ/2 \to \rB(\bZ/2)$.  These are classified not by symmetric Frobenius stellar superalgebras, but by symmetric Frobenius \define{twisted-stellar} superalgebras.  These are defined analogously to stellar superalgebras but with one modification.  For any superalgebra $A$, consider the superalgebra $A'$ defined by $x \cdot' y = (-1)^{|x|\cdot |y|}xy$.  A stellar structure on $A$ includes a Morita equivalence between the opposite superalgebra $A^\op$ and the complex conjugate superalgebra $\bar A$.  A twisted-stellar structure instead makes $A^\op$ equivalent to $\bar A'$.  A special case is that of twisted-$\ast$-superalgebras.  In a $\ast$-superalgebra, $x \mapsto x^*$ must be an algebra anti-automorphism, which in $\SuperVect_\bC$ means that $(xy)^* = (-1)^{|x|\cdot|y|}y^*x^*$.  In a twisted-$\ast$-superalgebra, we have instead $(xy)^* = y^*x^*$ for elements of arbitrary parity.
  Examples of twisted-$\ast$ superalgerbas include $\Cliff(n)$ for arbitrary $n$ with $x_i^* = x_i$.  
  
  The nontrivial automorphism of $\bZ/2 \times \rB(\bZ/2)$ mentioned in \cref{canonical splitting} defines a second $\bZ/2 \times \rB(\bZ/2)$-torsor over $\Spec(\bR)$ with total space $\Spec(\SuperVect_\bC)$.  The corresponding topological local structure controls \define{twisted-Hermitian spin-statistics field theories}.  The twisted-$\ast$-superalgerbas $\Cliff(n)$ with their twisted-symmetric Frobenius structures from   \cref{twisted frobenius structure} provide examples of 
  twisted Hermitian spin-statistics field theories.

  \define{Twisted-real} spin-statistics field theories are classified by twisted-symmetric Frobenius algebra objects in the category $\SuperVect_\bH$ from \cref{supervectH}.
 \end{example}

We now extend the notion of ``reflection-positivity'' from 
\cref{defn.reflection positive,defn.reflection positive for oriented}
to the \'etale-locally-spin case.  Following the physics literature, and in disagreement with \cite{FreedHopkins}, we declare that reflection-positivity of an extended field theory can be detected in codimension-one:

\begin{definition}\label{defn.extended reflection positive}
  An extended unstructured field theory $Z : \Bord_{d-2,d-1,d} \to \Alg_\bR$ is \define{reflection-positive} if its restriction $Z|_{\Bord_{d-1,d}} : \Bord_{d-1,d} \to \Vect_\bR$ to an unextended field theory is reflection-positive in the sense of \cref{defn.reflection positive}, i.e.\ if for every closed $(d-1)$-dimensional manifold~$N$, the symmetric pairing $Z(N \times \pairing) : Z(N)^{\otimes 2} \to \bR$ is positive-definite.
\end{definition}

In \cref{defn.reflection positive for oriented} we defined reflection-positivity for \'etale-locally-oriented field theories in terms of integration over the space of ``\'etale-local orientations.''  We now extend that logic to \'etale-locally-spin field theories.
Consider first the case when $Z$ is spin.  For any manifold $M$, $\Spins(M)$ is a finite groupoid, and so \cite{MR1852152} defines an integration map $\int_{\Spins(M)} : \cO(\Spins(M)) \to \bR$ by $\int_{\Spins(M)}f = \sum_{x\in \pi_0\Spins(M)} f(x) / |\pi_1(\Spins(M), x)|$.   When $V$ is a  bundle over $\Spins(M)$, $\int_{\Spins(M)}V$ is the space of coinvariants of $V$.  

\begin{example}\label{eg.int over spins}
  Given a two-dimensional non-Hermitian spin field theory $Z$ corresponding to the algebra $Z(\pt) = A$ and trivialization $\varphi: A^*\otimes_A A^* \isom A$, one can compute the unstructured field theory $\int_{\Spins}Z$ in two steps.  First, one can integrate over the fibers of the projection $\Spins \to \Or$.  The corresponding oriented field theory $\int_{\Spins/\Or}Z$ is controlled by the symmetric Frobenius algebra $B = A \oplus A^*$ with multiplication is $(a \oplus \alpha)\cdot (b \oplus \beta) = (ab + \varphi(\alpha \otimes \beta)) \oplus (a\beta + \alpha b)$, and the Frobenius structure is $\tr(a\oplus\alpha) = \alpha(1)$.  Second, one can integrate over the choice of orientation,  producing the unstructured field theory controlled by $B \oplus B^{\op}$ with the obvious algebraic $\ast$-structure.
\end{example}

The construction ``$\int_{\Spins}$'' makes sense for any \'etale-locally-spin field theory: if $Z$ has local structure $\cG$ where $\cG(\pt)$ is a categorified $\bZ/2 \times \rB(\bZ/2)$-torsor, then the base change $Z_\cG$ of $Z$ along $\cG(\pt) \to \Spec(\bR)$ is a Galois-equivariant spin field theory over $\cG(\pt)$; thus $\int_{\Spins} Z_\cG$ is a Galois-equivariant unstructured field theory and so descends to  $\Spec(\bR)$.

\begin{example}\label{eg.int over spinstats}
  Suppose that $Z$ is a two-dimensional spin-statistics field theory, either Hermitian or oriented.  In order to treat both oriented and Hermitian field theories, we first study the $\bC$-linear spin-statistics field theory $Z_\bC = Z \otimes_\bR \bC$.  
  
  As in \cref{twisted frobenius structure}, $Z_\bC$ is determined by a finite-dimensional semisimple $\bC$-linear  superalgebra $A$ together with a bimodule isomorphism 
  $\phi : {_A^{}A^*_A} \isom {_A^{}(-1)^f_A}$.  Let $\tilde Z$ denote the $\SuperVect_\bC$-valued spin field theory determined by $A$ together with the isomorphism $\varphi = \phi \otimes \phi : A^* \otimes_A A^* \isom (-1)^f \otimes_A (-1)^f \cong A$.  
 Integrate $\tilde Z$ to a $\SuperVect_\bC$-valued oriented field theory $\int_{\Spins/\Or} \tilde Z$ controlled by the superalgebra algebra $B = A \oplus A^*$.  Then $\tilde Z$ canonically descends to a non-super $\bC$-linear oriented field theory $\int_{\Spins/\Or} Z_\bC$.  Indeed, the isomorphism $\phi : A^* \isom (-1)^f_A$ identifies $B$ with the semidirect product for the parity reversal action $A \rtimes \bZ/2 = A \oplus A\epsilon$ where $\epsilon = \epsilon^{-1}$ is even and $\epsilon a = (-1)^{|a|}a\epsilon$.  In particular, the bimodule $^{}_B(-1)^f_B$ is canonically trivialized by $b \mapsto b\epsilon$.  Let $\Forget(B)$ denote the underlying non-super algebra of $B$.
  The trivialization ${_B^{}(-1)^f_B} \cong {_BB_B}$ determines a Morita equivalence, namely $B / (1-\epsilon) \oplus \Pi B / (1+\epsilon)$, between $B$ and $\Forget(B)$.  The non-super functor $\int_{\Spins/\Or}Z_\bC$ assigns $\Forget(B)$ to the point.
  
  Finally, because the entire construction is equivariant under complex conjugation, if $Z$ was real, then  $\int_{\Spins/\Or}Z_\bC$ naturally descends to a real oriented field theory, and if $Z$ was Hermitian, then  $\int_{\Spins/\Or}Z_\bC$ is naturally Hermitian.  Let us describe the Hermitian case, as it is the more interesting one.  In terms of algebras, if $Z$ was Hermitian, then $A$ is stellar.
  By declaring that $\epsilon$ is real, $B$ also becomes stellar, and hence so too is the Morita-equivalent purely even algebra $\Forget(B)$.  This stellar structure defines $\int_{\Spins/\Or}Z$ as a Hermitian field theory.
  In most examples, the stellar structure on $A$  comes from a $*$-structure.  In this case, $B$ is also $*$. 
   After tracing through the equivalences, one finds that the induced $*$-structure on $\Forget(B)$ is $b \mapsto b^*\epsilon^{|b|}$.
\end{example}

\begin{remark}\label{remark.categorical}
  One can also understand \cref{eg.int over spinstats} in terms of categories of modules.  The Morita class of the superalgebra $A = \tilde Z(\pt)$ is determined by the supercategory $\cA = \cat{SuperMod}_A$. $\tilde Z(\twist)$ defines an action of $\bZ/2$ on $\cA$, and $\cB = \cat{SuperMod}_B$ is the supercategory of fixed points for this action.  Being  supercategories, $\cA$ and $\cB$ carry endo-superfunctors $(-1)^f_\cA$ and $(-1)^f_\cB$ which are the identity on objects and even morphisms but act by $(-1)^f$ on odd morphisms.  The spin-statistics data $\tilde Z(\twist) \cong (-1)^f_\cA$ provides a trivialization of $(-1)^f_\cB$.  This is precisely the data needed to factor $\cB \simeq \cB_{\mathrm{ev}} \boxtimes \SuperVect_\bC$, where $\cB_{\mathrm{ev}}$ is the plain category consisting of the ``even'' objects of $\cB$, i.e.\ objects for which the trivialization $(-1)^f_\cB \cong \id$ acts as the identity.
  
  The Morita equivalence between $B$ and $\mathrm{Forget}(B)$ in  \cref{eg.int over spinstats} identifies $\cB_{\mathrm{ev}}$ with $\Mod_{\mathrm{Forget}(B)}$.  A straightforward calculation  shows that $\cB_{\mathrm{ev}}$ is also the underlying non-super category $\cA_0$ of $\cA$, i.e.\ the one with the same objects and even morphisms but with odd morphisms forgotten.  Since the restriction to $\cA_0$ of $(-1)^f_\cA$ is trivial, and since we started with an isomorphism of superfunctors $(-1)^f_\cA \cong \twist$, on the category $\cA_0$ we have $\twist \cong \id$.  This is another way to see that $\cA_0 = \cB_{\mathrm{ev}}$ defines an oriented field theory.
\end{remark}

\begin{definition}\label{defn.reflection positive for spin}
  An \'etale-locally-spin field theory $Z$ is \define{reflection-positive} if the unstructured field theory $\int_{\Spins}Z = \int_{\Or}\int_{\Spins/\Or} Z$ is reflection-positive.
\end{definition}

We can now prove \cref{mainthm}, which asserts that all extended \'etale-locally-spin reflection-positive field theory are  Hermitian and satisfy spin-statistics.

\begin{proof}[Proof of \cref{mainthm}]
  An \'etale-locally-spin field theory either satisfies spin-statistics or is spin-but-not-super.  It suffices to show that if $Z$ is a non-zero spin-but-not-super field theory then it is not reflection-positive; Hermiticity will follow from \cref{reflection positive oriented}.  (A field theory is \define{zero} if it sends to the zero object all non-empty cobordisms.  The zero field theory is vacuously reflection-positive and makes sense for all topological local structures.)
  
  Suppose that $Z$ is a non-zero spin-but-not-super field theory and consider the $\bC$-linear spin-but-not-super field theory $Z_\bC = Z \otimes_\bR \bC$.  Let $P$ be a connected oriented $(d-2)$-dimensional manifold and let $\Spins/\Or(P)$ denote the groupoid of spin structures on $P$ compatible with the chosen orientation.  Since $Z$ is non-zero, we can find $P$ such that for at least one $\sigma \in \Spins/\Or(P)$, $Z(P,\sigma) \neq 0$.  Then in particular $\Spins/\Or(P) \neq \emptyset$ and so $\Spins/\Or(P)$ is a torsor for $\rB(\bZ/2) \times \rH^1(P;\bZ/2)$.
  
  Each choice of $\sigma \in \Spins/\Or(P)$ determines a dimensional reduction of $Z_\bC$ to the two-dimensional $\bC$-linear spin-but-not-super field theory $Z_\bC( - \times (P,\sigma))$.  By the classification of two-dimensional field theories \cite{Schommer-Pries:thesis}, $A = Z_\bC(\pt \times (P,\Sigma))$ is a finite-dimensional semisimple algebra over $\bC$, and so up to Morita equivalence we can assume $A = \bC^{\oplus n}$ for some $n$. A bimodule isomorphism $A^* \otimes_A A^* \isom A$ cannot permute the direct summands, and so the field theory $Z_\bC(-\times (P,\sigma))$ is equivalent to a direct sum $\bigoplus_{i=1}^n Y_{\sigma}^{(i)}$ of complex-linear spin field theories each of which satisfies $A^{(i)} = Y_\sigma^{(i)}(\pt) = \bC$.
  
  The two-dimensional spin-but-not-super field theory $Y_\sigma^{(i)}$ then satisfies $\int_{\Spins/\Or}Y_\sigma^{(i)}(\pt) = A^{(i)} \oplus (A^{(i)})^* =  \bC[x]/(x^2 = 1)$ with $\tr(a+bx) = b$, and the complex Hilbert space is $\int_{\Spins/\Or}Y_\sigma^{(i)}(S^1) = \bC^2$ with purely off-diagonal inner product.  Thus $\int_{\Spins/\Or}Z(P\times S^1) = \int_{\sigma \in \Spins/\Or(P)} \bigoplus_i \bC^2$ is a non-zero direct sum of Hilbert spaces with purely off-diagonal inner product.  Such an inner product cannot be positive-definite.
\end{proof}

\begin{example}
  The Hermitian spin-statistics field theory $Z_n$ defined by $\Cliff(n)$ from \cref{twisted frobenius structure} is reflection-positive. Indeed, when $n$ is odd,  \cref{eg.int over spinstats} implies that the Hermitian field theory $\int_{\Spins/\Or}Z_n$ is controlled by  $\Forget(\Cliff(n) \rtimes \bZ/2) \cong \mathrm{Mat}_\bC(2^{(n+1)/2})$.  As discussed before \cref{defn.reflection positive for oriented}, the Hilbert space $\bigl(\int_{\Spins}Z_n\bigr)(S^1)$ is then the underlying real vector space of $\bigl(\int_{\Spins/\Or}Z_n\bigr)(S^1) = \bC$ equipped with the real part of its Hermitian pairing, which comes in turn from the $*$-structure on $\mathrm{Mat}_\bC(2^{(n+1)/2})$.  
  But $\langle v , v \rangle = |v|^2 \langle 1,1\rangle = |v|^2\tr(1) = |v|^2 2^{(n+1)/2} > 0$, so $Z_n$ is reflection-positive.
  When $n$ is even, $\bigl(\int_{\Spins}Z_n\bigr)(S^1) \cong \bC^{\oplus 2}$ with its positive-definite Hermitian form, where the first copy of $\bC$ comes from ``a boson on $S^1$ with its trivial spin structure'' and the second from ``a fermion on $S^1$ with its nontrivial spin structure.''
  
  When $n$ is even, $\Cliff(n)$ also admits a symmetric Frobenius structure, and so defines a Hermitian non-spin super field theory $Z'_n$.  We can mimic \cref{remark.reflection positive for complex} and integrate over $\Spec(\SuperVect_\bC)$.  The corresponding Hilbert space $\bigl(\int_{\Spec(\SuperVect_\bC)}Z'_n\bigr)(S^1)$ is again a copy of $\bC^{\oplus 2}$, but this time with the indefinite Hermitian inner product.
\end{example}

\section{Extension to higher categories} \label{section.general nonsense}

This last section explains how to extend the ideas in this paper to the higher-categorical setting championed by \cite{Lur09}.
We will assume familiarity with $(\infty,n)$-categories and give only an outline of the necessary constructions.  Following the by-now standard notation in the $\infty$-categorical literature, we let $\Spaces$ denote the $\infty$-category of  topological spaces.
For the remainder of this paper, let  $\Man_d$ denote the $(\infty,1)$-category coming from the topological category of $d$-dimensional smooth manifolds and local diffeomorphisms.  Given an $(\infty,1)$-category $\sX$ with  limits, a \define{topological local structure on $d$-dimensional manifolds valued in $\sX$} is a sheaf $\cG: \Man_d \to \sX$. 
We will not specify precisely the meaning of ``sheaf''; one version is spelled out in \cite{AyalaThesis}.  
(Although the paper \cite{AyalaThesis} begins with ``geometric'' local structures, its Main Theorem asserts that the cobordism category it constructs from a geometric local structure $\cF$ depends only on the corresponding topological local structure $\tau \cF$.)
  We will care most about the case when $\sX$ is an $\infty$-topos, for example the $\infty$-topos of sheaves of spaces on a site like $\Sch_\bR$ or $\CatAffSch_\bR$. 

Generalizing \cref{set theory cob hyp,gpoid cob hyp}, the following standard fact follows from the existence of good open covers together with the homotopy equivalence $\rO(d) \simeq \hom_{\Man_d}(\bR^d,\bR^d)$, c.f.\ \cite{AF2012}:

\begin{lemma} \label{cobord hyp for sheaves}
  The $(\infty,1)$-category of $\sX$-valued topological local structures on $d$-dimensional manifolds is equivalent to the $(\infty,1)$-category $\sX^{\rO(d)}$ of $\sX$-objects equipped with an action by the topological group $\rO(d)$, the equivalence being given by sending a sheaf $\cG : \Man_d \to \sX$ to $\cG(\bR^d) \in \sX$. \qedhere
\end{lemma}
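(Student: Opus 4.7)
The plan is to follow the standard two-step reduction: first cut $\Man_d$ down to a single generator $\bR^d$ by exploiting good open covers, then read off the endomorphism $E_1$-space.

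First I would argue that the restriction functor
$$ \cG \longmapsto \cG|_{\{\bR^d\}} : \mathrm{Sh}(\Man_d;\sX) \longrightarrow \mathrm{PSh}(B\hom_{\Man_d}(\bR^d,\bR^d); \sX) $$
is an equivalence. For this one uses two inputs. (a) Every $d$-manifold $M$ admits a good open cover $\{U_i \to M\}$ in which each $U_i$ and each finite intersection $U_{i_0} \cap \cdots \cap U_{i_k}$ is diffeomorphic to a disjoint union of copies of $\bR^d$; so by Čech descent the value $\cG(M)$ is determined by the values of $\cG$ on disjoint unions of $\bR^d$'s together with the structure maps between them. (b) The cover of $\bigsqcup_i \bR^d$ by its connected components, combined with descent, forces $\cG(\bigsqcup_i \bR^d) \simeq \prod_i \cG(\bR^d)$; so the category of sheaves on disjoint unions of Euclidean balls is equivalent to the category of sheaves (equivalently, presheaves, since there is only one non-empty object up to isomorphism) on the full sub-$\infty$-category spanned by $\bR^d$ alone.

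Next I would identify the endomorphism $E_1$-space. By definition $\hom_{\Man_d}(\bR^d,\bR^d)$ is the space of open embeddings $\bR^d \hookrightarrow \bR^d$ (since $\bR^d$ is simply connected, every local diffeomorphism with source $\bR^d$ is an open embedding). The map $f \mapsto df_0$ is a homotopy equivalence onto $\mathrm{GL}(d)$, via the explicit deformation $f_t(x) = f(tx)/t$ for $t\in(0,1]$, extended to $f_0 = df_0$; composing with polar decomposition gives a homotopy equivalence $\hom_{\Man_d}(\bR^d,\bR^d) \simeq \rO(d)$. This equivalence is moreover compatible with composition, so it is an equivalence of $E_1$-spaces; and since every open embedding is isotopic to an element of $\rO(d)$, which is a group, the $E_1$-monoid is group-like. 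Therefore $B\hom_{\Man_d}(\bR^d,\bR^d)\simeq B\rO(d)$, and presheaves of $\sX$-objects on this one-object $\infty$-groupoid are by definition $\sX$-objects with an $\rO(d)$-action.

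The main technical obstacle is the first step: spelling out in the chosen $\infty$-categorical model of sheaves (e.g.\ the one in \cite{AyalaThesis}) that a sheaf on $\Man_d$ is genuinely recovered from its restriction to the ``Weiss-like'' subcategory of Euclidean patches, and that the descent data for an arbitrary open cover is implied by descent for good covers plus disjoint-union preservation. Once that reduction is in place, the identification of the endomorphism space with $\rO(d)$ and the passage to $\sX^{\rO(d)}$ are formal.
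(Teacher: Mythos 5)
Your overall approach---reduce via good open covers to the full subcategory generated by $\bR^d$, then identify the endomorphism $E_1$-space with $\rO(d)$---is exactly the reduction the paper has in mind; the paper gives no proof of its own beyond a one-sentence pointer to good covers and the equivalence $\hom_{\Man_d}(\bR^d,\bR^d)\simeq\rO(d)$, citing Ayala--Francis. So your outline is faithful to the intended argument. Two small corrections to the second step, though.

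First, the parenthetical claim that ``since $\bR^d$ is simply connected, every local diffeomorphism with source $\bR^d$ is an open embedding'' is false for $d\geq 2$: for instance $(x,y)\mapsto(e^x\cos y,e^x\sin y)$ is a local diffeomorphism $\bR^2\to\bR^2$ with everywhere-nonzero Jacobian, but it is $2\pi$-periodic in $y$, hence not injective. (Simple connectivity of the \emph{target} together with properness would give a covering map and hence a diffeomorphism, but a general local diffeomorphism out of $\bR^d$ is neither proper nor injective.) Fortunately your argument never actually uses injectivity; the deformation you describe works equally well for non-injective local diffeomorphisms, so you should simply drop this claim rather than rely on it.

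Second, the explicit homotopy $f_t(x)=f(tx)/t$ does not converge as $t\to 0$ unless $f(0)=0$: if $f(0)=c\neq 0$ the term $c/t$ blows up. The standard fix is to first retract onto the subspace of origin-preserving local diffeomorphisms by composing with the translation $x\mapsto x-tf(0)$ (translations form a contractible group), and then apply $(f(tx)-f(0))/t\to df_0$. With that correction the retraction onto $\mathrm{GL}(d)$, and hence onto $\rO(d)$ by Gram--Schmidt, goes through and is manifestly compatible with composition, as you assert.

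The first step---Čech descent from good covers plus disjoint-union preservation---is the part that genuinely requires care in whichever $\infty$-categorical model of sheaves one adopts, and you rightly flag it as the main technical content. That is precisely what the paper offloads to the reference.
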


 The sheaf corresponding to an object $X \in \sX^{\rO(d)}$ can be constructed as follows.  Given any $d$-manifold $M \in \Man_d$,
  define $X(M) = \maps_{\rO(d)}(\mathrm{Fr}_M,X) \in \sX$, where
    $\mathrm{Fr}_M \to M$ denotes the frame bundle,
    $\maps_{\rO(d)}$ denotes $\rO(d)$-equivariant maps, and $\sX$ is tensored over $\Spaces$ since it is an $\infty$-topos.  Then $X(\bR^d) \simeq X$ in $\sX^{\rO(d)}$.
  A special case 
   is when $X \in \sX$ is equipped with the trivial $\rO(d)$ action.  Then $X(M) \simeq \maps(M,X)$ is the \define{classical topological sigma-model with target $X$}.

\begin{example}\label{tangential structure}
  Let $G \to \rO(d)$ be a map of topological groups.  A \define{$G$-tangential structure} on a $d$-dimensional manifold $M$ is a $G$-principal bundle $P \to M$ with an equivalence $P \times_G \rO(d) \simeq \mathrm{Fr}_M$ of $\rO(d)$-bundles.  The sheaf $\Man_d \to \Spaces$ of $G$-tangential structures is classified by the quotient $\rO(d) / G$ with its natural $\rO(d)$-action.
\end{example}

We now explain how to build an $(\infty,d)$-category $\Bord_d^\cG = \Bord_{0,\dots,d}^\cG$ of ``$\cG$-structured bordisms'' for each topological local structure $\cG$.  For a suitable target $\sV$, a \define{fully-extended $\cG$-structured quantum field theory} will be a symmetric monoidal functor $\Bord_d^\cG \to \sV$; the precise statement is in \cref{defn.G-structured theories}.  We let $\Bord_d$ denote the ``unstructured'' bordism category whose construction is thoroughly outlined in \cite{Lur09}, and for which all details have been provided by \cite{ClaudiaDamien1}.  We will not review the construction of $\Bord_d$ itself.

Let $\sY$ be an $(\infty,1)$-category with finite limits; for example, $\sY = \sX$ an $\infty$-topos.  For each $d$, the paper \cite{Haugseng-Spans} builds from $\sY$ a symmetric monoidal $(\infty,d)$-category $\Spans_d(\sY)$.  (The case when $\sY = \Spaces$ is outlined, under the name $\cat{Fam}_d$, in \cite{Lur09}.)  
The objects of $\Spans_d(\sY)$ are are those of $\sY$, but a 1-morphism from $X$ to $Y$ in $\Spans_d(\sY)$ is a \define{span} $X \leftarrow A \rightarrow Y$ in $\sY$, and higher morphisms are spans-between-spans.
Following \cite{Haugseng-Spans}, we call symmetric monoidal functors $\Bord_d \to \Spans_d(\sY)$ \define{classical} (unstructured, fully extended) field theories valued in $\sY$.

Every $\sY$-valued topological local structure $\cG$ determines a classical field theory $\widetilde\cG$ (and the celebrated Cobordism Hypothesis of \cite{Lur09} implies that all classical field theories arise from topological local structures).  Indeed, given a $k$-dimensional manifold $M$ for $k\leq d$, set $\widetilde\cG(M) = \cG(M \times \bR^{d-k})$; if $M$ has boundary, first glue on a ``collar'' $M \leadsto M \cup_{\partial M} (\partial M \times \bR_{\geq 0})$.  Then if $M$ is a cobordism from $N_1$ to $N_2$, the restriction maps $\cG(M) \to \cG(N_1)$ and $\cG(M) \to \cG(N_2)$ make $\cG(M)$ into a span, and
functoriality for the assignment $\widetilde\cG : \Bord_d \to \Spans_d(\sY)$ follows from the sheaf axiom for~$\cG$.

\begin{remark}
  In the model of $\Bord_d$ from \cite{ClaudiaDamien1}, $k$-morphisms are not  $k$-dimensional manifolds, but rather  $d$-dimensional manifolds properly submersed over $\bR^{d-k}$.  When using that model, one can directly define  $\widetilde\cG : \Bord_d \to \Spans_d(\sY)$ simply as $\widetilde\cG(M) = \cG(M)$.
\end{remark}

Let $\pt\in \sY$ denote the terminal object and $\sY_{\pt/}$ the ``undercategory'' of \define{pointed objects} $\pt \to X$ in $\sY$. 
The logic of \cite{Lur09} is to construct $\Bord_d^\cG$ for $\cG$ a $\Spaces$-valued topological local structure and then observe that there is a pullback square of symmetric monoidal $(\infty,d)$-categories, where the vertical arrows are the obvious forgetful functors:
$$ \begin{tikzpicture}
   \path node (UL) {$\Bord_d^\cG$} +(3.5,0) node (UR) {$\Spans_d(\Spaces_{\pt/})$} +(0,-1.5) node (LL) {$\Bord_d$} +(3.5,-1.5) node (LR) {$\Spans_d(\Spaces)$}
     +(.5,-.5) node {$\ulcorner$};
   \draw[->] (UL) -- (UR); \draw[->] (UL) -- node[auto,swap] {\scriptsize {Forget} $\cG$} (LL); \draw[->] (LL) -- node[auto] {$\scriptstyle \cG$} (LR); \draw[->] (UR) -- node[auto] {\scriptsize {Forget} the pointing} (LR);
  \end{tikzpicture}$$
Indeed, a $\cG$-structured manifold $M$ is nothing but a manifold $M$ together with a pointing of the space $\cG(M)$.
We will reverse the logic and interpret the above pullback square as the definition of $\Bord_d^\cG$.  Some care must be taken when replacing $\Spaces$ by an $\infty$-topos $\sX$, as in general very few objects $X \in \sX$ admit ``global'' points $\pt \to X$.  The correct approach is to work with symmetric monoidal $(\infty,d)$-categories ``internal to $\sX$''; for the definition, see \cite{%LurieGoodwillie,
Haugseng-Spans,Li-Bland2015}.

\begin{definition}\label{defn.G-structured theories}
  Let $\sX$ be an $\infty$-topos.  By \cite[Theorem 4.3]{Li-Bland2015}, the symmetric monoidal $(\infty,d)$-category $\Spans_d(\sX)$ constructed in \cite{Haugseng-Spans} underlies an internal symmetric monoidal $(\infty,d)$-category in $\sX$, which in an abuse of notation we will also call $\Spans_d(\sX)$; the same argument implies also that $\Spans_d(\sX_{\pt/})$ is naturally an internal symmetric monoidal $(\infty,d)$-category in $\sX$.  Via the unique topos map $\Spaces \to \sX$, also view $\Bord_d$ as an internal symmetric monoidal $(\infty,d)$-category in $\sX$.
  
  Let $\cG : \Man_d \to \sX$ be an $\sX$-valued topological local structure and $\widetilde\cG : \Bord_d \to \Spans_d(\sX)$ the corresponding classical field theory.  It extends canonically to a functor of internal symmetric monoidal $(\infty,d)$-categories.  The $(\infty,d)$-category $\Bord_d^\cG$ of \define{$\cG$-structured bordisms} is by definition the following pullback of {internal} symmetric monoidal $(\infty,d)$-categories:
  $$ \begin{tikzpicture}
   \path node (UL) {$\Bord_d^\cG$} +(3.5,0) node (UR) {$\Spans_d(\sX_{\pt/})$} +(0,-1.5) node (LL) {$\Bord_d$} +(3.5,-1.5) node (LR) {$\Spans_d(\sX)$}
     +(.5,-.5) node {$\ulcorner$};
   \draw[->] (UL) -- (UR); \draw[->] (UL) -- (LL); \draw[->] (LL) -- node[auto] {$\scriptstyle \cG$} (LR); \draw[->] (UR) -- node[auto] {\scriptsize {Forget} the pointing} (LR);
  \end{tikzpicture}$$
  
  Let $\sV$ be a symmetric monoidal $(\infty,d)$-category internal to $\sX$.  A \define{$\cG$-structured field theory valued in $\sV$} is a functor $\Bord_d^\cG \to \sV$ of internal symmetric monoidal $(\infty,d)$-categories.
\end{definition}

The need to work with internal categories in \cref{defn.G-structured theories} is in some sense unavoidable --- ``functors internal to $\sX$'' is the appropriate language with which to impose that a functor be ``smooth'' for families parameterized by objects of $\sX$.  But one can also describe $\cG$-structured field theories ``externally'' in terms of the lifting problems in \cref{schemes structured field theory,bicategorical description of G-structures}.
Given an $\infty$-topos $\sX$ and a symmetric monoidal $(\infty,d)$-category $\sV$ internal to $\sX$, the papers \cite{Haugseng-Spans,Li-Bland2015} construct a symmetric monoidal $(\infty,d)$-category $\Spans_d(\sX;\sV)$ whose $k$-morphisms are ``bundles of $k$-morphisms in $\sV$ over $k$-fold spans in $\sX$.''     Such a notion makes sense exactly because $\sV$ is  internal to $\sX$: by definition, a \define{bundle of $k$-morphisms in $\sV$ over $X\in \sX$} is a map from $X$ to the $\sX$-object of $k$-morphisms in $\sV$.  After unpacking adjunctions, one finds:

\begin{proposition}\label{bundle result}
  Let $\sX$ be an $\infty$-topos, $\cG : \Man_d \to \sX$ a topological local structure, and $\sV$ a symmetric monoidal $(\infty,d)$-category internal to $\sX$.  Then the data of a $\cG$-structured field theory $\Bord_d^\cG \to \sV$ is the same as the data of a lift:
  
  \mbox{}\hfill $ \begin{tikzpicture}[baseline=(LR.base)]
   \path coordinate (UL) +(3,0) node (UR) {$\Spans_d(\sX;\sV)$} +(0,-1.5) node (LL) {$\Bord_d$} +(3,-1.5) node (LR) {$\Spans_d(\sX)$}
    ;
   \draw[->] (LL) -- node[auto] {$\scriptstyle \widetilde\cG$} (LR); \draw[->] (UR) -- node[auto] {\scriptsize {Forget} the $\sV$-data} (LR);
   \draw[->,dashed] (LL) -- (UR);
  \end{tikzpicture} $ \qedhere
\end{proposition}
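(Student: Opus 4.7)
The plan is to unpack both sides of the claimed equivalence using the explicit description of $\Spans_d(\sX;\sV)$ from \cite{Haugseng-Spans,Li-Bland2015} together with the pullback defining $\Bord_d^\cG$. First I would recall the concrete structure of $\Spans_d(\sX;\sV)$: a $k$-morphism whose underlying $k$-fold span in $\sX$ has apex $X$ is a map $X \to [\sV]_k$ in $\sX$, where $[\sV]_k$ denotes the $\sX$-object of $k$-cells of the internal $(\infty,d)$-category $\sV$; composition combines iterated pullback of spans in $\sX$ with the internal composition law of $\sV$, and the forgetful functor to $\Spans_d(\sX)$ just discards the map to $[\sV]_k$. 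With this in hand, a lift of $\widetilde\cG$ through the forgetful functor is exactly a coherent family of maps $\cG(\Sigma) \to [\sV]_k$ in $\sX$, one for each $k$-morphism $\Sigma$ of $\Bord_d$, compatible with the span data coming from restriction to boundaries and with the composition law of $\sV$.

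Next I would compute the $\sX$-object of $k$-cells of $\Bord_d^\cG$ from the pullback square in \cref{defn.G-structured theories}. The $\sX$-object of $k$-cells of $\Spans_d(\sX_{\pt/})$ sitting over a $k$-fold span $S$ in $\sX$ is the space of pointings of $S$, and $\widetilde\cG$ sends a $k$-morphism $\Sigma$ of $\Bord_d$ to a span whose apex is $\cG(\Sigma)$. Hence the pullback at level $k$ is the coproduct $\coprod_\Sigma \cG(\Sigma)$ in $\sX$, indexed by the space of $k$-morphisms of $\Bord_d$. An internal functor $\Bord_d^\cG \to \sV$ is then determined at level $k$ by a map $\coprod_\Sigma \cG(\Sigma) \to [\sV]_k$ in $\sX$, which by the universal property of the coproduct is precisely a family of maps $\cG(\Sigma) \to [\sV]_k$ natural in $\Sigma$ --- the same data already identified on the lift side. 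Matching these two presentations, and observing that the compatibilities with higher structural maps are governed by the internal composition law of $\sV$ in both cases, yields the equivalence.

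The main obstacle will be verifying that the coherence data on the two sides correspond, since this requires working in an explicit model (e.g.\ the complete $d$-fold Segal objects used in \cite{Haugseng-Spans,Li-Bland2015}) for $(\infty,d)$-categories internal to $\sX$ and tracking Segal-type structural maps cell-by-cell. This bookkeeping is largely mechanical but notationally dense, and the real content of ``after unpacking adjunctions'' in the statement is that once the right formalism is in place, no new ideas are needed beyond the coproduct decomposition of $[\Bord_d^\cG]_k$ computed above and the tautological universal property of $\Spans_d(\sX;\sV)$ as the $(\infty,d)$-category classifying bundles of $\sV$-cells over spans in $\sX$.
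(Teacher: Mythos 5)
Your proposal is correct and follows the route the paper gestures at with ``after unpacking adjunctions'': you identify the levelwise coproduct/colimit description of $[\Bord_d^\cG]_k$ coming from the defining pullback (using that the apex of an iterated span dominates the rest, so pointings of $\widetilde\cG(\Sigma)$ are just maps to $\cG(\Sigma)$), and observe that maps out of this colimit to $[\sV]_k$ and lifts of $\widetilde\cG$ through $\Spans_d(\sX;\sV)\to\Spans_d(\sX)$ both unwind to the same coherent family of bundles $\cG(\Sigma)\to[\sV]_k$. The paper itself leaves this as an exercise in adjunction-chasing, so your sketch is a faithful filling-in of the intended argument rather than a different one.
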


\begin{corollary} \label{G-structured cobordism hypothesis}
  Let $\sX$ be an $\infty$-topos and $\sV$ an internal-to-$\sX$ symmetric monoidal $(\infty,d)$-category with duals in the sense of \cite{Haugseng-Spans}.  Let $\cG : \Man_d \to \sX$ be a topological local structure, and $\cG(\pt) = \cG(\bR^d)$ the corresponding object in $\sX^{\rO(d)}$.  Assuming the Cobordism Hypothesis, $\cG$-structured field theories valued in $\sV$ are classified by $\rO(d)$-equivariant bundles of $\sV$-objects over $\cG(\pt)$. \qedhere
\end{corollary}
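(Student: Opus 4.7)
The plan is to combine \cref{bundle result} with the Cobordism Hypothesis applied to both $\Spans_d(\sX)$ and $\Spans_d(\sX;\sV)$, and then to identify the fully dualizable objects in the latter over a fixed base with $\sV$-bundles.

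First I would invoke \cref{bundle result} to replace a $\cG$-structured field theory with the equivalent data of a symmetric monoidal functor $\Bord_d \to \Spans_d(\sX;\sV)$ whose composition with the forgetful functor $\Spans_d(\sX;\sV) \to \Spans_d(\sX)$ equals $\widetilde\cG$. Applying the Cobordism Hypothesis to $\Spans_d(\sX;\sV)$, such functors are classified by $\rO(d)$-equivariant fully dualizable objects of $\Spans_d(\sX;\sV)$; applying the same Hypothesis to $\Spans_d(\sX)$ (all of whose objects are already fully dualizable since $\Spans_d(\sX)$ has duals for formal reasons), the functor $\widetilde\cG$ corresponds to the $\rO(d)$-equivariant object $\cG(\pt) \in \sX$ singled out by \cref{cobord hyp for sheaves}. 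Naturality of the Cobordism Hypothesis in symmetric monoidal functors then identifies lifts of $\widetilde\cG$ with $\rO(d)$-equivariant fully dualizable objects of $\Spans_d(\sX;\sV)$ lying over $\cG(\pt) \in \sX^{\rO(d)}$.

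It remains to unpack what a fully dualizable object of $\Spans_d(\sX;\sV)$ lying over $X \in \sX$ is. An object of $\Spans_d(\sX;\sV)$ over $X$ is by definition a bundle of $\sV$-objects over $X$, i.e.\ a map from $X$ into the $\sX$-object of objects of $\sV$. Dualizability in $\Spans_d(\sX;\sV)$ is tested fiberwise against dualizability in $\sV$, using the fact that the spans in $\sX$ supply the unit and counit data freely (this is the same input that makes $\Spans_d(\sX)$ have duals for all its objects). Since by hypothesis $\sV$ has all duals, every $\sV$-bundle over $X$ is fully dualizable in $\Spans_d(\sX;\sV)$. Combining the two steps, lifts of $\widetilde\cG$ correspond to $\rO(d)$-equivariant bundles of $\sV$-objects over $\cG(\pt)$, as claimed.

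The main obstacle is the last step: formulating and proving fiberwise detection of dualizability for $\Spans_d(\sX;\sV)$ in the internal-category sense of \cite{Haugseng-Spans,Li-Bland2015}, and checking that the Cobordism Hypothesis may legitimately be applied internally to $\sX$ (not merely to the underlying $(\infty,d)$-category of global sections). Both steps are essentially bookkeeping given the results cited in \cref{defn.G-structured theories,bundle result}, but the internal-categorical formalism is where care is required; the remainder of the argument is a direct application of the Cobordism Hypothesis to the pullback square defining $\Bord_d^\cG$.
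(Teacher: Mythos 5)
Your proposal is correct and matches the route the paper intends: the corollary carries a \verb|\qedhere| in its statement because the author regards it as immediate from \cref{bundle result} together with the (lifting, or ``fibered'') form of the Cobordism Hypothesis, which is exactly the two-instance-plus-naturality argument you spell out. The only point worth flagging is terminological: the assertion that ``dualizability in $\Spans_d(\sX;\sV)$ is tested fiberwise'' is not something the paper proves; it is built into the hypothesis ``$\sV$ has duals in the sense of \cite{Haugseng-Spans},'' which is precisely what makes $\Spans_d(\sX;\sV)$ inherit duals from $\Spans_d(\sX)$ so that the relevant space of fully dualizable objects lying over $\cG(\pt)$ is the full space of $\sV$-bundles over $\cG(\pt)$. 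You correctly identify that the residual care is in the internal-categorical bookkeeping (applying the Cobordism Hypothesis internally to $\sX$ rather than only to global sections), which is also left implicit in the paper.
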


We conclude by extending the examples from this paper.  Note that under \cref{cobord hyp for sheaves}, the sheaves $\Or$ and $\Spins$ of orientations and spin structures correspond, respectively, to the actions of $\rO(d)$ on the $0$- and $1$-truncations $\pi_{\leq 0}\rO(\infty)$ and $\pi_{\leq 1}\rO(\infty)$, or equivalently to the trivial torsors for these groups.  Any $\infty$-topos $\sX$ admits a notion of ``torsor'' for topological groups: $X\in \sX^G$ is a \define{$G$-torsor} if the map $G \times X \to X \times X$, $(g,x) \mapsto (gx,x)$ is an equivalence.  An $\sX$-valued topological local structure $\cG : \Man_d\to \sX$ is \define{locally $\Or$} (resp.\ \define{locally $\Spins$}) if $\cG(\bR^d)$ is a torsor for $\pi_{\leq 0}\rO(\infty)$ (resp.\ $\pi_{\leq 1}\rO(\infty)$).  Suppose $\sX$ is the $\infty$-topos of sheaves (valued in $\Spaces$) on some site (with some subcanonical topology) containing the category $\cat{AffSch}_\bR$ of affine schemes over $\bR$.  Then there is a canonical $\sX$-valued topological local structure $\Her: \Man_d \to \sX$ whose value on $\bR^d$ is $\Spec(\bC)$.  If $\sX$ is the $\infty$-topos of sheaves on some site containing $\CatAffSch_\bR$, then similarly there is a canonical topological local structure $\SpinStats : \bR^d \mapsto \Spec(\SuperVect_\bC)$.

For  $\sV$ a suitable target symmetric monoidal $(\infty,d)$-category internal to $\sX$, we can then define \define{Hermitian} and \define{Hermitian spin-statistics} field theories as being $\Her$- and $\SpinStats$-structured field theories in the sense of \cref{defn.G-structured theories}.  Note that the details of the $\infty$-topos $\sX$ are largely irrelevant: given \cref{bundle result}, what matters for Hermitian and spin-statistics field theories are the symmetric monoidal $(\infty,d)$-categories of $X$-points of $\sV$ for $X$ ranging over the possible values $\Spec(\bR), \Spec(\bC), \Spec(\SuperVect_\bC), \dots$ of $\Her$ and $\SpinStats$. 

 One standard criterion for deciding whether a proposed target $\sV$ is suitable is that ``near the top'' $\sV$ should look like $\Vect$.  More precisely, any symmetric monoidal $(\infty,d)$-category $\sV$ determines a symmetric monoidal $(\infty,1)$-category $\Omega^{d-1}\sV$ of endomorphisms of the identity $(d-2)$-morphism on the identity $(d-3)$-morphism on \dots on the unit object in $\sV$.  The passage $\sV \mapsto \Omega^{d-1}\sV$ makes sense also for internal categories.  The ``looks like $\Vect$ near the top'' criterion then says that for $R$ a commutative $\bR$-algebra, the $\Spec(R)$-points of $\Omega^{d-1}\sV$ should be $\Mod_R$, and that for $\cC$ a categorified commutative $\bR$-algebra, the $\Spec(\cC)$-points of $\Omega^{d-1}\sV$ should be $\cC$ itself.
 This assures, for example, that if $Z : \Bord_d^{\Her} \to \sV$ is a fully-extended Hermitian field theory, then its restriction $Z|_{\Bord_{d-1,d}^{\Her}} : \Bord_{d-1,d}^{\Her} = \Omega^{d-1}\Bord_d \to \Omega^{d-1}\sV = \Vect$ unpacks to a Hermitian unextended field theory in the sense of \cref{schemes structured field theory}.

An extended field theory $Z : \Bord_d \to \sV$ is \define{reflection-positive} if the unextended field theory $Z|_{\Bord_{d-1,d}}$ is reflection-positive in the sense of \cref{defn.reflection positive}.  
(This is different from the notion in \cite{FreedHopkins} of ``reflection-positivity'' for extended field theories, which requires extra ``positivity'' data to be specified in high codimension.)
  \Cref{defn.reflection positive for oriented,defn.reflection positive for spin} then apply to extended Hermitian and spin-statistics field theories.
  
One could worry that restricting a field theory just to its top part is too much loss of information.  The following observation is due to Chris Schommer-Pries:

\begin{lemma}\label{lemma.nonzero}
  Let $\sV$ be be some symmetric monoidal $(\infty,d)$-category with a zero object, and $Z : \Bord_d^\cG \to \sV$ be a $\cG$-structured extended field theory for some topological local structure $\cG$.  Suppose that the unextended field theory $Z|_{\Bord_{d-1,d}^\cG}$ is \define{zero} in the sense that it vanishes on all non-empty inputs.
(Symmetric monoidality forces $Z(\emptyset)$ to be the unit object of $\Omega^{d-1}\sV$.)  Then $Z$ is zero.
  \end{lemma}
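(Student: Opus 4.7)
The strategy is to use the Cobordism Hypothesis (\cref{G-structured cobordism hypothesis}) to reduce the claim to showing $Z(\pt) \simeq 0$ in $\sV$; granted this, the Cobordism Hypothesis identifies $Z$ with the ``zero field theory'' $Z_0 : \Bord_d^\cG \to \sV$ defined by $Z_0(\emptyset) = \unit_\sV$ and $Z_0(K) = 0$ on every non-empty input. This $Z_0$ is a well-defined symmetric monoidal functor because the zero object absorbs the monoidal product: $0 \otimes X \simeq 0$ for every $X$, as follows from $0$ being both initial and terminal.

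To establish $Z(\pt) \simeq 0$, I would first construct a canonical natural transformation $\eta : Z \to Z_0$ of $\cG$-structured field theories. By the Cobordism Hypothesis, such transformations correspond to $\rO(d)$-equivariant morphisms at the point between the underlying $\sV$-objects; the unique morphism $Z(\pt) \to 0$ (existing by terminality, automatically equivariant because the target is a zero object) defines $\eta$. By the hypothesis of the lemma, both $Z$ and $Z_0$ vanish on all non-empty inputs of $\Bord_{d-1,d}^\cG$, so $\eta$ restricts to an equivalence on the top two levels.

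The core step is to show that this ``equivalence on top'' forces $\eta_\pt$ to be an equivalence in $\sV$. For this, note that for every closed non-empty $(d-1)$-manifold $N$ (with $\cG$-structure), the product $\pt \times N$ is a non-empty closed $(d-1)$-manifold, and so the hypothesis gives $Z(\pt \times N) \simeq 0$. Since $\pt \times N$ arises in $\Bord_d^\cG$ by composing the $\cG$-structured point with the cobordism $N$, the value $Z(\pt \times N)$ realizes a categorical trace of $Z(\pt)$ built from its iterated duality data in $\sV$. The vanishing of all such traces then provides enough constraints on the fully-dualizable structure of $Z(\pt)$ to pin it down as a zero object.

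The main obstacle is precisely this last implication, which amounts to a conservativity statement for the restriction from extended to unextended field theories: if a fully-dualizable object $X \in \sV$ has all of its top-level invariants (indexed by non-empty closed $(d-1)$-manifolds) equivalent to $0$, then $X \simeq 0$. A clean way to proceed is to argue that $\id_{Z(\pt)}$ must factor through a zero object: the image of $\id_{Z(\pt)}$ under appropriate closing-up operations is visible in the top-level data, where it vanishes by hypothesis, and the absorbing property of $0$ then forces $\id_{Z(\pt)}$ itself to be null, hence $Z(\pt) \simeq 0$.
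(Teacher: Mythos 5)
Your proposal takes a genuinely different route from the paper, but the route has a gap precisely at the crucial step. The paper's argument is short, elementary, and does not invoke the Cobordism Hypothesis: it observes that a $k$-morphism $F$ is zero if and only if $\id_F$ is, so it suffices to show $Z(N \times [0,1]) = 0$ for every $\cG$-structured $(d-1)$-dimensional cobordism $N$. The point is that $N \times [0,1]$ factors, as a $d$-dimensional cobordism, through $N \sqcup S^{d-1}$ by ``bubbling off'' a disk $D^d$, where $S^{d-1}$ carries the $\cG$-structure extending over $D^d$. Since $S^{d-1}$ is a closed non-empty $(d-1)$-manifold, the hypothesis gives $Z(S^{d-1}) = 0$, hence $Z(N \sqcup S^{d-1}) \cong Z(N) \otimes 0 = 0$, and since only a zero morphism factors through a zero object, $Z(N \times [0,1]) = 0$. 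This single factorization propagates down through all codimensions by the iterated-identity observation.

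Your approach, by contrast, reduces to $Z(\pt) \simeq 0$ via the Cobordism Hypothesis and then tries to argue that vanishing of all ``top-level invariants'' forces $Z(\pt) \simeq 0$. This is where the gap is: the statement you call ``a conservativity statement for the restriction from extended to unextended field theories'' \emph{is} the content of the lemma, and you do not actually prove it. The final paragraph --- ``the image of $\id_{Z(\pt)}$ under appropriate closing-up operations is visible in the top-level data, where it vanishes by hypothesis, and the absorbing property of $0$ then forces $\id_{Z(\pt)}$ itself to be null'' --- gestures at the right kind of mechanism, but you never identify which closing-up operation does the work or why it produces a factorization of an identity morphism through a zero object. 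The missing ingredient is exactly the sphere-factorization of $N \times [0,1]$ (or its iterated version for $\pt \times [0,1]^d$). Without that concrete geometric input, nothing about full dualizability or $\rO(d)$-equivariance alone prevents a nonzero $Z(\pt)$ from having all its closed $(d-1)$-dimensional invariants vanish. An additional cost of your route is that it makes the lemma conditional on the Cobordism Hypothesis, and requires extra care about whether natural transformations between nonzero $(\infty,d)$-functors in the needed (op)lax sense even exist and behave as claimed; the paper's argument avoids both issues and works for any symmetric monoidal functor out of $\Bord_d^\cG$.
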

\begin{proof}
  A $k$-morphism $F$ is zero if and only if its identity $(k+1)$-morphism $\id_F$ is zero.  It therefore suffices to show that for $N$ an arbitrary $\cG$-structured $(d-1)$-dimensional cobordism, $Z(N \times [0,1]) : Z(N) \to Z(N)$ is the zero $d$-morphism.  But the $\cG$-structured cobordism $N\times [0,1]$ can be factored through $N \sqcup S^{d-1}$ where the sphere $S^{d-1}$ is given the $\cG$-structure that extends to the disk $D^d$:
  $$ 
  \begin{tikzpicture}[baseline=(middle)]
    \draw[thick] (0,0) .. controls +(1,0) and +(-1,0) .. (2,2);
    \draw[thick] (0,4) .. controls +(1,0) and +(-1,0) .. node[auto] {$N$} (2,6);
    \path (1,3) node (middle) {$\id_N$};
    \draw (0,0) -- (0,4); \draw (2,2) -- (2,6);
  \end{tikzpicture} 
  \quad = \quad
  \begin{tikzpicture}[baseline=(middle)]
    \draw[thick] (0,0) .. controls +(1,0) and +(-1,0) .. (2,2);
    \draw[thick] (0,4) .. controls +(1,0) and +(-1,0) .. (2,6);
    \draw[thick,dashed] (0,3) .. controls +(1,0) and +(-1,0) ..  (2,5);
    \path (0,3) node[anchor=east] {$N$};
    \path (1,3) coordinate (middle);
    \draw (0,0) -- (0,4); 
    \path (3.5,4) node [anchor=west] {$S^{d-1}$};
    \draw (2,2) -- (2,6);
    \draw[fill opacity = .5,fill=white,text opacity=1] (1,1.5) .. controls +(0,1) and +(0,-1) .. (3.5,4) .. controls +(0,1) and +(0,1) ..  node[auto,swap] {$D^d$} (2.5,4) .. controls +(0,-1) and +(0,-1) .. (1,3.5) (1,1.5);
    \draw[thick,dashed] (3,4) ellipse (.5 and .25);
  \end{tikzpicture} 
  $$
  By assumption, $Z(S^{d-1}) = 0$, since $S^{d-1}$ is closed, and so $Z(N \sqcup S^{d-1}) \cong Z(N) \otimes Z(S^{d-1}) = 0$.  Only a zero morphism can factor through a zero object, and so $Z(N \times [0,1]) = 0$.
\end{proof}

Only the zero field theory is compatible with multiple topological local structures.  \Cref{lemma.nonzero} assures that if a $\cG$-structured fully extended field theory $Z$ is not zero, then neither is its restriction $Z|_{\Bord_{d-2,d-1,d}^\cG}$ to a once-extended theory, and so $Z|_{\Bord_{d-2,d-1,d}^\cG}$ detects the local structure $\cG$.  Along with \cref{mainthm}, we conclude:

\begin{corollary}\label{main thm extended}
  Reflection-positive \'etale-locally-spin fully-extended field theories are necessarily unitary and satisfy spin-statistics. \qedhere
\end{corollary}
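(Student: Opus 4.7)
The plan is to reduce \cref{main thm extended} to \cref{mainthm} via the restriction functor from fully-extended to once-extended theories. Given a reflection-positive \'etale-locally-spin fully-extended field theory $Z : \Bord_d^\cG \to \sV$, I form the once-extended restriction $Z' = Z|_{\Bord_{d-2,d-1,d}^\cG}$, which inherits the topological local structure $\cG$ from $Z$.

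First, I would observe that reflection-positivity passes from $Z$ to $Z'$ essentially by definition. The paper declares that an extended field theory is reflection-positive when its codimension-one restriction $Z|_{\Bord_{d-1,d}^\cG}$ is reflection-positive in the sense of \cref{defn.reflection positive for spin}, and this codimension-one restriction factors through $Z'$. Thus $Z'$ is a reflection-positive once-extended \'etale-locally-spin field theory, to which \cref{mainthm} applies directly.

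The key conceptual point is that ``Hermitian'' and ``satisfies spin-statistics'' are not intrinsic dynamical properties of a field theory but rather properties of its topological local structure $\cG$: they select, among the eight \'etale-locally-spin structures classified after \cref{spins versus spinstats}, those in which the $\pi_0 \rO(\infty)$-factor of $\cat{Gal}(\bR) \simeq \pi_{\leq 1}\rO(\infty)$ is intertwined with complex conjugation and the $\pi_1 \rO(\infty)$-factor is intertwined with $(-1)^f$. Since $Z$ and $Z'$ share the same $\cG$, any constraint on $\cG$ imposed by applying \cref{mainthm} to $Z'$ automatically constrains $Z$: in the non-zero case, $\cG$ is one of the (possibly twisted-)Hermitian spin-statistics structures, so $Z$ is Hermitian and satisfies spin-statistics. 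Together with the assumed reflection-positivity, this is unitarity.

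The only loose end — and the reason \cref{lemma.nonzero} was proved just above the corollary — is the possibility that $Z'$ is the zero once-extended theory, for then \cref{mainthm} gives no information about $\cG$. But \cref{lemma.nonzero} says precisely that a zero once-extended restriction forces $Z$ itself to be zero, and the zero theory is vacuously Hermitian, spin-statistics, and reflection-positive for every local structure. The main conceptual obstacle is therefore exactly the one handled by \cref{lemma.nonzero}: showing that information detected in low codimension suffices to pin down the local structure of a fully-extended theory. With that in hand the corollary is essentially bookkeeping about the eight-element classification.
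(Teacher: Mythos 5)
Your proposal is correct and reproduces the paper's own argument, which appears as the paragraph preceding \cref{main thm extended} rather than a formal proof environment: reflection-positivity restricts, \cref{mainthm} applies to the once-extended restriction, \cref{lemma.nonzero} rules out a zero restriction of a nonzero theory, and the conclusion transfers because Hermiticity and spin-statistics are properties of the shared local structure $\cG$. Your added precision about the eight structures and the possible $\bZ/2 \to \rB(\bZ/2)$ twist is a fair reading of exactly what \cref{mainthm} does and does not pin down, and the small slip of citing \cref{lemma.nonzero} as acting on the once-extended rather than the unextended restriction is harmless since the latter factors through the former.
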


\section*{Acknowledgments}

I would like to thank K.\ Costello and D.\ Gaiotto for asking the questions that led to this note, D.\ Freed for emphasizing the importance of positivity in the spin-statistics theorem,  A.\ Chirvasitu and E.\ Elmanto for many discussions about Galois theory and locally presentable categories, and C.\ Schommer-Pries for providing the proof of \cref{lemma.nonzero}.  
I would like particularly to thank the anonymous referee for their considerable engagement with the organization and presentation of the ideas in this paper; their comments and suggestions led to many improvements.
Some of this work occurred while I was a visitor to the Perimeter Institute for Theoretical Physics in Waterloo, Ontario, where I was treated to generous hospitality and a great work environment.  
 Research at Perimeter Institute is supported by the Government of Canada through Industry Canada and by the Province of Ontario through the Ministry of Economic Development \& Innovation.
 This research is also supported by the NSF grant DMS-1304054.

%\bibliography{ReferencesWithLinks}{}
%\bibliographystyle{alpha}

\end{document}